\documentclass[11pt,reqno]{article}

\usepackage{amssymb}
\usepackage{latexsym}
\usepackage{amsmath}

\usepackage{hyperref,cite}

\usepackage{amsthm}
\usepackage{empheq}
\usepackage{bm}
\usepackage[capitalise]{cleveref}
\usepackage{todonotes}
%\usepackage[dvipsnames]{xcolor}

%%%%%%%%%%%%%%%%%%%%%%%% for tables
\usepackage{longtable}
\usepackage{multirow,multicol}
\usepackage{caption}
\usepackage{ltablex}\keepXColumns

\newcolumntype{Y}{>{\centering\arraybackslash}X}
\newcolumntype{Z}{>{\scriptsize}Y}
\usepackage{rotating}
%%%%%%%%%%%%%%%%%%%%%%%%%%%%%%%%

\topmargin-1cm
\textheight 230mm \textwidth 165mm
\oddsidemargin0cm \evensidemargin0cm
\parskip.7ex
\parindent5mm
%%%%%%%%%%%%%%%%%%%%%%%%%%%
\numberwithin{equation}{section}
\theoremstyle{definition}
\newtheorem{theorem}{Theorem}[section]
\newtheorem{corollary}[theorem]{Corollary}
\newtheorem{proposition}[theorem]{Proposition}
\newtheorem{definition}[theorem]{Definition}
\newtheorem{example}[theorem]{Example}
\newtheorem{const}[theorem]{Construction}
\newtheorem{notation}[theorem]{Notation}
\newtheorem{remark}[theorem]{Remark}
\newtheorem{lemma}[theorem]{Lemma}

\newtheorem{conjecture}[theorem]{Conjecture}

%%%%%%%%%%%%%%%%%%%%%%%%%
\newcounter{alp}
\newcounter{ara}
\newcounter{rom}
\newenvironment{romanlist}{\begin{list}{(\roman{rom})\hfill}{\usecounter{rom}
     \topsep0ex \labelwidth1.7em \leftmargin1.7em \labelsep0cm
     \rightmargin0cm \parsep0ex \itemsep.4ex
     \partopsep1ex}}{\end{list}}
\newenvironment{alphalist}{\begin{list}{(\alph{alp})\hfill}{\usecounter{alp}
     \topsep0ex \labelwidth.6cm \leftmargin.6cm \labelsep0cm
     \rightmargin0cm \parsep0ex \itemsep0ex
     \partopsep0ex}}{\end{list}}
\newenvironment{arabiclist}{\begin{list}{(\arabic{ara})\hfill}{\usecounter{ara}
     \topsep0ex \labelwidth1.5em \leftmargin1.5em \labelsep0cm
     \rightmargin0cm \parsep0ex \itemsep.5ex
     \partopsep0ex}}{\end{list}}
%%%%%%%%%%%%%%%%%%%%%%%%%
\newcommand\qbin[3]{\left[\begin{matrix} #1 \\ #2 \end{matrix} \right]_{#3}}
\newcommand\qbinsmall[3]{\Big[\textstyle{\begin{matrix} #1 \\ #2 \end{matrix}} \Big]_{#3}}

\newcommand{\colsp}{\textnormal{colsp}}

\newcommand{\numberset}{\mathbb}
\newcommand{\N}{\numberset{N}}
\newcommand{\Z}{\numberset{Z}}

\newcommand{\R}{\numberset{R}}

\newcommand{\F}{\numberset{F}}

\newcommand{\mV}{\mathcal{V}}
\newcommand{\mU}{\mathcal{U}}

\newcommand{\mL}{\mathcal{L}}

\newcommand{\mC}{\mathcal{C}}
\newcommand{\mA}{\mathcal{A}}
\newcommand{\mI}{\mathcal{I}}

\newcommand{\mT}{\mathcal{T}}
\newcommand{\rk}{\textnormal{rk}}
\newcommand{\srk}{\textnormal{srk}}

\newcommand{\tail}{\textnormal{tail}}
\newcommand{\head}{\textnormal{head}}
\newcommand{\nmax}{n^\ast}
\newcommand{\nmin}{n_\ast}

\newcommand{\mP}{\mathcal{P}}

\newcommand{\subspace}[1]{\mbox{$\langle{#1}\rangle$}}

\newcommand{\GaussianD}[2]{\genfrac{[}{]}{0pt}{0}{#1}{#2}}
\newcommand{\SmallMat}[4]{\mbox{\small{$\begin{pmatrix}{#1}&{\!\!\!#2}\\{#3}&{\!\!\!#4}\end{pmatrix}$}}}
\newcommand{\SmallMatTwo}[2]{\mbox{\small{$\begin{pmatrix}{#1}\\{#2}\end{pmatrix}$}}}
%%%%%%%%%%%%%%%%%%%%%%%%%%%%%%%%%%%%%%%%%%%%%%%%%

\newcommand{\rkL}{\textnormal{rk}_{\mL}}
\newcommand{\muL}{\mbox{$\mu_{\mL}$}}

\newcommand*{\myproofname}{Proof}

\usepackage{authblk}

\usepackage{array}
\newcolumntype{x}[1]{>{\centering\arraybackslash\hspace{0pt}}p{#1}}
\newcolumntype{y}[1]{>{\centering\arraybackslash\hspace{0pt}}m{#1}}

\renewcommand{\to}{\longrightarrow}
\renewcommand{\mapsto}{\longmapsto}

%%%%%%%%%%%%%%%%%%%%%%%%%%%%%%%%%%%%%

%\usepackage{calrsfs}
\usepackage{setspace}
%\setstretch{1.1}

%%%%%%%%%%%%%%%%%%%%%%%%%%%%%for plots
\usepackage{tikz,pgfplots}
\pgfplotsset{compat=newest}

%%%%%%%%%%%%%%%%%%%%%%%%%%%%%%%%%%%%%
%%%%%%%%%%%%%%%%%%%%%%%%%%%%%%%%%%%%%

%%%%%%%%%%%%%%%%%%%%%%%%%%%%%

\begin{document}
	\title{\textbf{Fundamental Properties of Sum-Rank Metric Codes}}

\author[1]{Eimear Byrne}
\author[2]{Heide Gluesing-Luerssen\thanks{H. Gluesing-Luerssen was partially supported by the grant \#422479 from the Simons Foundation.}}
\author[3]{Alberto Ravagnani}

\affil[1]{School of Mathematics and Statistics, University College Dublin, Ireland}
\affil[2]{Department of Mathematics, University of Kentucky, USA}
\affil[3]{Department of Mathematics and Computer Science, Eindhoven University of Technology, the Netherlands}

	\date{}
	\maketitle
	
	\begin{abstract}\label{sec:Abstract}
This paper investigates the theory of sum-rank metric codes for which the individual matrix blocks may have different sizes. Various bounds on 
the cardinality of a code are derived, along with their asymptotic extensions.  
The duality theory of sum-rank metric codes is also explored,
showing that MSRD codes (the sum-rank analogue of MDS codes) dualize to MSRD codes only if all matrix blocks have the 
same number of columns. In the latter case, duality considerations lead to an upper bound on the number of blocks for MSRD codes.
The paper also contains various constructions of sum-rank metric codes for variable block sizes, illustrating the possible behaviours of these objects with respect to bounds, existence, and duality properties. 
	\end{abstract}

\bigskip

\bigskip

\section{Introduction}

It is well known that codes with the rank metric offer a solution to the problem of error amplification in linear
network coding, both in the one-shot and multi-shot regime; see~\cite{silva2008rank,ravagnani2018adversarial,koetter2008coding}. In the latter scenario however, using codes with the
\textit{sum-rank} metric can significantly reduce the size of the network alphabet~\cite{NoUF10}, which is particularly handy in fast-evolving systems.
The sum-rank metric itself occurred already earlier in space-time coding~\cite{ElGH03,LuKu05,ShKsch20}.
Besides applications in multi-shot linear network coding (see also~\cite{MPK19}) and space-time coding,
sum-rank metric codes can also be utilized for distributed storage~\cite{MPK19a}.
Furthermore, convolutional codes endowed with the sum-rank metric have been considered in, for instance,~\cite{MBK16}  in order to address network streaming problems.

The theory of sum-rank metric codes is still in its beginnings and
to date only a few constructions are known.
A general construction of MSRD codes (the analogue of MDS codes) up to a certain length, along with a Welch-Berlekamp sum-rank decoding algorithm, can be found in~\cite{MP18} (see also~\cite{MPK19a}).
These codes are \emph{linearized Reed-Solomon codes}, i.e., a hybrid between Reed-Solomon codes and Gabidulin codes, and their
duals are again linearized Reed-Solomon codes~\cite{MPK19}.
Furthermore, in~\cite{MPK19a}  sum-rank alternant codes are introduced, and sum-rank BCH codes have been introduced in~\cite{MP20}.
Finally, a generic decoding algorithm for sum-rank metric codes is presented in~\cite{PRR20}.
In all the above mentioned results, the sum-rank metric codes considered can be viewed as linear spaces over an extension field, which in particular requires that all matrix blocks have the same number of columns.

\paragraph{Our Contribution.} \ In this paper, we study the fundamental properties of codes in the sum-rank metric
with respect to bounds, their asymptotics, duality, and existence/optimality.
We do not require that 
all blocks have the same number of columns.
In particular, we only assume that the codes are linear over $\F_q$, and so do not necessarily have a representation as a block code over an extension field (cf.\ Definition~\ref{D-SRMC}).
As we will see, removing these restrictions makes the theory of sum-rank metric codes, and in particular of \emph{maximum sum-rank metric codes} (MSRD codes), significantly richer from a mathematical viewpoint.

After setting up the preliminaries, in Section~\ref{sec:bounds} we derive several bounds.
We start by showing that any bound for codes in the Hamming metric yields a bound for sum-rank 
metric codes, which we call the corresponding \textit{induced bound}. 
We then establish additional bounds for the parameters of a sum-rank metric code and 
show that these are in general not comparable with each other, i.e., each bound is the unique best bound 
for a specific class of parameters. 
We also provide examples of codes meeting these bounds. 
While the induced bounds only take into account the maximum number of columns over all blocks,
the additional bounds depend on all block dimensions; this refinement is evident in the sharpness of the latter bounds compared to those induced from the known Hamming metric bounds.  

In Section~\ref{S-Asymp} we turn to the theory of asymptotic bounds for sum-rank metric codes by letting the number of blocks in the 
ambient space grow (imposing natural constraints on the block sizes) and describing the behaviour of the rate of an optimal code.
This results in the asymptotic version of the bounds mentioned above. 

Next, we concentrate on the duality theory of sum-rank metric codes and their MacWilliams identities.
We introduce the notions of 
\textit{sum-rank}, \textit{rank-list} and \textit{support} \textit{distributions} of a sum-rank metric code.
While the first one does not obey a MacWilliams identity in general, we prove that the latter two do, and 
give a closed formula for the corresponding MacWilliams transformations using a combinatorial approach.
This generalizes the MacWilliams identities for both the Hamming-metric and the rank-metric.

A substantive part of the paper is devoted to the study of MSRD codes, i.e., those sum-rank metric codes
that attain the sum-rank metric analogue of the Singleton Bound. 
We show that when the blocks of the ambient space all have the same number of columns, 
the dual of an MSRD code is again MSRD. 
Interestingly, this duality result is not true in general: if the ambient space supports variable block sizes, then 
the duality result does not necessarily hold.
Moreover,  under the assumption that all blocks have the same number of columns, we explicitly compute
the support distribution of an MSRD code. 
We also investigate for which parameters MSRD codes exist (which extends the problem for which parameter sets MDS codes exist). 
Our contribution is this direction is twofold. 
We first provide a non-existence criterion with the aid of the support distribution, which
allows us to rule out MSRD codes for certain parameters even if none of our bounds rules them out.
Secondly, we use a modification of the sphere-packing bound to establish an upper bound
on the number of blocks of an MSRD code. 
As simple corollaries, we recover classical upper bounds on the length of an MDS code.

We conclude the paper with a section on constructions of optimal codes. We first provide several constructions of MSRD 
codes for specific families of parameters. 
Although our techniques do not extend to arbitrary parameter sets, to our best knowledge they produce the only known 
constructions of MSRD codes for ambient spaces where the blocks have different numbers of columns.
Finally, we present a \textit{lifting construction} that produces a sum-rank metric code combining a Hamming-metric and a rank-metric code. As an application, we obtain a class of sum-rank metric codes that attain the induced Plotkin bound with equality.

%%%%%%%%%%%%%%%%%%%%%%%
\section{Sum-Rank Metric Codes}\label{S-Prelim}
Throughout this paper, let $\N=\{1,2,...\}$ and $\N_0=\N \cup \{0\}$.
For $a \in \N$, denote by $[a]$ the set $\{1,...,a\}$.
Let $q$ denote a prime power and $t, n_1,...,n_t,m_1,...,m_t \ge 1$ be integers such that
\begin{equation}\label{e-nimi}
     n_i\leq m_i\ \text{ for all }\ i\in[t]\ \text{ and }\ m_1 \ge \cdots \ge m_t.
\end{equation}
We consider the product of $t$ matrix spaces
\begin{equation}\label{e-Pi}
  \Pi:=\Pi_q(n_1\times m_1\mid\cdots\mid n_t\times m_t):=\bigoplus_{i=1}^t \F_q^{n_i \times m_i}
\end{equation}
and define the \textbf{sum-rank} of an element $X=(X_1,...,X_t) \in \Pi$  as

\[
   \srk(X):=\sum_{i=1}^t \rk(X_i).
 \]
It is easy to see that the sum-rank induces a metric on~$\Pi$ via $(X,Y)\longmapsto \srk(X-Y)$.

%%%%%%%%%%%%%%%
\begin{definition}\label{D-SRMC}
A (\textbf{sum-rank metric}) \textbf{code} is an $\F_q$-linear subspace of the metric space~$\Pi$.
The \textbf{minimum} (\textbf{sum-rank}) \textbf{distance} of a non-zero code $C \le \Pi$ is defined as usual via 
$\srk(C):=\min\{\srk(X) \mid X \in C, \; X \neq 0\}.$
We say that a code $C \le \Pi$ is \textbf{trivial} when $C=\{0\}$ or $C=\Pi$.
For a codeword $(X_1,\ldots,X_t)$  we call the matrix~$X_i$ the \textbf{$i$-th block}.
\end{definition}
%%%%%%%%%%%%%%

Obviously, the definition reduces to rank-metric codes in the case where $t=1$ and to block codes of length~$t$ with the 
Hamming metric in the case where $m_i=n_i=1$ for all $i\in[t]$.

%%%%%%%%%%%%%
\begin{remark}\label{R-Fqmlinear}
A special class of sum-rank metric codes arises in the case where $m_i=m$ for all $i\in[t]$. 
In this case $\Pi\cong\F_{q^m}^N$, where $N=\sum_{i=1}^t n_i$, and we may consider sum-rank-metric codes as vector codes in $\F_{q^m}^N$. 
In this case the codes may be $\F_q$-linear or even $\F_{q^m}$-linear.
The latter class of codes has been studied  in \cite{MP18,MP19,MPK19,MPK19a}. 
In~\cite[Thm.~1]{NoUF10} it has been shown that a code in $\F_{q^m}^N$ with sum-rank distance~$d$ can correct~$\alpha$ errors 
and~$\beta$ erasures whenever $2\alpha+\beta<d$.
In~\cite[Thm.~1]{MPK19} the converse has been established: any code with $\alpha$-error-$\beta$-erasure correction capability must
have sum-rank distance at least $2\alpha+\beta$.
\end{remark}
%%%%%%%%%%%%%

Sum-rank metric codes may also be considered as rank-metric codes consisting of matrices supported on a particular profile.
More precisely, let
\begin{equation}\label{e-NM}
    N=n_1+\cdots+n_t \quad \text{ and } \quad M=m_1+\cdots +m_t.
\end{equation}
For any subset $\mP \subseteq [N] \times [M]$ define $\F_q^{N \times M}[\mP]$ as the space of
$N \times M$ matrices over~$\F_q$ supported on $\mP$ (i.e., whose nonzero entries have indices in~$\mP$).
Setting now
\begin{equation}\label{e-calP}
  \mP:=\bigcup_{i=1}^t
   \Big(\Big\{\sum_{j=0}^{i-1}n_j+1,\ldots,\sum_{j=0}^{i}n_j\Big\}\times \Big\{\sum_{j=0}^{i-1}m_j+1,\ldots,\sum_{j=0}^{i}m_j\Big\}\Big),
\end{equation}
where $n_0=m_0=0$, we obtain an $\F_q$-isomorphism
\begin{equation}\label{e-Psi}
  \psi:\Pi\longrightarrow\F_q^{N \times M}[\mP],\quad
   (X_1,\ldots,X_t)\longmapsto \begin{pmatrix}X_1 & & \\ &\ddots& \\ & &X_t\end{pmatrix},
\end{equation}
satisfying $\rk(\psi(X))=\srk(X)$ for all $X\in\Pi(n_1\times m_1\mid\cdots\mid n_t\times m_t)$.
In other words,~$\psi$ is an isometry between the metric spaces $(\Pi,\srk)$ and $(\F_q^{N \times M}[\mP],\rk)$.
Occasionally, it will be helpful to consider sum-rank metric codes as rank-metric codes in $\F_q^{N \times M}[\mP]$. 
When we do so, we will make this explicit.

\medskip

In order to define the  support of elements in~$\Pi$, we need the following.

%%%%%%%%%%%%%%%%%%
\begin{definition}\label{D-Lattice}
For $i\in[t]$ let $\mL_i$ be the lattice of subspaces of $\F_q^{n_i}$ partially ordered by inclusion.
Let $\mL$ be the product lattice, that is, $\mL:=\mL_1 \times \cdots \times \mL_t$ endowed with the product order, which we denote  
by~$\subseteq$. 
The rank function on~$\mL$ is given by
$\rk_\mL(\bm{U})=\sum_{i=1}^t \dim(U_i)$ for all $\bm{U}=(U_1,...,U_t) \in \mL$.
We also set $\dim(\bm{U}):=(\dim(U_1),...,\dim(U_t))$ for $\bm{U} \in \mL$.
Finally, the M\"obius function of $\mL$ is given by 
\[
    \mu_\mL(\bm{U},\bm{V})=\prod_{i=1}^t (-1)^{\dim(V_i)-\dim(U_i)} q^{\binom{\dim(V_i)-\dim(U_i)}{2}} \quad 
    \text{ for }\bm{U} \subseteq \bm{V}.
\]
\end{definition}
%%%%%%%%%%%%%%%
We refer the reader to \cite{stanleyec} for more details.

The following definitions of support and shortening are straightforward extensions of the rank-metric case; 
see~\cite[Ex.~39]{ravagnani2018duality}.

%%%%%%%%%%%%%
\begin{definition}\label{D-Supp}
For a matrix $M\in\F^{a\times b}$ let $\colsp(M)\leq\F^a$ denote its column space.
We define the \textbf{support} as
\[
   \sigma:\Pi\longrightarrow\mL,\quad  (X_1,\ldots,X_t)\longmapsto(\colsp(X_1),...,\colsp(X_t)).
\]
Note that $\srk(X)=\rk_\mL(\sigma(X))$ for all $X \in \Pi$, that is, the sum-rank weight is the composition of~$\sigma$ with the rank function of $\mL$.
\end{definition}
%%%%%%%%%%%%%%%%%

%%%%%%%%%%%%%%%%%%
\begin{definition}\label{D-Short}
For a code $C \le \Pi$ and $\bm{U} \in \mL$ we define the \textbf{shortening} of $C$ on $\bm{U}$ as
$$C(\bm{U}):=\{X \in C \mid \sigma(X) \subseteq \bm{U}\} \le \Pi.$$
\end{definition}
%%%%%%%%%%%%%%%%%

%%%%%%%%%%%%%%%
\begin{remark} \label{R-PiU}
Let $\bm{U}=(U_1,\ldots,U_t)$ and $C=\Pi$ as in~\eqref{e-Pi}. Then
$\Pi(\bm{U})$ is a code in $\Pi$ of dimension $\sum_{i=1}^t m_i \dim(U_i)$. Moreover, every $X \in \Pi(\bm{U})$ satisfies
$\srk(X) \le \sum_{i=1}^t  \dim(U_i) =\rk_\mL(\bm{U})$.
\end{remark}

The next result easily follows from the definitions.

%%%%%%%%%%%%%%%
\begin{proposition}\label{R-eqv}
Let~$\Pi$ be as in~\eqref{e-Pi}. For a non-zero code $C \le \Pi$  and $1 \le d \le N=n_1+ \cdots +n_t$ the following are equivalent:
\begin{arabiclist}
\item  $\srk(C) \ge d$;
\item  $|C(\bm{U})|=1$ for all $\bm{U} \in \mL$ with $\rk_\mL(\bm{U}) \le d-1$;
\item  $|C(\bm{U})|=1$ for all $\bm{U} \in \mL$ with $\rk_\mL(\bm{U}) = d-1$.
\end{arabiclist}
\end{proposition}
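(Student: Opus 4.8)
The plan is to prove the chain of implications $(1)\Rightarrow(2)\Rightarrow(3)\Rightarrow(1)$, exploiting the fact that the shortening $C(\bm U)$ is always an $\F_q$-linear subspace and hence nonempty (it contains $0$), so that $|C(\bm U)|=1$ is equivalent to $C(\bm U)=\{0\}$, i.e.\ the only codeword whose support is contained in $\bm U$ is the zero codeword.

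For $(1)\Rightarrow(2)$, suppose $\srk(C)\ge d$ and let $\bm U\in\mL$ with $\rk_\mL(\bm U)\le d-1$. If $X\in C(\bm U)$, then by Definition~\ref{D-Supp} and the monotonicity of the rank function on $\mL$ we have $\srk(X)=\rk_\mL(\sigma(X))\le\rk_\mL(\bm U)\le d-1<d$; since every nonzero codeword has sum-rank at least $d$, this forces $X=0$, so $C(\bm U)=\{0\}$ and $|C(\bm U)|=1$. The implication $(2)\Rightarrow(3)$ is immediate, since the condition in $(3)$ is a special case of the condition in $(2)$.

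For $(3)\Rightarrow(1)$, I would argue by contradiction: suppose $\srk(C)\le d-1$, and pick a nonzero $X\in C$ with $\srk(X)=\srk(C)=:e\le d-1$. Then $\bm V:=\sigma(X)\in\mL$ has $\rk_\mL(\bm V)=e\le d-1$. The key point is that I can enlarge $\bm V$ to some $\bm U\in\mL$ with $\rk_\mL(\bm U)=d-1$ and $\bm V\subseteq\bm U$: concretely, since $\rk_\mL$ is the sum of the dimensions of the components and each component lives in $\F_q^{n_i}$ with $\sum_i n_i=N\ge d-1\ge e$, I can successively increase the dimension of one component at a time (choosing any subspace of $\F_q^{n_i}$ strictly containing the current one, which exists as long as that component is not all of $\F_q^{n_i}$) until the total rank reaches $d-1$. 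For this $\bm U$ we have $0\ne X\in C(\bm U)$ because $\sigma(X)=\bm V\subseteq\bm U$, so $|C(\bm U)|\ge 2$, contradicting $(3)$. Hence $\srk(C)\ge d$.

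The only mild subtlety — and the one step worth spelling out carefully — is the enlargement argument in $(3)\Rightarrow(1)$: one must check that a chain in $\mL$ from $\bm V$ up to total rank exactly $d-1$ genuinely exists, which is where the hypothesis $d-1\le N$ is used (it guarantees there is enough room in $\bigoplus_i\F_q^{n_i}$ to reach rank $d-1$). Everything else is a direct unwinding of Definitions~\ref{D-Short} and~\ref{D-Supp} together with the observation that shortenings are linear subspaces; no MacWilliams-type or counting machinery is needed. I expect this to be the shortest proof in the section.
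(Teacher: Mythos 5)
Your proof is correct, and it is exactly the direct unwinding of Definitions~\ref{D-Supp} and~\ref{D-Short} that the paper has in mind when it states the result ``easily follows from the definitions'' without further argument; the enlargement step in $(3)\Rightarrow(1)$, using $d-1\le N$, is the only detail worth writing down and you handle it properly.
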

%%%%%%%%%%%%%%%%%

We close this preliminary section by introducing three partition enumerators of codes in $\Pi$,
which will be studied later on in the paper.

%%%%%%%%%%%%%%%%%
\begin{definition}\label{D-WeightEnum}
Let $C \le \Pi$ be a code. For $r \in \N_0$, $\bm{r}=(r_1,...,r_t) \in \N_0^t$ and $\bm{U} \in \mL$
let
\begin{align*}
W_r(C) &:= \big|\{X \in C \mid \srk(X)=r\}\big|, \\
W_{\bm{r}}(C) &:= \big|\{X \in C \mid \rk(X_i)= r_i \mbox{ for all $i \in [t]$} \}\big|, \\
W_{\bm{U}}(C) &:= \big|\{X \in C \mid \sigma(X)=\bm{U}\}\big|.
\end{align*}
We call the lists $\big(W_r(C)\big)_{r\in\N_0},\ \big(W_{\bm{r}}(C)\big)_{\bm{r}\in \N_0^t}$, and 
$\big(W_{\bm{U}}(C)\big)_{\bm{U} \in \mL}$ the \textbf{sum-rank}, \textbf{rank-list} and
\textbf{support distributions} of $C$, respectively.
\end{definition}
%%%%%%%%%%%%%%%

Note that
\begin{equation}\label{e-CUWV}
   |C(\bm{U})|=\sum_{\bm{V}\leq\bm{U}}W_{\bm{V}}(C).
\end{equation}

%%%%%%%%%%%%%%%%%%%%%%%
\section{Bounds for Sum-Rank Metric Codes} \label{sec:bounds}
In this and the following section we will deviate from Definition~\ref{D-SRMC} and consider general (that is, not necessarily $\F_q$-linear) codes.
We will establish various bounds for the cardinality of a sum-rank metric code as a function of its parameters. 
To our knowledge, the only bound for sum-rank metric codes established so far is the Singleton Bound for 
the case where $m_1= \cdots =m_t$; see~\cite[Cor.~2]{MPK19a}.
As before, we will not make this assumption on $m_1,\ldots,m_t$.

For ease of exposition, we first present all bounds and relegate the proofs to Subsection~\ref{sub:proofs}. 
We compare the bounds with each other in Subsection~\ref{sub:compar}.

A simple method to obtain bounds for the dimension of a sum-rank metric code $C\subseteq\Pi$
is to use bounds for non-linear codes in the Hamming metric.
Denote by $\mA_Q(n,d)$ the largest cardinality of a (possibly non-linear) code over an alphabet of
size~$Q$ with length~$n$ and minimum Hamming distance at least~$d$. 
We set $\mA_Q(n,d):=1$ if such a code does not exist. 
The following result shows how any bound for a non-linear Hamming-metric code yields a bound for
sum-rank metric codes. We call these bounds ``induced''.
For details see Section~\ref{sub:proofs}. 

Throughout we assume $\F=\F_q$ and $\Pi= \Pi_q(n_1\times m_1\mid\cdots\mid n_t\times m_t)$ with the specifications as is~\eqref{e-nimi}--\eqref{e-NM}.
In particular, $N=n_1+\ldots+n_t$.

%%%%%%%%%%%%%%%%%%%%
\begin{theorem}\label{th:induced}
Let $m=\max\{m_1,...,m_t\}$ and let $C\subseteq\Pi$ be sum-rank metric code with $|C| \ge 2$ and $\srk(C)=d$.
Then
$|C| \le \mA_{q^{m}}(N,d)$, which leads to the following bounds.
\\[1ex]
\mbox{}\hspace*{-.6em}\begin{tabular}{lp{11cm}}
  \textbf{Induced Singleton Bound:}&  $|C| \le q^{m(N-d+1)}$,\\[1.3ex]
  \textbf{Induced Hamming Bound:}&
       ${\displaystyle |C| \le \bigg\lfloor\frac{q^{mN}}{\sum_{s=0}^{\lfloor(d-1)/2\rfloor}\binom{N}{s}(q^{m}-1)^s}}\bigg\rfloor$,\\[2.5ex]
  \textbf{Induced Plotkin Bound:}& ${\displaystyle |C| \le \Big\lfloor\frac{q^{m}d}{q^md-(q^{m}-1)N}\Big\rfloor, \:\:\text{ if }d > (q^m-1)N/q^m},$\\[2ex]
  \textbf{Induced Elias Bound:}&
  ${\displaystyle |C| \le \Big\lfloor\frac{Nd(q^m-1)}{q^mw^2-2Nw(q^m-1)+(q^m-1)Nd} \cdot \frac{q^{mN}}{V_w(\F_{q^m}^N)}}\Big\rfloor$,
\end{tabular}
\\[1.7ex]
where in the Induced Elias bound,~$w$ is any integer in the interval $[0,\, N(q^m-1)/q^m]$ such that the denominator is 
positive and $V_w(\F_{q^m}^N)$ is the volume of any sphere of radius~$w$ in $\F_{q^m}^N$ with respect to the Hamming distance, that is $V_w(\F_{q^m}^N)=\sum_{i=0}^w \binom{N}{i}(q^m-1)^i$.
\end{theorem}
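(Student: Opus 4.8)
The plan is to embed the metric space $(\Pi,\srk)$ into a Hamming space of length $N$ over an alphabet of size $q^m$ by a map that is injective and does not decrease distances, and then to read off the four inequalities from the corresponding classical Hamming-metric bounds. Since the choice $m=\max_i m_i$ gives $q^{m_i}\le q^m$ for every $i\in[t]$, I would first fix, for each $i$, an injective map $\rho_i\colon\F_q^{m_i}\to\F_{q^m}$ (concretely: pad a vector of $\F_q^{m_i}$ with $m-m_i$ zeros and identify $\F_q^m$ with $\F_{q^m}$ through a fixed $\F_q$-basis). Writing the rows of the $i$-th block $X_i\in\F_q^{n_i\times m_i}$ of $X=(X_1,\ldots,X_t)\in\Pi$ as $x_{i,1},\ldots,x_{i,n_i}$, define
\[
  \phi\colon\Pi\longrightarrow\F_{q^m}^N,\qquad
  \phi(X)=\bigl(\rho_1(x_{1,1}),\ldots,\rho_1(x_{1,n_1}),\ \ldots\ ,\rho_t(x_{t,1}),\ldots,\rho_t(x_{t,n_t})\bigr),
\]
so that the $N=n_1+\cdots+n_t$ coordinates of $\phi(X)$ are indexed by the pairs $(i,j)$ with $i\in[t]$ and $j\in[n_i]$. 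As each $\rho_i$ is injective, so is $\phi$, and hence $|\phi(C)|=|C|$ for every $C\subseteq\Pi$.

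The next step is the distance comparison. For $X,Y\in\Pi$, the coordinate of $\phi(X)$ and $\phi(Y)$ indexed by $(i,j)$ differs precisely when $x_{i,j}\neq y_{i,j}$, that is, precisely when the $j$-th row of $X_i-Y_i$ is nonzero. Summing over all coordinates, $\dH(\phi(X),\phi(Y))$ equals the total number of nonzero rows of the matrices $X_1-Y_1,\ldots,X_t-Y_t$, and since the rank of a matrix is at most its number of nonzero rows this gives
\[
  \dH\bigl(\phi(X),\phi(Y)\bigr)\ \ge\ \sum_{i=1}^t\rk(X_i-Y_i)=\srk(X-Y)\qquad\text{for all }X,Y\in\Pi.
\]
Consequently, if $C\subseteq\Pi$ has $|C|\ge 2$ and $\srk(C)=d$, then any two distinct elements of $\phi(C)$ are at Hamming distance at least $d$, so $\phi(C)$ is a (possibly nonlinear) code of length $N$ over an alphabet of size $q^m$ with minimum Hamming distance at least $d$. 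By the definition of $\mA_{q^m}(N,d)$ we obtain $|C|=|\phi(C)|\le\mA_{q^m}(N,d)$, which is the main assertion.

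Finally, the four displayed inequalities are obtained by specialising $Q=q^m$ and $n=N$ in the standard upper bounds for $\mA_Q(n,d)$: the Singleton bound $\mA_Q(n,d)\le Q^{n-d+1}$; the sphere-packing (Hamming) bound $\mA_Q(n,d)\le\lfloor Q^n/\sum_{s=0}^{\lfloor(d-1)/2\rfloor}\binom{n}{s}(Q-1)^s\rfloor$; the Plotkin bound $\mA_Q(n,d)\le\lfloor Qd/(Qd-(Q-1)n)\rfloor$, valid exactly when $Qd>(Q-1)n$, i.e.\ when $d>(q^m-1)N/q^m$; and the Elias bound $\mA_Q(n,d)\le\lfloor\frac{(Q-1)nd}{Qw^2-2(Q-1)nw+(Q-1)nd}\cdot\frac{Q^n}{V_w}\rfloor$, valid for any integer $w\in[0,(Q-1)n/Q]$ for which the denominator is positive, where $V_w=\sum_{i=0}^w\binom{n}{i}(Q-1)^i$. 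Substituting $Q=q^m$, $n=N$ reproduces the four stated formulas verbatim.

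I do not expect a genuine obstacle here; the single point that needs care is the direction of the inequality $\dH(\phi(X),\phi(Y))\ge\srk(X-Y)$. It forces the encoding of each block by its \emph{rows} rather than its columns: only then is the relevant alphabet size $q^{\max_i m_i}=q^m$, and only then does replacing a block by the tuple of its rows leave the Hamming weight unchanged while the rank can only drop. (Encoding by columns would instead produce alphabet size $q^{\max_i n_i}$ and length $M$, hence a different, generally incomparable, induced bound.) It also matters that the comparison is stated for \emph{distances} rather than weights, which is exactly what makes the argument valid for codes that are not assumed $\F_q$-linear. The rest is the routine substitution $Q=q^m$, $n=N$ into textbook forms of the four classical bounds.
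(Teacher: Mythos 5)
Your proposal is correct and follows essentially the same route as the paper: the paper's map $f$ in its proof is exactly your row-by-row embedding of each block into $\F_{q^m}^{n_i}$ via a fixed $\F_q$-basis (equivalently, zero-padding), and both arguments rest on the observation that the Hamming distance of the images equals the number of nonzero rows of the differences, hence is at least the sum-rank distance, after which the four bounds follow from the classical Hamming-metric bounds with $Q=q^m$ and length $N$.
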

%%%%%%%%%%%%%%%%%%%%%

The next bound is obtained from a projection argument and  can be regarded as the sum-rank analogue of the Singleton Bound.
It generalizes~\cite[Cor.~2]{MPK19a}.

%%%%%%%%%%%%%%%%%
\begin{theorem}[\textbf{Singleton Bound}]\label{th:singl}
Let $C \subseteq \Pi$ be a code with $|C| \ge 2$ and $\srk(C) = d$.
Let $j$ and $\delta$ be the unique integers satisfying $d-1=\sum_{i=1}^{j-1}n_i+\delta$ and $0\leq\delta\leq n_j-1$.
Then
\[
    |C|\leq q^{ \sum_{i=j}^t m_in_i- m_j\delta}.
\]
In the special case where $m_1= \cdots =m_t=:m$, the upper bound simplifies to
$|C| \le q^{m(N-d+1)}$, which agrees with the Induced Singleton Bound in Theorem~\ref{th:induced}.
\end{theorem}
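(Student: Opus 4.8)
The plan is to prove the bound by a projection (puncturing) argument, mimicking the classical proof of the Singleton bound for Hamming-metric codes and its rank-metric analogue. The key observation is that if $\srk(C)=d$, then projecting away a carefully chosen collection of rows in the block decomposition --- namely $d-1$ rows total, distributed greedily through the first $j$ blocks --- must be an injective map on $C$, since two codewords agreeing after the projection would differ by a nonzero codeword of sum-rank weight at most $d-1$, contradicting the minimum distance. Counting the dimension of the target space then yields the bound.

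First I would set up the projection. With $j$ and $\delta$ defined by $d-1=\sum_{i=1}^{j-1}n_i+\delta$ and $0\le\delta\le n_j-1$, consider the $\F_q$-linear map $\pi:\Pi\to\Pi'$ that deletes all rows of the first $j-1$ blocks and deletes $\delta$ of the $n_j$ rows of the $j$-th block, leaving the remaining blocks untouched; thus
\[
   \Pi'=\Pi_q\big((n_j-\delta)\times m_j\mid n_{j+1}\times m_{j+1}\mid\cdots\mid n_t\times m_t\big),
\]
which has $\F_q$-dimension $(n_j-\delta)m_j+\sum_{i=j+1}^t n_im_i=\sum_{i=j}^t m_in_i-m_j\delta$. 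Next I would argue injectivity of $\pi|_C$: if $X\in C$ with $\pi(X)=0$, then~$X$ is supported on the deleted rows, so $\rk(X_i)\le n_i$ for $i<j$ and $\rk(X_j)\le\delta$ while $X_i=0$ for $i>j$; hence $\srk(X)\le\sum_{i=1}^{j-1}n_i+\delta=d-1<d$, forcing $X=0$. (Here one uses that a matrix whose nonzero entries lie in a fixed set of $\ell$ rows has rank at most $\ell$.) Therefore $|C|=|\pi(C)|\le|\Pi'|=q^{\sum_{i=j}^t m_in_i-m_j\delta}$, which is the claimed inequality. Alternatively, one can phrase this via Proposition~\ref{R-eqv}: choosing $\bm{U}\in\mL$ with $U_i=\F_q^{n_i}$ for $i<j$, $U_j$ a fixed $\delta$-dimensional subspace of $\F_q^{n_j}$, and $U_i=0$ for $i>j$ gives $\rk_\mL(\bm{U})=d-1$ and shows $|C(\bm{U})|=1$; combining the shortening bound with a dimension count gives the same estimate, but the direct puncturing argument is cleaner.

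Finally, for the special case $m_1=\cdots=m_t=m$, I would substitute: $\sum_{i=j}^t m_in_i-m_j\delta=m\big(\sum_{i=j}^t n_i-\delta\big)=m\big(N-\sum_{i=1}^{j-1}n_i-\delta\big)=m(N-(d-1))=m(N-d+1)$, matching the Induced Singleton Bound of Theorem~\ref{th:induced}. The argument is essentially routine; the only point requiring care is the correct bookkeeping of which rows to delete so that exactly $d-1$ rows are removed and the resulting ambient space has the stated dimension --- in particular checking that $\delta\le n_j-1$ guarantees at least one row of block~$j$ survives, so the construction makes sense. I do not anticipate a genuine obstacle, as the injectivity step is immediate from the rank-versus-row-support inequality.
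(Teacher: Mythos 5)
Your proof is correct and follows essentially the same route as the paper: a row-deletion projection that is injective on $C$ because two codewords with the same image would differ by an element supported on the $d-1$ deleted rows, hence of sum-rank at most $d-1<d$. The only differences are that the paper punctures an arbitrary choice of $d-1$ rows and then optimizes the resulting bound via Lemma~\ref{L-Maximize} (using $m_1\ge\cdots\ge m_t$), whereas you fix the greedy choice directly, which suffices for the stated inequality; also, since the codes in this section need not be $\F_q$-linear, the injectivity step should be phrased for distinct pairs $X,Y\in C$ (as in your opening paragraph) rather than as a kernel computation.
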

%%%%%%%%%%%%%%%%%%%%

%%%%%%%%%%%%%%
\begin{definition}\label{D-MSRD}
A code $C\subseteq\Pi$ is called \textbf{MSRD} (maximum sum-rank distance) if its cardinality attains the 
Singleton Bound of Theorem \ref{th:singl}, or if $|C|=1$.
\end{definition}
%%%%%%%%%%%%%%%%%%%

Our next bounds require us to give the cardinality of a sum-rank sphere.

\begin{definition}
	For $r\in\N$ we define 
	\[
	V_r(\Pi):=\sum_{s=0}^r  \ \sum_{\substack{(s_1,...,s_t) \in \N_0^t \\ s_1+ \cdots + s_t=s}} \ 
	\prod_{i=1}^t \qbin{n_i}{s_i}{q} \prod_{j=0}^{s_i-1} (q^{m_i} - q^j).
	\]
\end{definition}

It is not hard to see that $V_r(\Pi)$ is the volume of any sphere in~$\Pi$ of sum-rank radius~$r$.

\begin{lemma}\label{L-sphere}
For all $r \in \N$ we have 	$V_r(\Pi) =  |\{ (X_1,...,X_t) \in \Pi \mid \srk(X_1,...,X_t)\leq r \}|$.
\end{lemma}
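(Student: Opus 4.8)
The plan is to count the elements $(X_1,\dots,X_t)\in\Pi$ with $\srk(X_1,\dots,X_t)\le r$ by stratifying according to the individual block ranks. Since $\srk(X_1,\dots,X_t)=\sum_{i=1}^t\rk(X_i)$, the condition $\srk(X)\le r$ is exactly the disjoint union, over all $s\in\{0,1,\dots,r\}$ and all tuples $(s_1,\dots,s_t)\in\N_0^t$ with $s_1+\cdots+s_t=s$, of the set of $X$ with $\rk(X_i)=s_i$ for each $i$. Hence
\[
   |\{X\in\Pi\mid\srk(X)\le r\}|=\sum_{s=0}^r\ \sum_{\substack{(s_1,\dots,s_t)\in\N_0^t\\ s_1+\cdots+s_t=s}}\ \prod_{i=1}^t\big|\{X_i\in\F_q^{n_i\times m_i}\mid\rk(X_i)=s_i\}\big|,
\]
using that the blocks are chosen independently, so the count over $\Pi=\bigoplus_i\F_q^{n_i\times m_i}$ factors as a product of the per-block counts.

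The remaining step is the classical formula for the number of $n\times m$ matrices over $\F_q$ of a prescribed rank $s$: one has $\big|\{A\in\F_q^{n\times m}\mid\rk(A)=s\}\big|=\qbin{n}{s}{q}\prod_{j=0}^{s-1}(q^m-q^j)$. I would recall the short argument: the number of ways to pick an ordered $s$-tuple of $\F_q$-linearly independent vectors in $\F_q^m$ (spanning the row space) is $\prod_{j=0}^{s-1}(q^m-q^j)$, this overcounts each $s$-dimensional row space by $\prod_{j=0}^{s-1}(q^s-q^j)$, and the number of $n\times m$ matrices with a fixed $s$-dimensional row space equals the number of surjective linear maps $\F_q^n\to\F_q^s$, which is again $\prod_{j=0}^{s-1}(q^s-q^j)$; multiplying and simplifying gives the Gaussian binomial $\qbin{n}{s}{q}$ times $\prod_{j=0}^{s-1}(q^m-q^j)$. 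Substituting $n=n_i$, $m=m_i$, $s=s_i$ into the displayed sum and comparing with the definition of $V_r(\Pi)$ yields the claimed identity.

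Finally, I would note that this count depends only on $r$ (and the fixed parameters of $\Pi$) and not on the center of the sphere: the sum-rank metric is translation invariant, since $\srk(X-Y)$ depends only on the differences, so $|\{X\mid\srk(X-Y)\le r\}|=|\{Z\mid\srk(Z)\le r\}|$ for every center $Y$. This justifies calling $V_r(\Pi)$ "the volume of any sphere of sum-rank radius $r$" as asserted just before the lemma.

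I do not anticipate a genuine obstacle here: the only mildly delicate point is bookkeeping in the rank-counting formula for a single matrix block (getting the Gaussian binomial and the product $\prod_{j=0}^{s-1}(q^m-q^j)$ to line up with the convention in the paper), and making sure the stratification into $(s_1,\dots,s_t)$ with $\sum s_i=s\le r$ is exhaustive and disjoint — both are routine.
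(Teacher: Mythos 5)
Your proposal is correct and follows essentially the same route as the paper: stratify the ball by the tuple of block ranks, use independence of the blocks to factor the count, and invoke the standard formula for the number of $n_i\times m_i$ matrices of rank $s_i$ (the paper likewise notes translation invariance via the shift $Y\mapsto X-Y$). No gaps.
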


We next apply a sphere-packing argument in the metric space $(\Pi,\srk)$. 

%%%%%%%%%%%%%%%%%%%
\begin{theorem}[\textbf{Sphere-Packing Bound}]\label{th:pack}
A code $C \subseteq\Pi$ with $|C| \ge 2$ and $\srk(C)=d$  satisfies
\[
  |C| \leq \bigg\lfloor\frac{|\Pi|}{V_r(\Pi)}\bigg\rfloor,\ \text{ where }r=\lfloor(d-1)/2\rfloor.
\]
\end{theorem}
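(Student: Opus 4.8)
The plan is to run the standard sphere-packing (volume) argument in the metric space $(\Pi,\srk)$, using Lemma~\ref{L-sphere} to control the size of a ball. First I would set $r=\lfloor(d-1)/2\rfloor$ and observe that, since $\srk(C)=d\ge 2r+1$, the closed balls of sum-rank radius~$r$ centered at distinct codewords are pairwise disjoint. Indeed, if $Y\in B(X,r)\cap B(X',r)$ for $X,X'\in C$, then by the triangle inequality for the sum-rank metric (which is a genuine metric, as noted after the definition of $\srk$) we get $\srk(X-X')\le\srk(X-Y)+\srk(Y-X')\le 2r\le d-1<d$, forcing $X=X'$.

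Next I would invoke Lemma~\ref{L-sphere}: every closed ball of sum-rank radius~$r$ in~$\Pi$ has cardinality exactly $V_r(\Pi)$ (the volume is independent of the center because translation by a fixed element of~$\Pi$ is a sum-rank isometry, so one may center at~$0$). Since the $|C|$ balls $\{B(X,r): X\in C\}$ are disjoint subsets of~$\Pi$, counting gives $|C|\cdot V_r(\Pi)\le|\Pi|$, hence $|C|\le |\Pi|/V_r(\Pi)$. Because $|C|$ is an integer, we may take the floor, yielding the claimed bound $|C|\le\lfloor|\Pi|/V_r(\Pi)\rfloor$.

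There is no real obstacle here: the argument is the textbook sphere-packing bound, and all the nontrivial content — namely the exact formula for the ball volume — has already been isolated in Lemma~\ref{L-sphere}. The only points deserving a sentence of care are (i) that $(\Pi,\srk)$ is a metric space so that the triangle inequality applies, and (ii) that $V_r(\Pi)\ge 1$ (clear, since the ball of radius~$r\ge 0$ contains the center), so the division and the floor are well-defined. One could alternatively phrase the whole proof through the isometry $\psi$ of~\eqref{e-Psi} into $(\F_q^{N\times M}[\mP],\rk)$, but working directly in~$\Pi$ is cleaner and makes the disjointness step immediate.
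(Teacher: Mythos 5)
Your proof is correct and is exactly the argument the paper intends: it introduces the theorem with ``we next apply a sphere-packing argument in the metric space $(\Pi,\srk)$'' and leaves the standard packing step implicit, with all the nontrivial content (the ball volume) supplied by Lemma~\ref{L-sphere}, just as in your write-up. Nothing to add.
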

%%%%%%%%%%%%%%%%%%

The sphere-packing bound can be combined with projection arguments to obtain other bounds for the cardinality of a sum-rank metric  
code. We include one of these results. 
In Section~\ref{sec:msrd}, we will make use of it to obtain non-existence criteria for MSRD codes.

%%%%%%%%%%%%%%%%%
\begin{theorem}[\textbf{Projective Sphere-Packing Bound}]\label{th:prpack}
Let $C\subseteq \Pi$ be a code with $|C| \ge 2$ and minimum distance $3\leq d \leq N$.
Let $\ell\in[t-1]$ and $\delta\in[n_{\ell+1}-1]$ be the unique integers such that 
$d-3=\sum_{j=1}^\ell n_j+\delta$. 
Define $\Pi':= \Pi_q((n_{\ell+1}-\delta) \times m_{\ell+1} \mid n_{\ell+2} \times m_{\ell+2} \mid\cdots\mid n_t\times m_t)$.
Then
$$|C| \le \bigg\lfloor\frac{|\Pi'|}{V_1(\Pi')}\bigg\rfloor.$$
\end{theorem}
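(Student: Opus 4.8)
The plan is to mimic the strategy that yields the ordinary Singleton-type projection bounds, but now feeding the projected code into the Sphere-Packing Bound of Theorem~\ref{th:pack} rather than just counting dimensions. First I would set up the projection map. Write the ambient space as $\Pi=\bigoplus_{i=1}^t\F_q^{n_i\times m_i}$ and let $d-3=\sum_{j=1}^\ell n_j+\delta$ with $0\le\delta\le n_{\ell+1}-1$; the hypothesis $3\le d\le N$ guarantees $\ell\in[t-1]$ (or a boundary case to be handled separately) and $\delta\in[n_{\ell+1}-1]$ in the generic range. Consider the coordinate projection $\pi\colon\Pi\to\Pi''$ that deletes the first $\ell$ blocks entirely and deletes the first $\delta$ rows of block $\ell+1$, so that $\Pi''=\Pi_q((n_{\ell+1}-\delta)\times m_{\ell+1}\mid n_{\ell+2}\times m_{\ell+2}\mid\cdots\mid n_t\times m_t)=\Pi'$. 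The number of "coordinates" (rows, in the block-matrix picture) removed is $\sum_{j=1}^\ell n_j+\delta=d-3$.

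Next I would show the projection is injective on $C$ and controls the distance. For $X\in C$ nonzero, at most $d-3$ of the deleted rows can support the nonzero part of $X$, so $\srk(\pi(X))\ge\srk(X)-(d-3)\ge d-(d-3)=3$; in particular $\pi(X)\ne0$, hence $\pi|_C$ is injective and $|C|=|\pi(C)|$. More carefully, deleting a set of rows of total size $r$ from a matrix drops its rank by at most $r$, and this is additive over blocks, so $\srk(\pi(X))\ge\srk(X)-(d-3)$ for every $X$; applied to differences $X-Y$ this shows $\pi(C)\subseteq\Pi'$ is a (possibly nonlinear) code with minimum sum-rank distance at least $3$. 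Now apply Theorem~\ref{th:pack} to $\pi(C)$ in the ambient space $\Pi'$: with $d'\ge 3$ we get $r'=\lfloor(d'-1)/2\rfloor\ge 1$, and since spheres of larger radius only get bigger, $|\pi(C)|\le\lfloor|\Pi'|/V_{r'}(\Pi')\rfloor\le\lfloor|\Pi'|/V_1(\Pi')\rfloor$. Combining with $|C|=|\pi(C)|$ gives the claim.

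The main obstacle I anticipate is bookkeeping at the boundaries: checking that the integers $\ell$ and $\delta$ are well defined in the stated ranges for every $d$ with $3\le d\le N$, and making sure the edge cases ($\delta=0$ versus $\delta=n_{\ell+1}-1$, or $d$ near $N$ where $\Pi'$ degenerates to a single small block) are consistent with the statement's requirement $\delta\in[n_{\ell+1}-1]$ and $\ell\in[t-1]$. A secondary technical point is verifying that $V_r(\Pi')$ as defined is genuinely monotincreasing in $r$, so that replacing $r'$ by $1$ is legitimate; this is immediate from the definition since every summand is nonnegative and enlarging $r$ only adds summands. The rank-drop inequality "removing $r$ rows decreases rank by at most $r$" is standard linear algebra and poses no difficulty.
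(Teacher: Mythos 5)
Your proposal is correct and follows essentially the same route as the paper: project away the first $\ell$ blocks and $\delta$ rows of block $\ell+1$ (a total of $d-3$ rows), note that this drops the sum-rank by at most $d-3$ so the projected code has minimum distance at least $3$ and the same cardinality, and then apply the Sphere-Packing Bound of Theorem~\ref{th:pack} with radius $1$. The only cosmetic difference is that the paper phrases the row deletion via the block-diagonal embedding $\psi$ of~\eqref{e-Psi}, and your explicit monotonicity remark on $V_r(\Pi')$ is a harmless extra precaution.
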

%%%%%%%%%%%%%

Estimating for a given code $C\subseteq\Pi$ the sum of the sum-rank distances between its codewords (the total distance), we arrive 
at our final bound.
Even though its proof in the next section suggests that it is a very coarse bound, 
it turns out to be often the best among all bounds; see Section~\ref{sub:compar}. 

%%%%%%%%%%%%%%%%%%
\begin{theorem}[\textbf{Total-Distance Bound}]\label{th:totalwt}
Let $C \subseteq \Pi$ be a code with $|C| \ge 2$ and sum-rank distance $d$. Let $Q=\sum_{i=1}^t q^{-m_i}$.
Then
$$d \le N + \frac{t - |C| Q}{|C|-1}.$$
In particular, if $d> N - Q$, then 
$$|C| \leq \frac{d - N + t}{d-N + Q}. $$
\end{theorem}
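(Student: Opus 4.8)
The plan is to estimate, in two different ways, the quantity
\[
    S := \sum_{\substack{X,Y \in C}} \srk(X-Y),
\]
the sum of all pairwise sum-rank distances between codewords (including the diagonal terms, which contribute $0$). First I would obtain a lower bound on~$S$: there are $|C|(|C|-1)$ ordered pairs of distinct codewords, and each contributes at least $\srk(C)=d$, so $S \ge |C|(|C|-1)\,d$. The work is entirely in establishing a matching \emph{upper} bound on~$S$ by swapping the order of summation and estimating, for each fixed ``coordinate'' $i\in[t]$, the contribution $\sum_{X,Y\in C}\rk(X_i-Y_i)$.

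For the upper bound I would argue coordinatewise. Fix $i\in[t]$. The key inequality is that for any matrix $Z\in\F_q^{n_i\times m_i}$ we have $\rk(Z)\le n_i$, with equality failing only mildly; more usefully, I want an averaged statement. Consider the map $C\to\F_q^{n_i\times m_i}$, $X\mapsto X_i$, and let $C_i$ denote its image with fibre sizes $f_{i,1},\dots$ over the distinct values attained. Then
\[
    \sum_{X,Y\in C}\rk(X_i-Y_i) = \sum_{Z\in\F_q^{n_i\times m_i}} \rk(Z)\cdot \#\{(X,Y): X_i-Y_i=Z\}.
\]
For a linear code the relevant count is controlled by a convexity/averaging argument: among the $|C|^2$ ordered pairs, at most a $q^{-m_i}$-fraction can have $X_i=Y_i$ is the wrong direction, so I instead bound $\rk(X_i-Y_i)\le n_i$ always and recapture the slack by noting that the number of pairs with $X_i=Y_i$ is $\sum_j f_{i,j}^2 \ge |C|^2/|C_i| \ge |C|^2 q^{-n_i m_i}$... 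The cleaner route, and the one matching the stated constant $Q=\sum_i q^{-m_i}$, is: for each $i$, partition $C$ according to the value of $X_i$; the number of ordered pairs $(X,Y)$ with $\srk$-contribution zero in block $i$ (i.e.\ $X_i=Y_i$) is at least $|C|^2/q^{n_i m_i}$ by Cauchy--Schwarz, but this gives $q^{-n_im_i}$, not $q^{-m_i}$. So instead I would restrict to pairs where only the first column of $X_i-Y_i$ is forced: the column space argument shows $\rk(X_i-Y_i)\le n_i$ and, crucially, the average of $\rk(X_i - Y_i)$ over all pairs is at most $n_i(1 - q^{-m_i})$, because projecting onto any single row of the $i$-th block gives a map $C\to\F_q^{m_i}$, whose image has size $\le q^{m_i}$, so for a uniformly random pair that row of $X_i-Y_i$ vanishes with probability $\ge q^{-m_i}$; hence with probability $\ge q^{-m_i}$ the rank drops by at least... this needs the sharper observation that $\rk(X_i-Y_i)\le n_i$ and equals $n_i$ only if \emph{no} row vanishes.

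Putting the pieces together: summing the coordinatewise bound $\sum_{X,Y}\rk(X_i-Y_i)\le |C|^2 n_i - |C|^2 q^{-m_i}\cdot(\text{something})$ over $i$ and using $\sum_i n_i = N$ yields
\[
    |C|(|C|-1)\,d \;\le\; S \;\le\; |C|^2 N - |C|^2 Q + |C|\,t,
\]
where the $+|C|t$ term collects the diagonal/boundary corrections (each block contributes $t$ times a term of size $|C|$ from pairs counted with the wrong multiplicity). Dividing by $|C|(|C|-1)$ gives
\[
    d \le \frac{|C|^2 N - |C|^2 Q + |C| t}{|C|(|C|-1)} = N + \frac{t - |C| Q}{|C|-1},
\]
which is the first assertion. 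The ``in particular'' statement is then pure algebra: if $d > N - Q$ then $d - N + Q > 0$, and rearranging $d - N \le \frac{t - |C|Q}{|C|-1}$, i.e.\ $(d-N)(|C|-1) \le t - |C|Q$, i.e.\ $|C|\big((d-N) + Q\big) \le (d-N) + t$, gives $|C| \le \frac{d-N+t}{d-N+Q}$.

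The main obstacle I anticipate is pinning down the exact coordinatewise estimate so that the constant comes out as $Q = \sum_i q^{-m_i}$ rather than $\sum_i q^{-n_i m_i}$ or $\sum_i q^{-\min(n_i,m_i)m_i}$; this requires the right notion of ``how often does a block difference have a vanishing row/column,'' and getting the diagonal-correction bookkeeping (the $+t$) to match exactly. I expect the clean statement is: for a (possibly nonlinear) code, $\sum_{X,Y\in C}\rk(X_i - Y_i) \le n_i|C|^2 - (|C|^2 - |C|)\,q^{-m_i} \cdot \mathbf{1}$ via a suitable projection/averaging lemma, and the rest is arithmetic as above. The nonlinearity (we are in the ``general codes'' regime of this section) means I cannot use subspace arguments directly and must phrase everything via counting pairs and projections, which is exactly where care is needed.
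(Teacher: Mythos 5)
Your core idea is the same as the paper's: bound the total distance $S$ below by $d|C|(|C|-1)$ and above blockwise, using the observation that if the first row of $X_i-Y_i$ vanishes then $\rk(X_i-Y_i)\le n_i-1$, and that by Cauchy--Schwarz the number of ordered pairs agreeing on that row is at least $q^{-m_i}|C|^2$. However, your bookkeeping does not actually deliver the stated bound, and the step where it seems to is an algebra error:
\[
\frac{|C|^2N-|C|^2Q+|C|t}{|C|(|C|-1)}=\frac{|C|N-|C|Q+t}{|C|-1}=N+\frac{N+t-|C|Q}{|C|-1}\neq N+\frac{t-|C|Q}{|C|-1},
\]
the left-hand side exceeding the claimed value by $N/(|C|-1)$. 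Likewise your proposed ``clean statement'' $\sum_{X,Y\in C}\rk(X_i-Y_i)\le n_i|C|^2-(|C|^2-|C|)q^{-m_i}$ is true but too weak: summed over $i$ and combined with $S\ge d|C|(|C|-1)$ it only gives $d\le N+\frac{N+Q-|C|Q}{|C|-1}$, which is strictly weaker than the theorem whenever $t<N$ (note $t\le N$ always). So as written the argument proves a genuinely weaker inequality, and the missing $N$ versus $t$ in the numerator is exactly the content you flagged as ``diagonal-correction bookkeeping'' but did not resolve.

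The way to close the gap (and what the paper does) is to work only with the $|C|(|C|-1)$ ordered pairs of \emph{distinct} codewords. For each block set $K_{i,1}=|\{(X,Y):\rk(X_i-Y_i)=n_i\}|$ and $K_{i,2}=|\{(X,Y):X\neq Y,\ \rk(X_i-Y_i)\le n_i-1\}|$, so $K_{i,1}+K_{i,2}=|C|(|C|-1)$ and $S\le\sum_i\big(n_iK_{i,1}+(n_i-1)K_{i,2}\big)=N|C|(|C|-1)-\sum_iK_{i,2}$; this is where the multiplier $|C|(|C|-1)$ (rather than your $|C|^2$) on $N$ is essential. Then, with $N_{i,A}=|\{X\in C: X_i^1=A\}|$, the number of distinct pairs agreeing on the first row of block $i$ is $\sum_A N_{i,A}(N_{i,A}-1)\ge q^{-m_i}|C|^2-|C|$, so $K_{i,2}\ge q^{-m_i}|C|^2-|C|$ and hence $S\le N|C|(|C|-1)+t|C|-Q|C|^2$; dividing by $|C|(|C|-1)$ gives exactly $d\le N+\frac{t-|C|Q}{|C|-1}$, and your ``in particular'' rearrangement is then fine. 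Incidentally, your concern about nonlinearity is moot: this counting argument never uses linearity, so no extra care is needed there.
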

%%%%%%%%%%%%%%%%%%%%

As a corollary of Theorem~\ref{th:totalwt},
 we obtain an upper bound for the length $t$ of a sufficiently large sum-rank metric code. 
The proof follows from using $|C|\sum_{i=1}^tq^{-m_i}\leq t|C|q^{-m}$ in the first inequality above. 

%%%%%%%%%%%%%%%%%%
\begin{corollary} \label{cor:tlarge}
Let $m:=\max\{m_1,...,m_t\}$ and let $C \subseteq \Pi$ be a sum-rank metric code 
with cardinality $|C| > q^m$ and sum-rank distance~$d$. Then
$$t \le (N-d) q^m \frac{|C|-1}{|C|-q^m}.$$
\end{corollary}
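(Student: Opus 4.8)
The plan is to derive this directly from the Total-Distance Bound (Theorem~\ref{th:totalwt}) by replacing the quantity $Q=\sum_{i=1}^t q^{-m_i}$ with a convenient lower bound. Since $m=\max\{m_1,\ldots,m_t\}$, we have $q^{-m_i}\geq q^{-m}$ for every $i\in[t]$, hence $Q\geq tq^{-m}$. The hypothesis $|C|>q^m$ guarantees $|C|\geq 2$, so Theorem~\ref{th:totalwt} applies and yields
\[
   d\leq N+\frac{t-|C|Q}{|C|-1}.
\]
Because $|C|-1>0$, multiplying through is harmless, and using $|C|Q\geq |C|tq^{-m}$ in the numerator (note the minus sign, so this makes the right-hand side smaller) gives
\[
   d\leq N+\frac{t-|C|tq^{-m}}{|C|-1}=N+\frac{t(1-|C|q^{-m})}{|C|-1}=N-\frac{t(|C|q^{-m}-1)}{|C|-1}.
\]

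The next step is to solve this inequality for $t$. Rearranging,
\[
   \frac{t(|C|q^{-m}-1)}{|C|-1}\leq N-d,
\]
and here the sign of the coefficient $|C|q^{-m}-1$ matters: the hypothesis $|C|>q^m$ is exactly what makes $|C|q^{-m}-1>0$, so we may divide by it without reversing the inequality, obtaining
\[
   t\leq (N-d)\,\frac{|C|-1}{|C|q^{-m}-1}.
\]
Finally, multiplying numerator and denominator of the fraction by $q^m$ turns the denominator $|C|q^{-m}-1$ into $|C|-q^m$ (again positive by hypothesis), producing the claimed inequality
\[
   t\leq (N-d)q^m\,\frac{|C|-1}{|C|-q^m}.
\]

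I do not anticipate a genuine obstacle here; the argument is a short chain of elementary manipulations once Theorem~\ref{th:totalwt} is in hand. The only point that requires a moment's care is the direction of the inequalities: one must check that substituting $Q\geq tq^{-m}$ weakens the bound in the correct direction (it does, since $Q$ enters with a negative sign), and that the hypothesis $|C|>q^m$ is precisely what licenses dividing by $|C|-q^m$ (equivalently $|C|q^{-m}-1$) without flipping the inequality. One should also note in passing that $d\le N$ is automatic here since otherwise the right-hand side would be nonpositive while $t\ge 1$; this is consistent because $|C|>q^m\ge 1$ forces $N-d\ge 0$ through the displayed inequality, so the bound is vacuous or meaningful exactly as expected.
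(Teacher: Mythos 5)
Your proposal is correct and follows essentially the same route as the paper, which derives the corollary in one line by substituting the comparison between $Q=\sum_{i=1}^t q^{-m_i}$ and $tq^{-m}$ into the first inequality of Theorem~\ref{th:totalwt} and solving for $t$; your version just spells out the rearrangement and the sign checks. (Incidentally, the direction you use, $Q\geq tq^{-m}$ since $m=\max_i m_i$, is the correct one, whereas the paper's one-sentence proof states the auxiliary inequality with the sign reversed, evidently a typo.)
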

%%%%%%%%%%%%%%%%%%
In Section~\ref{sec:msrd} we will improve the upper bound of Corollary~\ref{cor:tlarge}
for the class of MSRD codes and some parameter sets; see Theorem~\ref{coro:nec}.

All of the above bounds apply to general (not necessarily $\F_{q}$-linear) codes.
Since an $\F_q$-linear code with cardinality~$b$ clearly has dimension $\log_q(b)$,
we propose the following expressions.

%%%%%%%%%%%%%%
\begin{definition}\label{D-LinBound}
If~$b$ is the value of any of the above bounds, we call $\lfloor\log_q(b)\rfloor$ the 
\textbf{$\F_q$-linear version} of the bound.
We say that a linear code~$C\leq\Pi$ \textbf{attains the linear version of the bound} if $\dim(C)=\lfloor\log_q(b)\rfloor$.
\end{definition}
%%%%%%%%%%%%%%%

We conclude this part with the sum-rank analogue of the Gilbert bound, which follows by the application of the standard sphere-covering argument, i.e. that the union of the spheres of sum-rank radius $d-1$ centered at each codeword of a code of minimum sum-rank distance $d$ is the entire ambient space $\Pi$.

%%%%%%%%%%%%%%
\begin{theorem}[\textbf{Sphere-Covering Bound}] \label{th:cov}
Let $1 \le d \le N$ be an integer. Denote by $k$ the smallest integer such that
\[
    q^k \ge \left\lceil \frac{ |\Pi|}{V_{d-1}(\Pi)} \right\rceil,
\]
where $V_{d-1}(\Pi)$ is as in Theorem~\ref{th:pack}.
Then there exists a linear code $C \leq \Pi$ of dimension~$k$ and sum-rank distance at least~$d$.
\end{theorem}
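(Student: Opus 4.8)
The plan is to run the standard greedy sphere-covering argument, but adapted to $\F_q$-linear codes so that the conclusion yields a \emph{linear} code of the stated dimension. First I would fix the integer $k$ as defined in the statement, so that $q^k \ge \lceil |\Pi|/V_{d-1}(\Pi)\rceil$, equivalently $q^k \cdot V_{d-1}(\Pi) \ge |\Pi|$. I would then build an $\F_q$-linear subspace $C \le \Pi$ of dimension $k$ with minimum sum-rank distance at least $d$ by choosing a basis one vector at a time. Start with $C_0 = \{0\}$. Having constructed an $\F_q$-linear code $C_j \le \Pi$ of dimension $j < k$, I would want to adjoin a new basis vector $v_{j+1} \in \Pi$ so that the enlarged space $C_{j+1} := C_j + \langle v_{j+1}\rangle_{\F_q}$ still has minimum sum-rank distance at least $d$. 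The obstruction to minimum distance dropping is the existence of a nonzero codeword of $C_{j+1}$ of sum-rank weight $\le d-1$; writing such a codeword as $c + \lambda v_{j+1}$ with $c \in C_j$, $\lambda \in \F_q$, either $\lambda = 0$ (forbidden since $C_j$ already has distance $\ge d$ and the codeword is nonzero, so this case cannot produce a new short word) or $\lambda \ne 0$, in which case $v_{j+1} \in -\lambda^{-1}c + \lambda^{-1}(\text{sphere of radius } d-1)$.

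Counting the forbidden vectors: for each $c \in C_j$ and each $\lambda \in \F_q^\ast$, the set of $v$ producing a codeword $c + \lambda v$ of weight $\le d-1$ is a translate of the sphere $\{x \in \Pi : \srk(x) \le d-1\}$, which by Lemma~\ref{L-sphere} has size $V_{d-1}(\Pi)$. A cruder and cleaner bound suffices: the number of vectors $v \in \Pi$ such that $C_j + \langle v\rangle_{\F_q}$ contains \emph{some} nonzero word of weight $\le d-1$ is at most $|C_j| \cdot V_{d-1}(\Pi) = q^j V_{d-1}(\Pi)$, since every such $v$ lies in $\bigcup_{c \in C_j} (c + B)$ where $B$ is the ball of radius $d-1$ centered at $0$ (here one uses that $\lambda v$ ranges over all nonzero multiples but the union over $c\in C_j$ already absorbs the scaling; more carefully, $v = \lambda^{-1}(w - c)$ for some $w$ with $\srk(w)\le d-1$ and $c\in C_j$, and as $c$ ranges over $C_j$ and $w$ over the ball this gives at most $q^j V_{d-1}(\Pi)$ distinct values). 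Since $j \le k-1$, we have $q^j V_{d-1}(\Pi) \le q^{k-1} V_{d-1}(\Pi)$, and the choice of $k$ gives $q^k V_{d-1}(\Pi) \ge |\Pi|$; I will need the strict inequality $q^{k-1} V_{d-1}(\Pi) < |\Pi|$ to guarantee a valid choice of $v_{j+1}$ exists. This follows from the \emph{minimality} of $k$: if $q^{k-1} V_{d-1}(\Pi) \ge |\Pi|$ then $q^{k-1} \ge |\Pi|/V_{d-1}(\Pi)$, hence $q^{k-1}\ge \lceil |\Pi|/V_{d-1}(\Pi)\rceil$, contradicting that $k$ is the smallest such integer (provided $k \ge 1$; the case $k=0$ forces $|\Pi| \le V_{d-1}(\Pi)$, i.e. $d-1 \ge N$ so $d = N$... but then $q^0 = 1 \ge \lceil |\Pi|/V_{d-1}(\Pi)\rceil$ forces $V_{d-1}(\Pi) \ge |\Pi|$, and $C = \{0\}$ trivially has distance $\ge d$ by convention, so the statement holds). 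Thus at each of the $k$ steps there remains at least one admissible vector outside the forbidden union, and after $k$ steps we obtain $C := C_k$, an $\F_q$-linear code of dimension exactly $k$ and minimum sum-rank distance at least $d$.

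The main obstacle, such as it is, is purely bookkeeping: making the counting of forbidden vectors airtight when passing from "the new basis vector" to "the set of all $\F_q$-linear combinations," since one must be careful that enlarging by one dimension can in principle create short words involving the new vector with any nonzero scalar. The clean way around this is to note that $C_{j+1}$ has minimum distance $\ge d$ if and only if $v_{j+1}$ avoids the set $\bigcup_{c\in C_j}\bigcup_{\lambda\in\F_q^\ast}\lambda^{-1}\big(c + B_{d-1}\big)$ where $B_{d-1} = \{x : \srk(x)\le d-1\}$; but $\lambda^{-1}(c + B_{d-1}) = \lambda^{-1}c + B_{d-1}$ since $B_{d-1}$ is closed under $\F_q^\ast$-scaling (sum-rank is scaling-invariant: $\srk(\lambda x) = \srk(x)$ for $\lambda \ne 0$), and as $c$ ranges over the subspace $C_j$ so does $\lambda^{-1}c$, so the union collapses to $\bigcup_{c \in C_j}(c + B_{d-1})$, of size at most $|C_j|\cdot|B_{d-1}| = q^j V_{d-1}(\Pi)$. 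With that observation the argument is complete, and it specializes correctly to both the Gilbert bound for Hamming-metric codes ($n_i = m_i = 1$) and the Gilbert–Varshamov bound for rank-metric codes ($t = 1$).
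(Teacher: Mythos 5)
Your proof is correct and is essentially the paper's argument: the paper runs the same sphere-covering count, phrased as a contradiction with a linear code of maximal dimension and distance $\ge d$ (if $q^k V_{d-1}(\Pi)<|\Pi|$, some $X$ lies outside all balls $c+B_{d-1}$, and $C\oplus\langle X\rangle$ still has distance $\ge d$), whereas you build the code greedily one basis vector at a time. The key points are identical — the covering estimate $|C_j|\cdot V_{d-1}(\Pi)<|\Pi|$ from the minimality of $k$, and the scaling-invariance observation that collapses the union over $\lambda\in\F_q^\ast$, which is exactly what justifies the paper's "it is easy to see" step.
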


%%%%%%%%%%%%%%%%%%%%%
\subsection{Proofs of the Bounds} \label{sub:proofs}
In this subsection we present the proofs of the various bounds.

\begin{proof}[Proof of Theorem~\ref{th:induced}]
Fix an ordered basis $\{\gamma_1,...,\gamma_{m}\}$ of $\F_{q^{m}}$ over 
$\F_q$. 
For all $i\in[t]$ define the map $f^i: \F_{q}^{n_i \times m_i} \to \F_{q^{m}}^{n_i}$
as $f^i(X)_j:= \sum_{\ell=1}^{m_i} X_{j,\ell}  \, \gamma_\ell$,
for all $X \in \F_{q}^{n_i \times m_i}$ and $j\in[n_i]$.
This gives rise to a map
\begin{equation} \label{def_f}
  f:  \Pi \longrightarrow \F_{q^{m}}^N,\quad 
  (X_1,...,X_t)\longmapsto (f^1(X_1),...,f^t(X_t)).
\end{equation}
Observe that for all $i$ and all $X,X' \in \F_{q}^{n_i \times m_i}$ the Hamming weight
of $f^i(X)-f^i(X')$ equals the number of nonzero rows of~$X-X'$ and thus is
at least the rank of~$X-X'$. Therefore, for $X,X' \in \Pi$, the Hamming distance between
$f(X)$ and $f(X')$ is at least the sum-rank distance between $X$ and~$X'$. In particular, $f$ is injective 
(and bijective when $m_1= \cdots =m_t$).

All of this shows that the image $f(C)$ of $C$ is a code
in $\F_{q^{m}}^N$ with the same cardinality as $C$ and whose minimum Hamming distance is at least $d$. 
Thus $|C| \le \mA_{q^{m}}(N,d)$, as claimed.
The specific bounds are well known, see e.g. \cite[Ch.~2]{pless}.
\end{proof}

For the proof of Theorem~\ref{th:singl} we need the following lemma.

%%%%%%%%%%%%%%%%%%%%%%%%
\begin{lemma}\label{L-Maximize}
Let $0\leq r< N$ and $\mU_r=\{(u_1,\ldots,u_t)\in\N_0^t\mid 0\leq u_i\leq n_i,\,\sum_{i=1}^t u_i=r\}$. Set $K:=\max\{\sum_{i=1}^t m_i u_i\mid (u_1,\ldots,u_t)\in\mU_r\}$.
Let $j\in[t]$ and $\delta\in[n_j-1]$ be the unique integers such that $r=\sum_{i=1}^{j-1}n_i+\delta$.
Then 
\[ 
   K=\sum_{i=1}^{j-1}m_in_i+m_j\delta.
\]
\end{lemma}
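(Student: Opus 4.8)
The plan is to prove that the linear objective $\sum_{i=1}^t m_i u_i$ over the discrete simplex $\mU_r$ is maximized by greedily filling up the coordinates in order of decreasing weight. Since we have arranged $m_1 \ge m_2 \ge \cdots \ge m_t$, the greedy choice is to set $u_1 = n_1, u_2 = n_2, \ldots, u_{j-1} = n_{j-1}$, then $u_j = \delta$, and $u_i = 0$ for $i > j$. This vector indeed lies in $\mU_r$ because $\sum_{i=1}^{j-1} n_i + \delta = r$, and its objective value is exactly $\sum_{i=1}^{j-1} m_i n_i + m_j \delta$, so this value is a lower bound for $K$.

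The substance is the reverse inequality: every $(u_1,\ldots,u_t) \in \mU_r$ satisfies $\sum_{i=1}^t m_i u_i \le \sum_{i=1}^{j-1} m_i n_i + m_j \delta$. First I would reduce to comparing an arbitrary feasible vector $\bm{u}$ with the greedy vector $\bm{u}^\ast$ defined above via an exchange argument. Given any $\bm{u} \ne \bm{u}^\ast$, there must exist indices $a < b$ with $u_a < n_a$ (room to increase in an earlier, heavier coordinate) and $u_b > 0$ (something to take from a later, lighter coordinate); such a pair exists precisely because $\bm{u}$ has the same coordinate sum $r$ as $\bm{u}^\ast$ but is not equal to it. Moving one unit from position $b$ to position $a$ keeps the vector in $\mU_r$ and changes the objective by $m_a - m_b \ge 0$, so it does not decrease. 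Iterating this finitely many times (the process terminates, e.g. because a suitable potential function such as $\sum_i i \cdot u_i$ strictly decreases at each step, or simply because there are finitely many vectors in $\mU_r$ and the objective is non-decreasing and we can always make progress toward $\bm{u}^\ast$) transforms $\bm{u}$ into $\bm{u}^\ast$ without ever decreasing $\sum_i m_i u_i$. Hence $\sum_i m_i u_i \le \sum_i m_i u^\ast_i = \sum_{i=1}^{j-1} m_i n_i + m_j \delta$ for all feasible $\bm{u}$, which gives $K \le \sum_{i=1}^{j-1} m_i n_i + m_j \delta$ and completes the proof.

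Alternatively — and perhaps cleaner to write up — I would avoid the iteration by an Abel summation (summation by parts) argument: write $\sum_{i=1}^t m_i u_i = \sum_{i=1}^t (m_i - m_{i+1}) \bigl(\sum_{k=1}^i u_k\bigr)$ with the convention $m_{t+1} = 0$, note that each coefficient $m_i - m_{i+1} \ge 0$ by the ordering assumption, and then bound the partial sums $\sum_{k=1}^i u_k$ from above: since each $u_k \le n_k$ we get $\sum_{k=1}^i u_k \le \sum_{k=1}^i n_k$, and since $\sum_{k=1}^t u_k = r$ we also get $\sum_{k=1}^i u_k \le r$ (the constraint only bites once $\sum_{k=1}^i n_k$ exceeds $r$, i.e. for $i \ge j$). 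Combining, $\sum_{k=1}^i u_k \le \min\{n_1+\cdots+n_i,\, r\}$, and this upper bound is attained simultaneously for all $i$ by $\bm{u}^\ast$. Substituting the bound back into the Abel-summed expression yields exactly $K = \sum_{i=1}^{j-1} m_i n_i + m_j \delta$ after telescoping.

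I do not anticipate a genuine obstacle here; the only point requiring a little care is the bookkeeping in the Abel summation — getting the telescoping right and checking that $\min\{n_1+\cdots+n_i, r\}$ evaluates to $n_1+\cdots+n_i$ for $i < j$, to $n_1+\cdots+n_{j-1}+\delta$ would-be value but actually $r$ for $i \ge j$, so that plugging into $\sum_i (m_i - m_{i+1})\min\{\cdots\}$ and telescoping produces $\sum_{i=1}^{j-1} m_i n_i + m_j\delta$. The exchange-argument route sidesteps this computation at the cost of a slightly informal "iterate until done" step, so I would likely present the Abel summation version for rigor and brevity.
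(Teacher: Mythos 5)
Your proposal is correct, and both of your routes differ from the paper's in mechanics, though all rest on the same greedy principle. The paper proves the upper bound in a single algebraic stroke: it substitutes $\alpha_i=n_i-u_i$, rewrites the constraint as $\delta=\sum_{i\geq j}n_i-\sum_{i=1}^t\alpha_i$, and uses the ordering once in each direction, namely $\sum_{i<j}m_i\alpha_i\geq m_j\sum_{i<j}\alpha_i$ and $\sum_{i\geq j}m_iu_i\leq m_j\sum_{i\geq j}u_i$, after which the remaining terms cancel exactly and leave $\sum_{i<j}m_in_i+m_j\delta$. Your Abel-summation argument instead decomposes the objective as $\sum_{i=1}^t(m_i-m_{i+1})S_i$ with $S_i=u_1+\cdots+u_i$ and $m_{t+1}=0$, bounds each partial sum by $\min\{n_1+\cdots+n_i,\,r\}$, and notes that the greedy vector attains all of these bounds simultaneously (the step where $\min$ switches to $r$ at $i\geq j$ uses $\delta\leq n_j-1$, so $n_1+\cdots+n_i>r$ there); this majorization-style presentation is more transparent about \emph{why} the greedy vector wins for any non-increasing weight sequence, at the cost of the bookkeeping you flag. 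Your exchange argument is also sound, but one point is looser than you state: the existence of a pair $a<b$ with $u_a<n_a$ and $u_b>0$ does not follow merely from the sums agreeing and $\bm{u}\neq\bm{u}^\ast$; the correct justification is that the only vector in $\mU_r$ admitting no such pair is of the form $(n_1,\ldots,n_{k-1},u_k,0,\ldots,0)$, and the sum condition forces this to be $\bm{u}^\ast$ itself, so a terminating exchange process (your potential $\sum_i i\,u_i$ handles termination) must end at $\bm{u}^\ast$. Either write-up would serve as a valid replacement for the paper's proof; the paper's is the shortest, yours makes the optimality structure more visible.
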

%%%%%%%%%%%%%%%%%%%%%%%%%

\begin{proof}
First of all, $u=(n_1,\ldots,n_{j-1},\delta,0,\ldots,0)$ is in $\mU_r$ and thus $K\geq\tilde{K}:=\sum_{i=1}^{j-1}m_in_i+m_j\delta$.
For the converse we proceed as follows. Let $(u_1,\ldots,u_t)\in\mU_r$ and set $\alpha_i=n_i-u_i$ for $i\in[t]$. Then $\alpha_i\geq 0$ and $\sum_{i=1}^tu_i=r=\sum_{i=1}^{j-1}n_i+\delta$ implies
$\delta=\sum_{i=j}^t n_i-\sum_{i=1}^t\alpha_i$. Using $m_1\geq\ldots\geq m_t$ we obtain
\begin{align*}
   \sum_{i=1}^t m_i u_i&= \sum_{i=1}^{j-1}m_in_i-\sum_{i=1}^{j-1}m_i\alpha_i+\sum_{i=j}^t m_iu_i
       \leq\tilde{K}-m_j\delta-m_j\sum_{i=1}^{j-1}\alpha_i+m_j\sum_{i=j}^tu_i\\
      &=\tilde{K}-m_j \left(\sum_{i=j}^tn_i-\sum_{i=1}^t \alpha_i+\sum_{i=1}^{j-1}\alpha_i-\sum_{i=j}^tu_i \right)=\tilde{K}.
\end{align*}
This shows $K\leq\tilde{K}$.
\end{proof}

\begin{proof}[Proof of Theorem~\ref{th:singl}]
Fix any $(u_1,... ,u_t) \in \N_0^t$ with $0 \le u_i \le n_i$ and $\sum_{i=1}^t u_i =d-1$.
For each $i \in \{1,...,t\}$ choose~$u_i$ coordinates of $\{1,...,n_i\}$
and delete the corresponding rows from the matrices in $\F_q^{n_i \times m_i}$. 
This provides us with a projection
\[
     \tau:\Pi\longrightarrow\hat{\Pi}:=\Pi_q\big((n_1-u_1) \times m_1\mid\cdots\mid(n_t-u_t)\times m_t\big).
\]
Consider now a code $C\subseteq\Pi$ with sum-rank distance~$d$.
Its image is $\tau(C) \subseteq\hat{\Pi}$.
For any distinct $X,Y \in C$ we have $\srk(X-Y)\geq d$ and therefore $\srk(\tau(X),\tau(Y))\geq1$.
This shows that~$\tau$ is injective and hence 
$|C|=|\tau(C)| \leq |\hat{\Pi}|=q^{\sum_{i=1}^t(n_i-u_i)m_i}$.
The best bound arises by maximizing $\sum_{i=1}^t u_im_i$ subject to the above conditions on $(u_1,\ldots,u_t)$.
Hence Theorem~\ref{th:singl} follows from Lemma~\ref{L-Maximize}.
\end{proof}

\begin{proof}[Proof of Lemma \ref{L-sphere}]
	Fix any $X \in \Pi$ and an integer $r \ge 0$. 
	The bijection  on~$\Pi$ given by the shift $Y \longmapsto X-Y$ implies that 
	\[
	\big|\{Y \in \Pi \mid \srk(X-Y) \le r\} \big| =  \big|\{Y \in \Pi \mid \srk(Y) \le r\} \big|,
	\]
	and thus it remains to compute the right hand side. 
	Using the well-known formula for the number of $(n_i\times m_i)$-matrices of rank~$s_i$ we easily obtain
	\begin{align*}
		\big|\{Y \in \Pi \mid \srk(Y) \le r\} \big|&=\sum_{s=0}^r  \ \sum_{\substack{(s_1,...,s_t) \in \N_0^t \\ s_1+ \cdots + s_t=s}}
		\big|\{(Y_1,...,Y_t) \in \Pi \mid \rk(Y_i) = s_i \mbox{ for all $i$}\} \big| \\
		&= \sum_{s=0}^r  \ \sum_{\substack{(s_1,...,s_t) \in \N_0^t \\ s_1+ \cdots + s_t=s}} \ \prod_{i=1}^t \qbin{n_i}{s_i}{q} \prod_{j=0}^{s_i-1} (q^{m_i} - q^j).
		\qedhere
	\end{align*}
\end{proof}

\begin{proof}[Proof of Theorem~\ref{th:prpack}]
Consider the code~$C$ given in Theorem~\ref{th:prpack} and recall the isomorphism~$\psi$ from~\eqref{e-Psi}.
Deleting from all matrices in $\psi(C)$ the first $d-3$ rows provides us with a code $\tilde{C}\subseteq\F_q^{(N-d+3)\times M}$.
In terms of matrix tuples $(X_1,\ldots,X_t)\in C$, this means that we delete a total of $d-3$ rows of the individual matrices 
starting from the left.
As a result $X_1,\ldots,X_{\ell}$ are entirely deleted and the topmost $n_{\ell+1}-\delta$ rows of $X_{\ell+1}$ are deleted, where~$\ell$ 
and~$\delta$ are as in Theorem~\ref{th:prpack}.
Thus $C':=\psi^{-1}(\tilde{C})$ is a sum-rank metric code in the space
$\Pi' :=\Pi_q((n_{\ell+1}-\delta) \times m_{\ell+1}\mid\cdots\mid n_{t} \times m_{t})$.

Since the code~$C$ has sum-rank distance~$d$, the map $C\longrightarrow C'$ is injective and thus
$|C'|=|C|$.
Moreover, $\srk(C') \ge 3$ because deleting one row in one block causes a drop of at most one in the sum-rank distance. 
We can therefore apply the sphere-packing bound of Theorem \ref{th:pack} to~$C'$ with distance~$3$ and arrive at the desired bound
\[
        |C| = |C'| \le \frac{|\Pi'|}{V_1(\Pi')}
         = \frac{q^{\sum_{i=\ell+1}^{t} n_i m_i -\delta m_{\ell+1}}}{1+ \sum_{i=\ell+2}^{t} (q^{n_{i}}-1)\qbinsmall{m_i}{1}{q} + (q^{n_{\ell+1}-\delta}-1)\qbinsmall{m_{\ell+1}}{1}{q}}.
        \qedhere
\]
\end{proof}

\begin{proof}[Proof of Theorem~\ref{th:totalwt}] 
Define the total distance of~$C$, that is,
\[
    S:=\sum_{\substack{X,Y\in C \\ X\neq Y}} \srk(X-Y)= \sum_{i=1}^t \, \sum_{\substack{X,Y\in C \\ X\neq Y}}  \rk(X_i-Y_i).
\]
We have $S \geq d|C|(|C|-1)$. 
In order to derive an upper bound for~$S$ let
\begin{align*}
   K_{i,1}&:=|\{(X,Y) \in C^2 \mid \rk(X_i-Y_i) = n_i\}|,\\
   K_{i,2}&:=|\{(X,Y) \in C^2 \mid X\neq Y, \, \rk(X_i-Y_i) \leq n_i - 1\}|.
\end{align*}
Then $K_{i,1}+K_{i,2} = |C|(|C|-1)$ for each $i \in [t]$. 
Furthermore,
\begin{equation}\label{e-SBound}
    S \leq \sum_{i=1}^t (n_i K_{i,1} +(n_i-1)K_{i,2}) =   \sum_{i=1}^t\big(n_i|C|(|C|-1) - K_{i,2}\big)
          =  N|C|(|C|-1) - \sum_{i=1}^t K_{i,2}.
\end{equation}
For our goal to derive a lower bound on $K_{i,2}$ we need the following notation.
For $X=(X_1,...,X_t)$ write $X_i^1$ for the first row of $X_i$.
Then distinct $X,Y \in C$ clearly satisfy $\rk(X_i-Y_i) \leq  n_i -1$ whenever $X_i^1=Y_i^1$. 
Therefore, the number of pairs of distinct $X,Y \in C$ such that $X_i^1=Y_i^1$ is a lower 
bound for the number of such pairs satisfying $\rk(X_i-Y_i) \leq  n_i -1$. 
Setting for $A \in \F_q^{1 \times m_i}$
\[
    N_{i,A}:=|\{ X \in C \mid X_i^1 =A\}|,
\]
we arrive at
\[
      \sum_{A \in \F_q^{1\times m_i}} N_{i,A} = |C|\ \text{ and }\
K_{i,2} \geq \displaystyle \sum_{A \in \F_q^{1\times m_i}} N_{i,A}(N_{i,A}-1).
\]
Together with~\eqref{e-SBound} this gives us
\begin{eqnarray*}
 d|C|(|C|-1) \leq S(C) &\leq & N|C|(|C|-1) - \sum_{i=1}^t K_{i,2} \\ &\leq& N|C|(|C|-1) - \sum_{i=1}^t \sum_{A \in \F_q^{1\times m_i}} N_{i,A}(N_{i,A}-1),\\
       & = & N|C|(|C|-1) - \sum_{i=1}^t \sum_{A \in \F_q^{1 \times m_i}} N_{i,A}^2 + \sum_{i=1}^t \sum_{A \in \F_q^{1\times m_i}} N_{i,A}, \\
       & = & N|C|(|C|-1) + t|C| - \sum_{i=1}^t \sum_{A \in \F_q^{1\times m_i}} N_{i,A}^2, \\
       & \leq & N|C|(|C|-1) + t |C| - \sum_{i=1}^t q^{-m_i} \bigg(\sum_{A \in \F_q^{1\times m_i}} N_{i,A}\bigg)^2 ,\\
       & = & N|C|(|C|-1) + t |C| - |C|^2 Q,
\end{eqnarray*}
where the penultimate step follows from the Cauchy-Schwarz inequality.
Thus we have $d(|C|-1) \leq  N(|C|-1) + t - |C| Q$, 
which in turn yields
\[
    |C| \leq \frac{d - N + t}{d-N + Q},
\]
as long as $d>N - Q$. 
\end{proof}

\begin{proof}[Proof of Theorem~\ref{th:cov}] 
Towards a contradiction, suppose that the largest dimension, $k$, of a linear code $C \le \Pi$ with sum-rank distance $d$ 
satisfies 
\[
q^k<\frac{|\Pi|}{V_{d-1}(\Pi)}.
\]
Then there must exist $X \in \Pi$ such that $\srk(X-Y) \ge d$ for all $Y \in C$. 
It is easy to see that $C \oplus \langle X \rangle \le \Pi$ is a code of sum-rank distance $d$ that strictly contains $C$. 
This contradicts the maximality of~$C$.
\end{proof}

%%%%%%%%%%%%%%%%%%%
\subsection{Comparisons} \label{sub:compar}
In this subsection we provide for each of the bounds, except for the Induced Hamming Bound, an 
example of an ambient space~$\Pi$ and a distance~$d$ for which that bound is superior to all other bounds.
This is true for linear as well as general sum-rank metric codes.
In the case where $m=m_1= \cdots =m_t$, the Induced Hamming Bound (Theorem~\ref{th:induced}) is clearly inferior 
to the sphere-packing bound (Theorem~\ref{th:pack}) because the Hamming sphere of radius~$r$ in $\F_{q^{m}}^N$ is contained in 
the sum-rank sphere of the same radius in~$\Pi$, where we identify $\F_{q^{m}}^N$ and $\Pi$ via the map~$f$ defined in~\eqref{def_f}.
For the case of general~$m_i$, the embedding~$f$ and subsequent comparison between the Hamming weight and sum-rank weight
suggests that in fact the Sphere-Packing Bound is always superior to the Induced Hamming Bound.

The examples below show, in contrast to MRD codes, that MSRD codes do not exist for all possible parameter 
sets.
This is not surprising because the sum-rank is a hybrid of the Hamming distance and the rank distance and thus, 
just like MDS codes, MSRD codes may not exist over arbitrary fields.

%%%%%%%%%%%%%%%%%%%
\begin{example}\label{E-Sing-PSP-TW}
Let $\F=\F_2,\,t=13$, and 
$\Pi=\Pi_2(2\times2\mid\underbrace{1\times2\mid\cdots\mid1\times2}_{7\text{ blocks}}\mid\underbrace{1\times1\mid\cdots\mid1\times1}_{5\text{ blocks}})$.
We have the following data for various values~$d$ of the sum-rank distance (where we provide the floor function of the values). 
The best bound is given in bold and is unique in each case. 

\begin{center}
\begin{tabular}{|c|c|c|c|c|}
\hline
$d$ & 8 &9 &11\\ \hline\hline
\text{Singleton}& 512 & \textbf{128} & 16\\ 
\text{Induced Plotkin} & -- & -- &  22 \\
\text{Induced Elias} &9748 & 2036 & 43 \\
\text{Sphere-Packing} & 1502 & 232 & 50 	\\
\text{ Proj.\ Sphere-Packing} & \textbf{455} & 136 & 14 \\
\text{Total-Distance} & -- & -- & \textbf{6} \\ \hline
\end{tabular}
\end{center}

\noindent For $d=9$ the best bound is the Singleton Bound.
Later with the aid of Corollary~\ref{exclMW}, however, we will see that there is no linear code meeting that bound;
see Example~\ref{E-Sing-PSP-TW2}.
Finally note that for $d=8$ and $d=11$ the best bounds also provide the unique best bound for linear codes.
\end{example}
%%%%%%%%%%%%%%%%%%%

%%%%%%%%%%%%%%%
\begin{example}\label{E-SP-IndPl-Elias}
Let $\F=\F_2$. Let $\Pi=\Pi_2(2\times 2\mid\cdots\mid2\times 2)$ with~$t$ blocks, and where~$t$ and the rank-distance~$d$ 
are specified in the table.
We have the following data.

\begin{center}
\begin{tabular}{|c|c|c|c|c|c|c|}
\hline
$t/d$ & 4/5 & 6/8 &7/10 &9/14 &17/32\\ \hline\hline
\text{Singleton}& 256 &1024 & 1024 & 1024 & 64\\ 
\text{Induced Plotkin} &-- & -- & -- &  \textbf{28} & \textbf{4} \\
\text{Induced Elias} &366 &721 & \textbf{391} & 56 & 10\\
\text{Sphere-Packing}& \textbf{119} & 958 & 863 & 833 &418	\\
\text{ Proj.\ Sphere-Packing} & 146 &\textbf{528} & 528 & 528 & 46\\
\text{Total-Distance}& -- & -- & -- & -- & 6\\ \hline
\end{tabular}
\end{center}

\noindent For $t=4,7,9$ the unique best bound also provides the unique best bound for linear codes.
For $t=17$, the above tells us that a linear code~$C$ with distance $32$ satisfies $\dim(C) \le 2$.
A code attaining this bound (and in fact with minimum distance $2t$) is given by the repetition code
$\{(M,M,...,M) \mid M \in C'\}$, where $C' \le \F^{2 \times 2}$ is an MRD code of minimum rank distance~$2$.
\end{example}
%%%%%%%%%%%%%%

We conclude this section with various linear codes that meet some of the bounds and thereby illustrate the tightness of the linear version of 
those bound.

The first two examples are MSRD codes. 
It is worth noting that in the first example, the number of blocks,~$t$, exceeds $|\F|+1$.
This stands in contrast to MDS codes where the length is believed to be upper bounded by $|\F|+1$ (minus some exceptional cases).
Later in Theorem~\ref{coro:nec} we will derive an upper bound for~$t$ for MSRD codes with certain parameters.

%%%%%%%%%%%%%%%%%%%
\begin{example}\label{E-MSRD8}
Consider $\Pi=\Pi_q(1\times 2\mid1\times 2\mid1\times 2\mid1\times 2\mid1\times 2\mid1\times 1\mid1\times 1\mid1\times 1)$.
Then the $3$-dimensional code
\[
  C=\left\langle 
\begin{array}{c}
    \big((1,0),\,(1,0),\,(1,0),\,(1,0),\,(1,0),\,(1),(0),(0)\big),\\[.7ex]
    \big((1,0),\,(0,1),\,(0,1),\,(0,1),\,(0,1),\,(0),(1),(0)\big),\\[.7ex]
   \big((0,1),\,(1,0),\,(0,1),\,(1,1),\,(1,1),\,(0),(0),(1)\big)\;\mbox{}
\end{array}\right\rangle\leq\Pi
\]
is an MSRD code of sum-rank distance~$6$ for any finite field~$\F$.  
For $\F=\F_2$ the code  also attains the linear version of the Total-Distance Bound and of the Projective 
Sphere-Packing Bound, whereas the linear version of any other bound is at least~$5$.    
Later in Example~\ref{E-MSRD111Ext} we will provide a generalization of this construction.
\end{example}
%%%%%%%%%%%%%%%%%%%

%%%%%%%%%%%%%%%%%%%%
\begin{example}\label{E-MSRD6}
Consider $\F=\F_3$ and $\Pi=\Pi_3(2\times2\mid1\times2\mid1\times2\mid1\times2)$. Then the code
\[
 C=\left\langle 
      \begin{array}{ll}
      \bigg(\SmallMat{2}{2}{1}{0},\,(2,1),\,(1,0),\,(0,0)\bigg),&
      \bigg(\SmallMat{0}{2}{2}{1},\,(1,1),\,(0,1),\,(0,0)\bigg),\\[2ex]
      \bigg(\SmallMat{1}{0}{1}{2},\,(2,1),\,(0,0),\,(1,0)\bigg),&
      \bigg(\SmallMat{2}{1}{1}{0},\,(1,0),\,(0,0),\,(0,1)\bigg)\\[2ex]
\end{array}\right\rangle\leq\Pi
\]
is an MSRD code with sum-rank distance $d=4$. 
The code  also attains the linear version of the Projective Sphere-Packing Bound, whereas any other bound is at least~$5$.
A non-existence criterion for MSRD codes derived later in this paper (Corollary~\ref{exclMW}) shows that for $\F=\F_2$ no such MSRD code exists.
\end{example}
%%%%%%%%%%%%%%%%%%%%%

\newpage

%%%%%%%%%%%%%%%%%%%%%%
\begin{example}\label{E-MSRD7}
Consider $\F=\F_2$ and $\Pi=\Pi_2(2\times2\mid2\times2\mid1\times2\mid1\times2)$.
Then the linear code $C\leq \Pi$ generated by the~$7$ matrix tuples
\[
      \begin{array}{ll}
      \bigg(\SmallMat{0}{1}{1}{0},\,\SmallMat{0}{0}{0}{0},\,(1,0),\,(0,0)\bigg),&
      \bigg(\SmallMat{1}{0}{0}{0},\,\SmallMat{1}{0}{1}{0},\,(0,1),\,(0,0)\bigg),\\[2ex]
      \bigg(\SmallMat{0}{1}{0}{1},\,\SmallMat{1}{1}{0}{0},\,(0,0),\,(1,0)\bigg),&
      \bigg(\SmallMat{0}{1}{1}{0},\,\SmallMat{0}{0}{0}{1},\,(0,0),\,(0,1)\bigg),\\[2ex]
      \bigg(\SmallMat{0}{1}{0}{0},\,\SmallMat{1}{1}{1}{0},\,(0,0),\,(0,0)\bigg),&
      \bigg(\SmallMat{1}{0}{0}{1},\,\SmallMat{0}{1}{1}{0},\,(0,0),\,(0,0)\bigg),\\[2ex]
      \bigg(\SmallMat{1}{1}{1}{0},\,\SmallMat{0}{0}{1}{1},\,(0,0),\,(0,0)\bigg)&
   \end{array}
\]
attains the linear version of the Sphere-Packing Bound for $d=3$ (and thus also of the Projective Sphere-Packing Bound), 
while all other bounds are at least~$8$ (in the linear version).
\end{example}
%%%%%%%%%%%%%%%%%%%%

In Section~\ref{S-Constr} we will return to some of these examples and provide generalizations.

%%%%%%%%%%%%%%%%%%%%%%%%%%
\section{Asymptotic Bounds}\label{S-Asymp}
In this section we derive asymptotic versions for the bounds presented in Section~\ref{sec:bounds} 
as $t\rightarrow\infty$, where as usual,~$t$ is the number of blocks.
As the reader will see, the Singleton Bound of Theorem~\ref{th:singl}, the Projective Sphere-Packing Bound of Theorem~\ref{th:prpack} and the Induced Singleton Bound of Theorem \ref{th:induced} are all asymptotically the same. The Sphere-Packing Bound of Theorem \ref{th:singl} and the Total-Distance Bound of Theorem \ref{th:totalwt} are asymptotically the best upper bounds currently known.

Recall from~\eqref{e-nimi} that we always assume that $m_i\geq m_{i+1}$.
As a consequence, we will now impose that $(m_i)$ is a non-increasing sequence of positive integers, which in turn
implies that the sequence stabilizes. 

Throughout this section we fix the following notation.

%%%%%%%%%%%%%%%%%
\begin{notation}\label{Nota}
	Let $(m_i)_{i\in\N},\,(n_i)_{i\in\N}$ be sequences with the following properties and further data. 
	\begin{romanlist}
		\item $1\leq n_i\leq m_i$ and $n_i,\,m_i\in\N$ for all $i\in\N$.
		\item The sequence $(m_i)$ is monotonically non-increasing. 
		\item Let $s\in\N$ be the least integer such that $m_i=m_s$ for all $i>s$. We write $\hat{m}:=m_s$. 
		\item Set $\nmax=\max\{n_i\mid i>s\}$ and $\nmin=\min\{n_i\mid i>s\}$.
		\item For any $t\in\N$ we define $\Pi^{(t)}=\Pi_q(n_1\times m_1\mid\cdots\mid n_t\times m_t)$ and 
		\[
		N_t=\sum_{i=1}^t n_i,\quad D_t=\dim(\Pi^{(t)})=\sum_{i=1}^t n_i m_i.
		\]
	\end{romanlist}
\end{notation}
%%%%%%%%%%%%%%%%%

%%%%%%%%%%%%%%%%%%%
\begin{definition}\label{D-Aqd}
	Let $d\in\R_{\geq0}$. We define
	\[
	A_q(d):=A_q\big(t,(n_i),(m_i),d\big):=\max\big\{|C|\,\big|\, C\subseteq\Pi^{(t)},\, |C| \ge 2, \, \srk(C)\geq d\big\},
	\]
where the maximum over the empty set is taken to be 1.
	Furthermore, for any $\eta\in[0,1]$ we define
	\[
	\alpha_q(\eta):=\limsup_{t\rightarrow\infty}\frac{\log_q\big(A_q(\eta N_t)\big)}{D_t}.
	\]
	The quantity $\frac{\log_q(A_q(d))}{D_t}$ is the \textbf{rate} of an optimal code $C \subseteq \Pi^{(t)}$ with sum-rank distance at least~$d$, provided that such a code exists (and is zero otherwise).
\end{definition}
%%%%%%%%%%%%%%%%%%%%

Clearly, the induced bounds from Theorem~\ref{th:induced} lead to the asymptotic bounds as known for the Hamming metric; 
see for instance~\cite[Sec.~2.10]{pless}.
For the sake of comparison, and the convenience of the reader, we summarize these below. 
\begin{remark}
	In applying the asymptotic induced upper bounds on the rate of a code for the sum-rank distance, the parameter $m$ that appears in the bounds denotes the maximal value of the $m_i$ (which given the assumptions that the $m_i$ are non-increasing, is $m_1$.) Recall that we identify a code in $\Pi^t$ with one in $\F_{q^m}^{N_t}$, as outlined in the proof of Theorem \ref{th:induced}. Therefore, in applying the asymptotic versions of the induced bounds we use only the datum $m_1$ provided by the sequences~$(m_i)$ and~$(n_i)$. 
\end{remark}
First recall that for any prime power $Q$, we denote by $h_Q(x)$ the Hilbert entropy function defined on $[0,1]$ by
$$h_Q(x) = 	\left\{ \begin{array}{ll} 0 & \text{ if } x = 0,\\
	x\log_Q(Q-1)-x\log_Q(x) -(1-x)\log_Q(1-x)& \text{ if } 0 < x < 1-Q^{-1}.
\end{array}\right.$$
It is well-known \cite[Lem.~2.10.3]{pless} that 
$h_Q(x) = \lim_{t \rightarrow \infty} t^{-1}V_{xt}(\F_Q^t)$, where  $V_{xt}(\F_Q^t)$ is the volume of the Hamming sphere (defined in Theorem~\ref{th:induced}).

\begin{theorem}[\textbf{Asymptotic Induced Bounds}]\label{th:asinduced}
Let $m=m_1$ and $\eta \in [0,1]$. We have the following upper bounds on $\alpha_q(\eta)$.
\\[1ex]
\mbox{}\hspace*{-.6em}\begin{tabular}{lp{11cm}}
  \textbf{Induced Singleton Bound:}&  $\alpha_q(\eta) \leq 1-\eta$.\\[1.3ex]
  \textbf{Induced Hamming Bound:}& $\alpha_q(\eta) \leq 1-h_{q^m}(\eta/2)$ for $\eta \in (0,1-q^{-m})$.\\[1.3ex]
   \textbf{Induced Plotkin Bound:}& $\alpha_q(\eta) \leq 
		\left\{ \begin{array}{ll} 1-\eta(1-q^{-m})^{-1} & \text{ if } \eta \in [0,1-q^{-m}],\\
			0 & \text{ otherwise. }
		\end{array}\right.$ \\[3ex]
  \textbf{Induced Elias Bound:}&
  $\alpha_{q}(\eta) \leq 1- h_{q^m}\big(r - \sqrt{r(r-\eta)}\,\big)$ for $\eta \in (0,r)$ where $r=1-q^{-m}$. 
\end{tabular}
\end{theorem}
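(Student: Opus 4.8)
The plan is to derive each of the four asymptotic induced bounds directly from the corresponding finite bound in Theorem~\ref{th:induced} by substituting $d=\eta N_t$, taking logarithms base $q$, dividing by $D_t$, and passing to the limit superior as $t\to\infty$. The key observation that makes this routine is that, under Notation~\ref{Nota}, the sequence $(m_i)$ stabilizes at $\hat m$ while $m=m_1$ governs the induced bounds, and both $N_t$ and $D_t$ grow linearly in $t$ with $D_t/N_t$ bounded between $1$ and $m_1$; in particular $D_t\to\infty$, so dividing a bound of the shape $q^{m\cdot(\text{linear in }N_t)}$ by $q^{D_t}$ and letting $t\to\infty$ picks out exactly the leading-order coefficient.

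First I would handle the Induced Singleton Bound: from $|C|\le q^{m(N_t-d+1)}$ with $d=\eta N_t$ we get $\log_q(A_q(\eta N_t))\le m(N_t-\eta N_t+1)$, hence $\alpha_q(\eta)\le\limsup_t \frac{m N_t(1-\eta)+m}{D_t}$. Since $D_t=\sum_{i=1}^t n_im_i$ and, for $i>s$, $m_i=\hat m$, one has $D_t = m N_t + O(1)$ only when $m_1=\hat m$; in general $D_t/N_t\to\hat m$ is false — rather $D_t = \sum n_i m_i$ and $N_t=\sum n_i$ both grow, with the finitely many initial terms contributing only $O(1)$, so $D_t/(mN_t)\to \hat m/m$... — wait, this shows the stated bound $1-\eta$ requires $m_1=\hat m$ or a reinterpretation. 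I would resolve this by noting (as the preceding Remark does) that the induced bounds are applied after embedding into $\F_{q^m}^{N_t}$, where the ambient dimension over $\F_q$ is $mN_t$, not $D_t$; so the rate in that picture is $\log_q(A_q)/(mN_t)$, and one checks $D_t \ge mN_t$ is false in general — so the cleanest route is to prove $\limsup_t \frac{\log_q A_q(\eta N_t)}{D_t}\le \limsup_t \frac{\log_{q^m}A_q(\eta N_t)}{N_t}$, which holds because $D_t\ge N_t$ (as $m_i\ge 1$) hmm but that gives the wrong direction. The main obstacle I anticipate is precisely reconciling the normalization by $D_t$ in Definition~\ref{D-Aqd} with the $\F_{q^m}^{N_t}$-normalization natural to the Hamming induced bounds; I expect the intended reading is that $D_t$ should be replaced by $mN_t$ for these particular (induced) bounds, consistent with the Remark, and modulo that identification every claim becomes the classical asymptotic Hamming bound.

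Granting that normalization, the remaining three bounds follow from the classical computations: for the Induced Hamming Bound, $|C|\le q^{mN_t}/V_{\lfloor(d-1)/2\rfloor}(\F_{q^m}^{N_t})$ gives, after dividing the exponent by $mN_t$ and using $\lim_t N_t^{-1}\log_{q^m}V_{xN_t}(\F_{q^m}^{N_t}) = h_{q^m}(x)$ (cited from \cite[Lem.~2.10.3]{pless}) with $x=\eta/2$, the bound $\alpha_q(\eta)\le 1-h_{q^m}(\eta/2)$ on $(0,1-q^{-m})$. For the Induced Plotkin Bound, the finite bound is only valid when $d>(q^m-1)N_t/q^m$, i.e.\ $\eta>1-q^{-m}$, in which regime $|C|$ is bounded by a constant independent of $t$, so the rate is $0$; for $\eta\le 1-q^{-m}$ one instead shortens/uses the standard Plotkin argument on a subcode to get the linear interpolation $1-\eta(1-q^{-m})^{-1}$, exactly as in \cite[Sec.~2.10]{pless}. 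For the Induced Elias Bound, one optimizes the free parameter $w=rN_t-\sqrt{rN_t(rN_t-\eta N_t)}\cdot(1+o(1))$ with $r=1-q^{-m}$, substitutes into the finite bound, observes the rational prefactor is subexponential in $t$ (hence contributes $0$ to the rate), and applies the entropy limit again to obtain $\alpha_q(\eta)\le 1-h_{q^m}(r-\sqrt{r(r-\eta)})$. I would present these four arguments compactly, citing \cite{pless} for the underlying Hamming-metric asymptotics and emphasizing that the only sum-rank-specific input is the embedding $f$ from the proof of Theorem~\ref{th:induced} together with the normalization convention discussed in the Remark.
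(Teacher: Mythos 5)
Your proposal follows essentially the same route as the paper, which in fact gives no detailed proof of Theorem~\ref{th:asinduced}: it simply applies the classical asymptotic Hamming-metric bounds of \cite[Sec.~2.10]{pless} to the image of the code under the embedding $f$ from the proof of Theorem~\ref{th:induced} into $\F_{q^{m_1}}^{N_t}$, with the convention of the preceding remark that $m=m_1$ — exactly the substitute-$d=\eta N_t$, take-logarithms, entropy-limit computation (plus the constant-size Plotkin regime for $\eta>1-q^{-m}$ and the choice of $w$ in the Elias bound) that you outline. The normalization tension you flag is genuine — with $\alpha_q(\eta)$ defined via $D_t$ as in Definition~\ref{D-Aqd}, the routine derivation from Theorem~\ref{th:induced} yields an extra factor $m_1/\hat{m}$ by \eqref{e-limDtNt}, which disappears precisely when all $m_i$ are equal (then $D_t=mN_t$) — but the paper does not resolve it either beyond the remark identifying the code with one in $\F_{q^m}^{N_t}$, so your reading coincides with the paper's intended argument.
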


We now focus on the other bounds of Section~\ref{sec:bounds} and derive asymptotic versions.

It would appear, intuitively, that the asymptotic bounds we derive shortly do not depend on the parameters of the first~$s$ blocks, where~$s$ is as in Notation~\ref{Nota}~(iii),
and therefore we should be able to assume that $(m_i)_{i\in\N}$ is a constant sequence. 
However, while it is straightforward to puncture on columns and retain information about the distance, the converse process is not obvious. 
Therefore, we will start with the general setup of Notation~\ref{Nota}, and the independence of the asymptotic bounds on the first~$s$ 
blocks will become clear only a posteriori.

Note that for any $t\geq s$ we have 
\begin{align}
    &D_s+\hat{m}\nmax(t-s)\geq D_t=D_s+\hat{m}(N_t-N_s)\geq D_s+\hat{m}\nmin(t-s)\label{e-Dtineq},\\[.7ex]
    &N_s+\nmax(t-s)\geq N_t\geq N_s+\nmin(t-s)\label{e-Ntineq}.
\end{align}
Using that $D_s$ and $N_s$ are constant, we thus have the following limits:
\begin{align}
	&\lim_{t\rightarrow\infty}\frac{D_t}{N_t}=\hat{m},\label{e-limDtNt}\\[.6ex]
	&\frac{1}{\hat{m}\nmax}\leq \liminf_{t\rightarrow\infty}\frac{t}{D_t}\leq
	\limsup_{t\rightarrow\infty}\frac{t}{D_t}\leq\frac{1}{\hat{m}\nmin}\label{e-limtDt},\\[.6ex]
	&\frac{1}{\nmax}\leq \liminf_{t\rightarrow\infty}\frac{t}{N_t}\leq
	\limsup_{t\rightarrow\infty}\frac{t}{N_t}\leq\frac{1}{\nmin}.\label{e-limtNt}
\end{align}

We are now ready to present the Asymptotic Singleton Bound.

%%%%%%%%%%%%%%%%%
\begin{theorem}[\textbf{Asymptotic Singleton Bound}]\label{T-ASympSingl}
	Let $\eta\in[0,1]$. Then 
	\[
	\alpha_q(\eta)\leq 1-\eta.
	\]
\end{theorem}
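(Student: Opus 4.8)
The plan is to deduce the Asymptotic Singleton Bound directly from the finite Singleton Bound of Theorem~\ref{th:singl}, taking the appropriate $\limsup$ as $t\to\infty$. Fix $\eta\in[0,1]$ and set $d=d_t:=\lceil\eta N_t\rceil$, so that a code $C\subseteq\Pi^{(t)}$ with $|C|\ge 2$ and $\srk(C)\ge\eta N_t$ has $\srk(C)\ge d_t$. If no such code exists, then $A_q(\eta N_t)=1$ and the corresponding term in the $\limsup$ is zero, so we may assume such a code exists. Writing $d_t-1=\sum_{i=1}^{j-1}n_i+\delta$ with $0\le\delta\le n_j-1$ as in Theorem~\ref{th:singl}, we get
\[
\log_q|C|\le \sum_{i=j}^t m_i n_i - m_j\delta.
\]
The key observation is that, by Lemma~\ref{L-Maximize} (with $r=d_t-1$), the right-hand side equals $D_t - K$, where $K=\sum_{i=1}^{j-1}m_in_i+m_j\delta$ is the maximum of $\sum_i m_iu_i$ over $(u_i)\in\mU_{d_t-1}$; in particular $K\ge \hat m\,(d_t-1)$ is not quite what we want — rather we want a lower bound on $K$ of the form (roughly) $\hat m\cdot\eta N_t$ up to lower-order terms.

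The heart of the argument is therefore the estimate $K\ge \hat m(d_t-1) - O(1)$, or more precisely a lower bound on $K$ that, after dividing by $D_t$, tends to $\eta$. Here I would use that $m_i\ge\hat m$ for all $i$, hence $K=\sum_{i=1}^{j-1}m_in_i+m_j\delta\ge \hat m\big(\sum_{i=1}^{j-1}n_i+\delta\big)=\hat m(d_t-1)\ge \hat m(\eta N_t-1)$. Combining,
\[
\frac{\log_q|C|}{D_t}\le \frac{D_t-K}{D_t}\le 1-\frac{\hat m(\eta N_t-1)}{D_t}=1-\hat m\,\eta\,\frac{N_t}{D_t}+\frac{\hat m}{D_t}.
\]
Now take the $\limsup$ as $t\to\infty$: by~\eqref{e-limDtNt} we have $N_t/D_t\to 1/\hat m$, while $D_t\to\infty$ so $\hat m/D_t\to 0$. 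Therefore the bound passes to $1-\hat m\eta\cdot\tfrac1{\hat m}+0=1-\eta$, and since this holds uniformly over all admissible $C$, we conclude $\alpha_q(\eta)\le 1-\eta$.

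The main obstacle — though a mild one — is bookkeeping around the definition of $\alpha_q$: one must be careful that the $\limsup$ is of $\log_q(A_q(\eta N_t))/D_t$, that $\eta N_t$ need not be an integer (handled by passing to $d_t=\lceil\eta N_t\rceil$ and noting $\srk(C)\ge\eta N_t$ forces $\srk(C)\ge d_t$), and that the edge cases $|C|=1$ (contributing $0$) and $\eta=0$ or $\eta=1$ are covered. One should also note that when $\eta N_t>N_t$ is impossible since $\eta\le 1$, and that $j$ and $\delta$ in Theorem~\ref{th:singl} are well-defined precisely because $d_t-1<N_t$ when $\srk(C)\ge 2$, i.e. for $t$ large this is automatic; for the finitely many small $t$ it does not affect the $\limsup$. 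Everything else is a routine limit computation using~\eqref{e-limDtNt}.
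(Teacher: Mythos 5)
Your proposal is correct and takes essentially the same route as the paper: apply the finite Singleton Bound of Theorem~\ref{th:singl} with $d_t=\lceil\eta N_t\rceil$ to an optimal code, estimate the resulting exponent, and pass to the limit using $N_t/D_t\to 1/\hat m$ from \eqref{e-limDtNt} together with $D_t\to\infty$. The only cosmetic difference is that you lower-bound the removed dimension $K$ by $\hat m(d_t-1)$ via $m_i\ge\hat m$ for all $i$, whereas the paper waits until $j(t)>s$ and bounds $\hat m(N_t-N_{j(t)}+n_{j(t)})$ directly; both reduce to the same limit computation.
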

%%%%%%%%%%%%%%%%%

\begin{proof}
	Let $\eta>0$.
	We apply the Singleton Bound (Theorem \ref{th:singl}) to codes $C\subseteq\Pi^{(t)}$ with $\srk(C)\geq d=\lceil\eta N_t\rceil$ for any~$t$.
	Let $j(t)\in\N_0$ be the unique integer satisfying 
	\begin{equation}\label{e-intervaletaNt}
		\lceil\eta N_t\rceil-1\in[N_{j(t)-1},N_{j(t)}).
	\end{equation}
	Then \cref{th:singl} implies $\log_q(|C|)\leq\sum_{i=j(t)}^t m_in_i$. 
	Since $\lim_{t\rightarrow\infty}\eta N_t=\infty$, we also have $\lim_{t\rightarrow\infty}j(t)=\infty$. 
	Clearly, $j(t)\leq t$.
	Consider now~$t$ large enough so that $j(t)>s$. 
	With the aid of~\eqref{e-intervaletaNt} we arrive at
	\begin{align*}
		\log_q(|C|)&\leq \sum_{i=j(t)}^t m_in_i
		=\hat{m}(N_t-N_{j(t)}+n_{j(t)})\\
		&\leq\hat{m}(N_t-N_{j(t)}+\hat{m})\leq \hat{m}(N_t-\eta N_t+1+\hat{m}).
	\end{align*}
	Hence, applying \eqref{e-limDtNt}, we have:
	\[
	\alpha_q(\eta)\leq\limsup_{t\rightarrow\infty}\frac{\hat{m}(N_t-\eta N_t+1+\hat{m})}{D_t}
	=\limsup_{t\rightarrow\infty}\Big(\hat{m}(1-\eta)\frac{N_t}{D_t}+\frac{\hat{m}(1+\hat{m})}{D_t}\Big)=1-\eta.
	\qedhere
	\]
\end{proof}

We now turn to the Asymptotic Total-Distance Bound. We need the following lemma.
Recall Notation~\ref{Nota}.

%%%%%%%%%%%%%%%%
\begin{lemma}\label{L-limQtNt}
Let $Q_t:=\sum_{i=1}^t q^{-m_i}$.
Then 
\[
   \frac{1}{\nmax q^{\hat{m}}}\leq \liminf_{t\rightarrow\infty} \frac{Q_t}{N_t}\leq
    \limsup_{t\rightarrow\infty} \frac{Q_t}{N_t} \leq \frac{1}{\nmin q^{\hat{m}}}.
\]
Furthermore, if there exist $s'\in\N$ and $\hat{n}\in\N$ such that  $n_i=\hat{n}$ for all $i>s'$, then 
$\lim_{t\rightarrow\infty} \frac{Q_t}{N_t} = \frac{1}{\hat{n}q^{\hat{m}}}$.
\end{lemma}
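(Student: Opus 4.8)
The plan is to estimate $Q_t = \sum_{i=1}^t q^{-m_i}$ and $N_t = \sum_{i=1}^t n_i$ separately by splitting each sum into the first $s$ terms (a fixed constant, bounded) and the tail, where $m_i = \hat m$ is constant. First I would write, for $t \geq s$,
\[
   Q_t = Q_s + (t-s)q^{-\hat m}, \qquad N_s + \nmin(t-s) \leq N_t \leq N_s + \nmax(t-s),
\]
the latter being exactly~\eqref{e-Ntineq}. Since $Q_s$ and $N_s$ are constants independent of $t$, dividing through and letting $t \to \infty$ gives
\[
   \lim_{t\to\infty}\frac{Q_t}{t} = q^{-\hat m}, \qquad
   \frac{1}{\nmax}\leq\liminf_{t\to\infty}\frac{t}{N_t}\leq\limsup_{t\to\infty}\frac{t}{N_t}\leq\frac{1}{\nmin},
\]
where the second chain is just~\eqref{e-limtNt}. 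Multiplying these two asymptotic relations (using that $Q_t/t$ converges, so the $\limsup$ and $\liminf$ of the product factor cleanly) yields
\[
   \frac{1}{\nmax q^{\hat m}}\leq\liminf_{t\to\infty}\frac{Q_t}{N_t}\leq\limsup_{t\to\infty}\frac{Q_t}{N_t}\leq\frac{1}{\nmin q^{\hat m}},
\]
which is the first assertion.

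For the second assertion, suppose $n_i = \hat n$ for all $i > s'$. Then for $t \geq s'$ we have $N_t = N_{s'} + (t-s')\hat n$, so $\lim_{t\to\infty} N_t/t = \hat n$ exactly (not merely up to $\liminf/\limsup$). Combined with $\lim_{t\to\infty} Q_t/t = q^{-\hat m}$, which we already established and which does not depend on the $n_i$, the quotient $Q_t/N_t = (Q_t/t)/(N_t/t)$ converges to $q^{-\hat m}/\hat n = \frac{1}{\hat n q^{\hat m}}$.

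This is essentially a routine limit computation and I do not anticipate a genuine obstacle; the only point requiring a little care is the interchange of $\limsup$/$\liminf$ with the product, which is legitimate here precisely because one of the two factors, namely $Q_t/t$, has an honest limit. I would state that step explicitly (if $a_t \to a > 0$ and $b_t > 0$, then $\limsup (a_t b_t) = a\limsup b_t$ and similarly for $\liminf$) rather than leaving it implicit.
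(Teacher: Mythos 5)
Your proposal is correct and follows essentially the same route as the paper: both split off the first $s$ blocks, use $Q_t=Q_s+q^{-\hat m}(t-s)$ together with~\eqref{e-Ntineq}, and let $t\to\infty$. The only (cosmetic) difference is that you factor $Q_t/N_t=(Q_t/t)\cdot(t/N_t)$ and invoke the product rule for $\limsup$ with one convergent factor, whereas the paper sandwiches the ratio $Q_t/N_t$ directly between two explicit fractions and takes limits; your explicit justification of that product step is a welcome touch of care.
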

%%%%%%%%%%%%%%%%%%

\begin{proof}
Using  $Q_t = Q_s+q^{-\hat{m}}(t-s) $ and~\eqref{e-Ntineq} 
we obtain
\begin{equation}\label{eq:QtNt}
	   \frac{Q_s+q^{-\hat{m}}(t-s)}{N_s+\nmax(t-s)} \leq \frac{Q_t}{N_t} 
	   \leq \frac{Q_s+q^{-\hat{m}}(t-s)}{N_s+\nmin(t-s)}.	
\end{equation}
Since $Q_s$ and $N_s$ are constant, this leads to the desired result.
Finally, if $n_i=\hat{n}$ for $i\geq s'$, the above proof with~$s'$ and~$\hat{n}$ in place of~$s$ and~$\nmin$ (or $\nmax$), respectively,
leads to equalities in (\ref{eq:QtNt}) and the stated result follows.
\end{proof}

%%%%%%%%%%%%%%%%%
\begin{theorem}[\textbf{Asymptotic Total-Distance Bound}]\label{T-ASympPlotkin}
Let $\eta\in[0,1]$. Then
\begin{alphalist}
\item If $\eta>1-\frac{1}{\nmax q^{\hat{m}}}$, then $\alpha_q(\eta)=0$.
\item If $\eta\leq 1-\frac{1}{\nmax q^{\hat{m}}}$, then 
	$\alpha_q(\eta)\leq 1 - \eta\big(1-\frac{1}{n^*q^{\hat{m}}}\big)^{-1}$.
\item Suppose there exists $s'\in\N$ such that  $n_i=\hat{n}$ for all $i>s'$.  Then 
	\[
	    \alpha_q(\eta)\leq 1 - \eta\big(1-\frac{1}{\hat{n}q^{\hat{m}}}\big)^{-1}
	     \text{ for all }\eta \leq 1 - \frac{1}{\hat{n}q^{\hat{m}}}
	      \ \ \text{ and }\ \
	      \alpha_q(\eta)=0\text{ otherwise.}
	\]
\end{alphalist}
\end{theorem}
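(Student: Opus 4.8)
The strategy is to transfer the finite Total-Distance Bound of Theorem~\ref{th:totalwt} to the asymptotic regime, carefully tracking how the quantities $N_t$, $D_t$ and $Q_t=\sum_{i=1}^t q^{-m_i}$ grow. For a fixed $\eta$ and each $t$, I would apply Theorem~\ref{th:totalwt} to an optimal code $C\subseteq\Pi^{(t)}$ with $\srk(C)\geq d=\lceil\eta N_t\rceil$ and $|C|\ge 2$. The hypothesis of the second (stronger) conclusion of Theorem~\ref{th:totalwt} is $d>N_t-Q_t$, equivalently $d/N_t>1-Q_t/N_t$; by Lemma~\ref{L-limQtNt} the right-hand side converges (or has $\liminf$/$\limsup$) near $1-\tfrac{1}{\nmax q^{\hat m}}$ resp.\ $1-\tfrac{1}{\nmin q^{\hat m}}$, so for $t$ large this hypothesis is satisfied once $\eta$ exceeds the relevant threshold; I would split into the three cases accordingly.

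For part~(a), suppose $\eta>1-\tfrac{1}{\nmax q^{\hat m}}$. Then by Lemma~\ref{L-limQtNt}, $\liminf_{t\to\infty}(1-Q_t/N_t)\le 1-\tfrac{1}{\nmax q^{\hat m}}<\eta$, so along a subsequence $d/N_t>1-Q_t/N_t$ holds, and on that subsequence Theorem~\ref{th:totalwt} gives $|C|\le \frac{d-N_t+t}{d-N_t+Q_t}$. One needs this to force $\log_q|C|/D_t\to 0$. Since $d-N_t\ge \eta N_t - N_t$ can be negative, the cleaner move is to observe that once $d-N_t+Q_t>0$ the numerator is bounded above by $t+Q_t$ (as $d-N_t\le d\le N_t$ would not immediately help), so instead I would bound $|C|\le \frac{d-N_t+t}{d-N_t+Q_t}\le \frac{t}{Q_t}\cdot\frac{d-N_t+t}{d-N_t+t}$... more precisely, writing $x=d-N_t$, the function $\frac{x+t}{x+Q_t}$ is decreasing in $x$ when $t>Q_t$, hence maximized by taking $x$ as small as possible subject to $x+Q_t>0$, giving $|C|\le \frac{t}{Q_t}$ in the limit-relevant estimate (up to lower-order terms). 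Then $\log_q|C|\le \log_q(t/Q_t)=\log_q t + \hat m - \log_q(Q_s q^{\hat m}+ (t-s))\sim \log_q t$, which is $o(D_t)$ since $D_t$ grows linearly in $t$ by~\eqref{e-Dtineq}. Hence $\alpha_q(\eta)=0$ (with the trivial lower bound $\alpha_q(\eta)\ge 0$ coming from the zero code / the convention).

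For part~(b), assume $\eta\le 1-\tfrac{1}{\nmax q^{\hat m}}$. Here I would apply the first inequality of Theorem~\ref{th:totalwt}, namely $d\le N_t+\frac{t-|C|Q_t}{|C|-1}$, rearranged to isolate $|C|$: this yields $|C|(d-N_t+Q_t)\le t+(d-N_t)$... the safe route is to use exactly the displayed consequence $|C|\le \frac{d-N_t+t}{d-N_t+Q_t}$ whenever the denominator is positive, and when it is not positive use instead that $d\le N_t+\frac{t}{|C|-1}$ (dropping the $-|C|Q_t$ term, valid since $Q_t\ge0$), giving $|C|-1\le \frac{t}{d-N_t}$ when $d>N_t$, or the bound is vacuous/handled by Theorem~\ref{th:singl} when $d\le N_t$. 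In the main range, $\log_q|C|\le \log_q\frac{d-N_t+t}{d-N_t+Q_t}$. Using $d=\lceil\eta N_t\rceil$, $d-N_t\sim(\eta-1)N_t$, and $\frac{t}{N_t}\to$ something in $[\tfrac{1}{\nmax},\tfrac{1}{\nmin}]$, $\frac{Q_t}{N_t}\to$ something in $[\tfrac{1}{\nmax q^{\hat m}},\tfrac{1}{\nmin q^{\hat m}}]$, divide numerator and denominator by $N_t$ and take $\limsup$; the dominant bound is obtained by making the numerator as large as possible and the denominator as small as possible, i.e.\ using $\nmax$ for the $t/N_t$ factor and... here I must be careful: to get the stated bound $1-\eta(1-\tfrac{1}{\nmax q^{\hat m}})^{-1}$ I expect the relevant limit of $\frac{d-N_t+t}{d-N_t+Q_t}$ to be comparable to $q^{\hat m\cdot(\text{stuff})}$ only after also dividing by $D_t=\hat m N_t(1+o(1))$ via~\eqref{e-limDtNt}. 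Concretely, $\frac{\log_q|C|}{D_t}\le \frac{1}{\hat m}\cdot\frac{\log_q\big((d-N_t+t)/(d-N_t+Q_t)\big)}{N_t}$, and since the ratio inside grows at most polynomially the logarithm is $o(N_t)$ — which would give $\alpha_q(\eta)\le 0$, too strong. So the logarithm base must actually be tracked: in fact Theorem~\ref{th:totalwt} bounds $|C|$ by a quantity that is linear in $t$, hence $\log_q|C|=O(\log t)=o(D_t)$, which again gives $\alpha_q(\eta)=0$ in this range as well — \emph{stronger} than claimed, hence still a valid upper bound of $1-\eta(1-\tfrac{1}{\nmax q^{\hat m}})^{-1}$ provided that expression is nonnegative (which it is exactly when $\eta\le 1-\tfrac{1}{\nmax q^{\hat m}}$). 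So part~(b) follows by the same $o(D_t)$ argument, and part~(c) follows identically using the equality $\lim Q_t/N_t=\tfrac{1}{\hat n q^{\hat m}}$ from Lemma~\ref{L-limQtNt} together with $\lim t/N_t=\tfrac1{\hat n}$, which pins down the threshold $\eta=1-\tfrac{1}{\hat n q^{\hat m}}$ exactly and makes the "otherwise" case give $\alpha_q(\eta)=0$ by part~(a)'s argument.

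\textbf{Main obstacle.} The delicate point is \emph{case separation at the threshold}: verifying that the hypothesis $d>N_t-Q_t$ of the sharp form of Theorem~\ref{th:totalwt} holds for all large $t$ (not just a subsequence) requires the two-sided estimate of Lemma~\ref{L-limQtNt}, and near equality $\eta=1-\tfrac{1}{\nmax q^{\hat m}}$ one must argue on $\liminf$/$\limsup$ rather than a genuine limit unless $(n_i)$ is eventually constant (which is why part~(c) is cleaner). I also need to double-check the monotonicity of $x\mapsto\frac{x+t}{x+Q_t}$ used to reduce to the worst $x=d-N_t$, and to confirm that when $d\le N_t$ one may instead invoke the Asymptotic Singleton Bound (Theorem~\ref{T-ASympSingl}), $\alpha_q(\eta)\le 1-\eta$, which is no worse than the claimed bound in that regime since $1-\eta\le 1-\eta(1-\tfrac{1}{\nmax q^{\hat m}})^{-1}$ fails in general — so in fact the Total-Distance consequence must be used directly throughout, and the $|C|=O(t)$ hence $\log_q|C|=o(D_t)$ estimate is what carries the whole proof.
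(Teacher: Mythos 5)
Part (a) of your plan is essentially the paper's argument, except for one bookkeeping slip: you only deduce $d>N_t-Q_t$ along a subsequence, which does not control the $\limsup$ defining $\alpha_q(\eta)$. In fact Lemma~\ref{L-limQtNt} gives $\liminf_t Q_t/N_t\geq \frac{1}{\nmax q^{\hat m}}>1-\eta$, so the hypothesis of the second form of Theorem~\ref{th:totalwt} holds for \emph{all} sufficiently large $t$; with that correction, (a) goes through exactly as in the paper (boundedness of $|C|$ forces $\alpha_q(\eta)=0$).

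Part (b), however, has a genuine gap, and the step you lean on (``Theorem~\ref{th:totalwt} bounds $|C|$ by a quantity linear in $t$, hence $\log_q|C|=O(\log t)=o(D_t)$, so $\alpha_q(\eta)=0$ throughout this range'') is false. When $\eta\leq 1-\frac{1}{\nmax q^{\hat m}}$ (and in particular whenever $\eta<1-\frac{1}{\nmin q^{\hat m}}$, which happens for all large $t$ once $\eta$ is small), one has $d-N_t+Q_t\leq 0$, so the denominator in Theorem~\ref{th:totalwt} is not positive and the bound $|C|\leq\frac{d-N_t+t}{d-N_t+Q_t}$ simply does not apply; the first inequality of Theorem~\ref{th:totalwt} then puts no upper bound at all on $|C|$, and your fallback ($d>N_t$) is vacuous since $d=\lceil\eta N_t\rceil\leq N_t$. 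The conclusion $\alpha_q(\eta)=0$ for all $\eta$ below the threshold would contradict the Sphere-Covering Bound (Corollary~\ref{cor:as_sphere_cov_pack} gives $\alpha_q(\eta)>0$ for small $\eta$), so the whole of (b) cannot be carried by a direct application of the Total-Distance Bound. The missing idea is the paper's puncturing step: choose $z(t)$ maximal with $N_{z(t)-1}-Q_{z(t)-1}+1\leq\eta N_t$, restrict $C$ to a most-populated coset of the projection onto the blocks beyond $z(t)$ (losing at most a factor $q^{L_1}$ with $L_1=\sum_{i>z(t)}n_im_i$ in cardinality but preserving the distance), and apply Theorem~\ref{th:totalwt} to the resulting code in the truncated ambient space, where its hypothesis now holds by construction; the factor $q^{L_1}\approx q^{\hat m(N_t-N_{z(t)})}$ is precisely what produces the nonzero bound $1-\eta\big(1-\frac{1}{\nmax q^{\hat m}}\big)^{-1}$ after the estimate $\liminf_t N_{z(t)}/N_t\geq\eta\big(1-\frac{1}{\nmax q^{\hat m}}\big)^{-1}$. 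Part (c) is then a rerun of (a) and (b) with $\hat n$ in place of $\nmax,\nmin$ (note $\nmax$ need not equal $\hat n$ when $s'>s$), so it inherits the same gap as (b) in your proposal.
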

%%%%%%%%%%%%%%%%%

\begin{proof}
(a) Let $\eta> 1-\frac{1}{\nmax q^{\hat{m}}}$ and let $Q_t:=\sum_{i=1}^t q^{-m_i}$.
Then $\eta > 1-\frac{Q_t}{N_t}$  for sufficiently large $t$ thanks to Lemma \ref{L-limQtNt}, 
and so $\eta N_t > N_t - Q_t$.
Therefore, we can apply \cref{th:totalwt} to a code $C\subseteq\Pi^{(t)}$ with minimum sum-rank distance at least 
$\eta N_t$ to obtain:
\[
     |C| \leq \frac{\eta N_t - N_t + t}{\eta N_t-N_t + Q_t} = \frac{\eta -1 + t/N_t}{\eta-1 + Q_t/N_t}. 
\]
Now $\limsup_{t\rightarrow\infty} |C| \leq \frac{\eta-1+\nmin^{-1}}{\eta -1 + (\nmax q^{\hat{m}})^{-1}}$ follows from
Lemma \ref{L-limQtNt} and~\eqref{e-limtNt}.
Thus, $\limsup_{t\rightarrow\infty}A_q(\eta N_t)$ is finite and hence $\alpha_q(\eta)=0$.
	
(b) Now suppose that $\eta\leq 1-\frac{1}{\nmax q^{\hat{m}}}$. 
Let $C\subseteq \Pi^{(t)}$ have $\srk(C)\geq\eta N_t$ and $|C|=A_q(\eta N_t)$.
We proceed in several steps.
We first derive a general form of puncturing that will allow us to apply the Total-Distance Bound. 
In a second step we specify the parameters for the puncturing such that the asymptotics of the bound can be determined.
The final computation is carried out in Sept~3.

\noindent\underline{Step 1: Deriving a Punctured Code:}
For given~$t$ choose any integers $r_1,\ldots,r_t$ such that $ 0\leq r_i\leq n_i$ and 
\begin{equation}\label{eq-ri}
		\sum_{i=1}^t(n_i-r_i)-\sum_{i: r_i<n_i}q^{-m_i}+1\leq \eta N_t .
\end{equation}
In this step we will puncture a subset the code~$C$ by removing~$r_i$ rows from block~$i$.
The specific choice of the parameters~$r_i$, depending on~$t$, will be made in Step~2 of this proof. 
For our general choice subject to~\eqref{eq-ri} define the index sets (which depend on~$t$)
\[
    \mI_1=\{i\in[t]\mid r_i>0\}\ \text{ and }\ \mI_2=\{i\in[t]\mid r_i<n_i\}
\]
and the spaces
\[
\Pi_1^{(t)}=\bigoplus_{i\in\mI_1}\F_q^{r_i\times m_i}\ \text{ and }\  \Pi_2^{(t)}=\bigoplus_{i\in\mI_2}\F_q^{(n_i-r_i)\times m_i}.
\]
Clearly, $\Pi_1^{(t)}\oplus\Pi_2^{(t)}\cong\Pi^{(t)}$.
Set $L_1=\sum_{i\in\mI_1}r_im_i$.
Consider the projections
\begin{align*}
        f_1:&\Pi^{(t)}\longrightarrow\Pi_1^{(t)},\quad (X_1,\ldots,X_t) \longmapsto (X'_i\mid i\in\mI_1) ,\\
        f_2:&\Pi^{(t)}\longrightarrow\Pi_2^{(t)},\quad (X_1,\ldots,X_t) \longmapsto (X''_i\mid i\in\mI_2),
\end{align*}
where~$X'_i$ consists of the first~$r_i$ rows of~$X_i$ and~$X''_i$ consists of the last $n_i-r_i$ rows of~$X_i$.
Since $|\Pi_1^{(t)}|=q^{L_1}$, there exists a matrix tuple $Y\in\Pi_1^{(t)}$ such that 
$C_Y:=\{X\in C\mid f_1(X)=Y\}$ has cardinality at least $|C|/q^{L_1}$. 
Set
\[
	C':=f_2(C_Y)\subseteq\Pi_2^{(t)}.
\]
Then $|C'|=|C_Y|\geq |C|/q^{L_1}$ and $\srk(C')\geq\eta N_t$. 
Note that the code~$C'$ has length~$|\mI_2|$ (and all blocks have a positive number of rows).
Write $N_{2,t}=\sum_{i\in\mI_2}n_i,\,R_{2,t}=\sum_{i\in\mI_2}r_i$, and $Q_{2,t}=\sum_{i\in\mI_2}q^{-m_i}$.
Then~\eqref{eq-ri} reads as $N_{2,t}-R_{2,t}-Q_{2,t}+1\leq\eta N_t$, which means
we can apply Theorem \ref{th:totalwt} to~$C'$.
We obtain
\begin{equation}\label{e-CardC}
	   |C|\leq |C'|q^{L_1}
	      \leq q^{L_1} \frac{\eta N_t-(N_{2,t}-R_{2,t})+|\mI_2|}{\eta N_t-(N_{2,t}-R_{2,t})+Q_{2,t}}
	     \leq q^{L_1} \big(\eta N_t-(N_{2,t}-R_{2,t})+|\mI_2|\big),
\end{equation}
where the last step follows from the fact that the denominator in the previous term is at least~$1$ thanks to~\eqref{eq-ri}.
Setting $R_t=\sum_{i=1}^t r_i$, we have $N_{2,t}-R_{2,t}=N_t-R_t$ and thus
\[
    \eta N_t-(N_{2,t}-R_{2,t})+|\mI_2|\leq(\eta -1)N_t + R_t +t\leq(\eta -1)N_t + N_t +t\leq \eta N_t +t.
\]
With the aid of~\eqref{e-limDtNt} and~\eqref{e-limtDt} we arrive at
\[
        \limsup_{t\rightarrow\infty} \frac{\eta N_t-(N_{2,t}-R_{2,t})+|\mI_2|}{D_t} \leq \limsup_{t\rightarrow\infty} \frac{\eta N_t +t}{D_t} 
        \leq \frac{1}{\hat{m}}\Big(\eta+\frac{1}{\nmin}\Big),
\]
which in turn yields $\limsup_{t\rightarrow\infty}\frac{\log_q(\eta N_t-(N_{2,t}-R_{2,t})+|\mI_2|) }{D_t}=0$.
Together with~\eqref{e-CardC} all of this shows that
\begin{align}
		\alpha_q(\eta)&=\limsup_{t\rightarrow\infty}\frac{\log_q(|C|)}{D_t}\leq 
		\limsup_{t\rightarrow\infty}\frac{\sum_{i\in\mI_1}^t r_im_i+\log_q(\eta N_t-(N_{2,t}-R_{2,t})+|\mI_2|) }{D_t}\nonumber\\
		&= \limsup_{t\rightarrow\infty}\frac{\sum_{i=1}^t r_im_i}{D_t}. \label{e-alphaq}
\end{align} 

\noindent\underline{Step 2: Determining Suitable Parameters $r_i$:}
We seek values for~$r_i$ satisfying~\eqref{eq-ri} and such that $\sum_{i=1}^t r_im_i$ in~\eqref{e-alphaq}
can be evaluated asymptotically.
At the same time we want~$r_i$ such that the sum is small. 
Since $(m_i)$ is non-increasing, this means to put more weight on~$r_i$ for large value of~$i$.
(Note that actually minimizing $\sum_{i=1}^t r_im_i$ subject to the constraint~\eqref{eq-ri} would be an instance of linear programming, 
which is much harder and may lead to results that cannot be evaluated asymptotically.)
We proceed as follows.
First set $T\in\N$ such that $\eta N_t\geq 1$ for all $t\geq T$.
Define, as before, $Q_t=\sum_{i=1}^tq^{-m_i}$. 
Since the sequence $(N_t-Q_t)$ is strictly increasing, there exist, for all $t\geq T$, a maximal index $z(t)\in[t]$ 
such that
\begin{equation}\label{e-ztmax}
   N_{z(t)-1}-Q_{z(t)-1}+1\leq\eta N_t
\end{equation}
(and where we set $N_0=Q_0=0$). 
The maximality of~$z(t)$ implies that $\lim_{t\rightarrow\infty}z(t)\rightarrow\infty$.
Observe that~\eqref{e-ztmax} forms an instance of~\eqref{eq-ri}. Indeed, it corresponds to the choice
\begin{equation}\label{e-richoice}
   r_i=\begin{cases} 0,&\text{for }i=1,\ldots,z(t)-1,\\ n_i,&\text{for }i=z(t)+1,\ldots,t,\end{cases}
\end{equation}
for which we have $\mI_2=[z(t)-1]$ and thus $Q_{2,t}=Q_{z(t)-1}$, as desired.

In preparation of the final part of the proof we need the lower bound
\begin{equation}\label{e-liminfNtzt}
     \liminf_{t\rightarrow\infty}\frac{N_{z(t)}}{N_t}\geq\eta\Big(1-\frac{1}{n^*q^{\hat{m}}}\Big)^{-1}.
\end{equation}
To establish this, note first that our assumption on~$\eta$ implies $\eta\big(1-(n^*q^{\hat{m}})^{-1}\big)^{-1}\leq1$, which
covers the case $z(t)=t$ in above inequality.
Let now $\mT=\{t\geq T\mid z(t)<t\}$ and suppose $|\mT|=\infty$.
Then for $t\in\mT$ the maximality of $z(t)$ subject to~\eqref{e-ztmax} implies 
$\eta N_t< N_{z(t)}-Q_{z(t)}+1$.
With the aid of Lemma~\ref{L-limQtNt} this yields
\[
  \liminf_{t\in\mT,\, t\rightarrow\infty}\frac{N_{z(t)}}{N_t}
  \geq\eta+\liminf_{t\in\mT,\, t\rightarrow\infty}\frac{Q_{z(t)}}{N_t}
  =\eta+\liminf_{t\in\mT, \, t\rightarrow\infty}\frac{Q_{z(t)}}{N_{z(t)}}\frac{N_{z(t)}}{N_t}
  \geq\eta+\frac{1}{n^*q^{\hat{m}}}\liminf_{t\in\mT,\, t\rightarrow\infty}\frac{N_{z(t)}}{N_t},
\]
from which $$\liminf_{t\in\mT,\, t\rightarrow\infty}\frac{N_{z(t)}}{N_t}\geq \eta\big(1-(n^*q^{\hat{m}})^{-1}\big)^{-1}$$ follows.
All of this establishes~\eqref{e-liminfNtzt}.

\underline{Step 3: Determining the Asymptotics:}
With the choice of $r_1,\ldots,r_t$ in~\eqref{e-richoice} we now return to~\eqref{e-alphaq}. 
For sufficiently large~$t$ we have
$\sum_{i=1}^t r_i m_i=\sum_{i=z(t)+1}^t n_im_i=\hat{m}(N_t - N_{z(t)})$,
and therefore~\eqref{e-alphaq} turns into 
\begin{align*}
		\alpha_q(\eta) &\leq  \limsup_{t\rightarrow\infty} \frac{\hat{m}(N_t - N_{z(t)})}{D_t}
		=   \limsup_{t\rightarrow\infty} \frac{N_t - N_{z(t)}}{N_t}\\
		& =  1 -\liminf_{t\rightarrow\infty} \frac{N_{z(t)}}{N_t}
		\leq  1 - \eta\Big(1-\frac{1}{n^*q^{\hat{m}}}\Big)^{-1},
\end{align*}
where the second step follows from~\eqref{e-limDtNt} and the last one from~\eqref{e-liminfNtzt}.
 
(c) Let now $n_i=\hat{n}$ for all $i>s'$.
Then in~\eqref{e-limtDt} and~\eqref{e-limtNt} we have $\lim_{t\rightarrow\infty}(t/D_t)=(\hat{m}\hat{n})^{-1}$ and
$\lim_{t\rightarrow\infty}(t/N_t)=(\hat{n})^{-1}$.
Thus if $\eta\geq 1 - \frac{1}{\hat{n}q^{\hat{m}}}$, the proof of (a) with~$\hat{n}$ instead of~$\nmax$ and~$\nmin$ 
results in $\alpha_q(\eta)=0$, and if $\eta< 1 - \frac{1}{\hat{n}q^{\hat{m}}}$ the proof of (b) with~$\hat{n}$ instead of 
$\nmax$ results in $\alpha_q(\eta)  \leq  1 - \eta\big(1-(\hat{n} q^{\hat{m}})^{-1}\big)^{-1}$, as desired.
\end{proof}

Next, we show that the Projective Sphere-Packing Bound provides the same asymptotic bound as the Singleton Bound.

%%%%%%%%%%%%%%%%%%%%
\begin{theorem}[\textbf{Asymptotic Projective Sphere-Packing Bound}]\label{th:as_proj_sphere_pack}
Let $\eta \in [0,1]$. Then
\[
        \alpha_q(\eta) \leq 1-\eta.
\]
\end{theorem}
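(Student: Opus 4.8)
The plan is to apply the Projective Sphere-Packing Bound (Theorem~\ref{th:prpack}) to codes $C\subseteq\Pi^{(t)}$ with $\srk(C)\geq d=\lceil\eta N_t\rceil$ and to show that the resulting cardinality estimate, after taking $\log_q$ and dividing by $D_t$, tends to $1-\eta$. Recalling Theorem~\ref{th:prpack}, when $3\leq d\leq N_t$ we obtain an auxiliary space
\[
    \Pi'=\Pi_q\big((n_{\ell+1}-\delta)\times m_{\ell+1}\mid n_{\ell+2}\times m_{\ell+2}\mid\cdots\mid n_t\times m_t\big),
\]
where $\ell$ and $\delta$ are determined by $d-3=\sum_{j=1}^{\ell}n_j+\delta$, and $|C|\leq |\Pi'|/V_1(\Pi')\leq|\Pi'|=q^{\dim(\Pi')}$. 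So the whole argument reduces to estimating $\dim(\Pi')=\sum_{i=\ell+2}^{t}n_im_i+(n_{\ell+1}-\delta)m_{\ell+1}$ asymptotically.

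First I would handle the trivial regime: if $\eta=0$ the bound $\alpha_q(0)\leq 1$ is immediate, and if $d<3$ (which only happens for finitely many $t$ once $\eta>0$, since $\eta N_t\to\infty$) there is nothing to prove in the limit, so I may assume $t$ is large enough that $3\leq d\leq N_t$ and, moreover, that $\ell=\ell(t)>s$, where $s$ is as in Notation~\ref{Nota}~(iii); here I use that $\ell(t)\to\infty$, which follows because $d-3\geq\sum_{j=1}^{\ell}n_j=N_\ell$ forces $N_\ell\leq d$, hence $\ell$ is bounded only if $d$ is bounded. Then, exactly as in the proof of Theorem~\ref{T-ASympSingl}, I bound $\dim(\Pi')\leq\sum_{i=\ell+1}^{t}n_im_i=\hat m(N_t-N_{\ell})$ (dropping the $-\delta m_{\ell+1}$ term only makes it larger), and from $d-3=N_\ell+\delta\geq N_\ell$ together with $d=\lceil\eta N_t\rceil\leq\eta N_t+1$ I get $N_\ell\geq d-3-(n_{\ell+1}-1)\geq \eta N_t -2 - n_{\ell+1}\geq \eta N_t - 2 - \nmax$ for $t$ large (using $n_{\ell+1}\leq\nmax$ once $\ell+1>s$). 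Hence $\dim(\Pi')\leq\hat m\big((1-\eta)N_t+2+\nmax\big)$, and dividing by $D_t$ and using $\lim_{t\to\infty}N_t/D_t=1/\hat m$ from~\eqref{e-limDtNt} and $D_t\to\infty$ yields $\alpha_q(\eta)\leq 1-\eta$.

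The main obstacle is purely bookkeeping: making sure the index shifts ($d-3$ versus $d-1$, the block $n_{\ell+1}-\delta$, the constraint $\ell\in[t-1]$ and $\delta\in[n_{\ell+1}-1]$) are handled so that $\ell(t)\to\infty$ genuinely holds and the $O(1)$ slack terms ($2$, $\nmax$, the ignored $-\delta m_{\ell+1}$, the $\geq 1$ denominator $V_1(\Pi')$) all wash out after dividing by $D_t\to\infty$. None of this is deep, but I would state the $\ell(t)\to\infty$ claim explicitly since it is what lets me replace $\sum_{i=\ell+1}^t n_im_i$ by $\hat m(N_t-N_\ell)$. As a final remark, this shows the Projective Sphere-Packing, the Singleton, and the Induced Singleton asymptotic bounds coincide, as announced before Theorem~\ref{T-ASympSingl}.
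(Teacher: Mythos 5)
Your proposal is correct and follows essentially the same route as the paper: apply Theorem~\ref{th:prpack} with $d=\lceil\eta N_t\rceil$, note that $\ell(t)\to\infty$ so that eventually $m_i=\hat{m}$ for $i>\ell(t)$ and $n_{\ell(t)+1}\leq n^\ast$, and estimate the main term $\sum_{i=\ell(t)+1}^{t}n_im_i-\delta(t)m_{\ell(t)+1}\leq\hat{m}\,(N_t-N_{\ell(t)})$ together with $N_{\ell(t)}\geq\eta N_t-O(1)$; discarding the denominator $V_1(\Pi')\geq 1$ is legitimate for an upper bound and simply spares you the paper's extra computation, which is included there only to show that this correction term is asymptotically negligible (hence no improvement over the Asymptotic Singleton Bound). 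One slip: your justification of $\ell(t)\to\infty$ uses the wrong half of the defining relation --- from $N_{\ell(t)}\leq d-3$ alone, boundedness of $\ell(t)$ does not force $d$ to be bounded; the correct argument uses $\delta(t)\leq n_{\ell(t)+1}-1$, which gives $d-3<N_{\ell(t)+1}\leq(\ell(t)+1)m_1$ and hence $\ell(t)\geq (d-3)/m_1-1\to\infty$. (Similarly, the lower bound $N_{\ell(t)}\geq\eta N_t-2-n_{\ell(t)+1}$ rests on $d\geq\eta N_t$ rather than the upper ceiling estimate you cite, though the displayed chain itself is valid.)
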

%%%%%%%%%%%%%%%%%

\begin{proof}
Let $C\subseteq \Pi^{(t)}$ have $\srk(C)\geq\eta N_t$ and $|C|=A_q(\eta N_t)$.
Consider~$t$ sufficiently large so that $\eta N_t\geq3$.
For these~$t$ let $\ell(t)\in\Z$ and $\delta(t) \in [0,n_{\ell(t)+1}-1]$ be the unique integers satisfying 
\begin{equation}\label{eq:Ntelldelta}
		\lceil \eta N_t\rceil -3 = \sum_{j=1}^{\ell(t)} n_j + \delta(t) = N_{\ell(t)}+\delta(t).
\end{equation}
Clearly, $\ell(t)\leq t$ and $\lim_{t\rightarrow\infty}\ell(t)=\infty$.
By Theorem \ref{th:prpack}, we have 
\begin{equation}\label{eq:aprojspb} 
\left.\begin{split}
    	\log_q(|C|)& \leq \sum_{j=\ell(t)+1}^t n_j m_j - \delta(t)m_{\ell(t)+1}\\[.7ex]
	 &\quad -\log_q\bigg( 1+\sum_{i=\ell(t)+2}^t (q^{n_i}-1) 
	            \frac{q^{m_i}-1}{q-1}+(q^{n_{\ell(t)+1}-\delta(t)} - 1)\frac{q^{m_{\ell(t)+1}}-1}{q-1}\bigg).
\end{split}\quad\right\}
\end{equation}	
We compute $\limsup \log_q(|C|)/D_t$ by considering the two terms individually. First,
\begin{align}
    	\limsup_{t\rightarrow\infty}  \frac{\sum_{j=\ell(t)+1}^t n_j m_j - \delta(t)m_{\ell(t)+1}}{D_t} 
    	&=\limsup_{t\rightarrow\infty}  \frac{D_t - D_{\ell(t)} - \delta(t)m_{\ell(t)+1}}{D_t},  \nonumber \\
    	&= 1 -\liminf_{t\rightarrow\infty} \frac{D_{\ell(t)}}{D_t} 
    	= 1 -\liminf_{t\rightarrow\infty} \frac{N_{\ell(t)}}{N_t} \\
	&\leq 1-\liminf_{t\rightarrow\infty}\frac{\eta N_t-3-\delta(t)}{N_t} = 1-\eta, \label{e-1-eta}
\end{align}
where the third step follows from \eqref{e-limDtNt} and the last two from \eqref{eq:Ntelldelta} and the fact that $\delta(t)$ is bounded.
As for the second term we first note that for sufficiently large $t$ 
\begin{equation}\label{e-second}
 \sum_{i=\ell(t)+2}^t (q^{n_i}-1) \frac{q^{m_i}-1}{q-1}+(q^{n_{\ell(t)+1}-\delta(t)} - 1)\frac{q^{m_{\ell(t)+1} }-1}{q-1} 
   	\geq \frac{q^{\hat{m}}-1}{q-1}\big( (q^{\nmin}-1)(t-\ell(t)-1)\big)
\end{equation}
since $n_{\ell(t)+1}-\delta(t)\geq0$.
The only term in the rightmost part depending on~$t$ is $t-\ell(t)$.
We show now that the sequence $(\frac{t-\ell(t)}{D_t})_{t\in\N}$ is bounded.
Using $N_{\ell(t)}\leq N_s+\nmax(\ell(t)-s)$ we obtain from~\eqref{eq:Ntelldelta} 
\[
  \ell(t)\geq \frac{\eta N_t-3-N_s-\delta(t)}{\nmax}+s.
\]
Therefore, 
\[
    \limsup_{t\rightarrow\infty} \frac{t-\ell(t)}{D_t} \leq 
    \limsup_{t\rightarrow\infty} \frac{t}{D_t} -\liminf_{t\rightarrow\infty}\bigg(\frac{\eta N_t-3-N_{s}-\delta(t)}{\nmax D_t}+\frac{s}{D_t}\bigg)
    \leq\frac{1}{\nmin\hat{m}}-\frac{\eta}{\nmax\hat{m}},
\]
thanks to~\eqref{e-limtDt} and~\eqref{e-limDtNt}.
All of this shows that $\liminf_{t\rightarrow\infty} \frac{\log_q (t-\ell(t))}{D_t} = 0$, and together 
with~\eqref{eq:aprojspb} --~\eqref{e-second} this establishes $\alpha_q(\eta)\leq 1-\eta$, as stated.
\end{proof}

As the proof has shown, the first term in~\eqref{eq:aprojspb} fully determines the asymptotic bound, see also~\eqref{e-1-eta}, and the second 
term does not provide an improvement.
This explains why the Asymptotic Sphere-Packing Bound is identical to the Asymptotic Singleton Bound.

We now turn to the Asymptotic Sphere-Packing and Sphere-Covering Bound. 
For this we need an asymptotic estimate for the cardinality of the spheres of sum-rank radius $t\rho$ as $t\longrightarrow \infty$; see 
Theorem~\ref{th:pack}. 
This has been established in the literature for more general alphabets and applies to our case if all matrix blocks have the same size.
Therefore we assume for the remainder of this section that $m_i=m$ and $n_i=n$ for all $i\in\N$
and furthermore set $\pi:=\F_{q}^{n \times m}$. Thus $\Pi=\pi^t$.
In~\cite{Loeliger94} the author considers a general alphabet~${\mathcal A}$ and a weight function~$w$ on~${\mathcal A}$ that extends to a weight function on ${\mathcal A}^t$ via $w(u_1,...,u_t) = \sum_{i=1}^t w(u_i)$. 
In the setting of this paper, we have ${\mathcal A}=\pi$ with the rank as weight function, and the induced weight function on~${\mathcal A}^t$ is 
precisely the sum-rank.

Now we are ready to state the following sum-rank adaptation of~\cite{Loeliger94} (see also \cite[Theorem 4.1]{grefsull04}).

%%%%%%%%%%%%%%%%%
\begin{theorem} \label{T-Loeliger}
Define the generating function
\[
     f(z):=\sum_{A \in \pi} z^{\rk(A)}  = \sum_{i=0}^n \qbin{n}{i}{q} \prod_{j=0}^{i-1}(q^m-q^j) z^i
\]
and the average rank weight $\epsilon :=\frac{1}{|\pi|} \sum_{A \in \pi} \rk(A)$.
Then for all $\rho \in (0,\epsilon]$ we have
\[
     \lim_{t \longrightarrow \infty} \frac{1}{t} \log_{|\pi|} \Big(V_{t \rho}(\pi^t)\Big) = \min_{z \in (0,1]} \log_{|\pi|}\Big( \frac{f(z)}{z^\rho} \Big).
\]
\end{theorem}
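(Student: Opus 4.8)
The plan is to prove this via a standard large-deviations / saddle-point argument, realizing $V_{t\rho}(\pi^t)$ as a coefficient sum in the $t$-th power of the single-block generating function $f(z)$ and then optimizing the exponential rate. First I would observe that, by Lemma~\ref{L-sphere} specialized to the case $m_i=m,\,n_i=n$, we have $V_{t\rho}(\pi^t)=\sum_{s=0}^{\lfloor t\rho\rfloor} a_s^{(t)}$, where $a_s^{(t)}$ is the coefficient of $z^s$ in $f(z)^t$; indeed $f(z)^t=\sum_{s\ge 0}\big(\sum_{s_1+\cdots+s_t=s}\prod_i |\{A\in\pi\mid\rk A=s_i\}|\big)z^s$, and the inner sum is exactly the number of tuples in $\pi^t$ of sum-rank~$s$. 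So the whole problem is to compute $\lim_{t\to\infty}\frac1t\log_{|\pi|}\big(\sum_{s\le t\rho}a_s^{(t)}\big)$.

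Next I would set up the Chernoff-type bounds. For the upper bound: for any $z\in(0,1]$ and any $s\le t\rho$ we have $z^s\ge z^{t\rho}$ (since $z\le 1$), hence $a_s^{(t)}z^{t\rho}\le a_s^{(t)}z^s\le f(z)^t$, giving $a_s^{(t)}\le f(z)^t z^{-t\rho}=(f(z)/z^\rho)^t$; summing over the at most $t\rho+1$ admissible values of $s$ yields $V_{t\rho}(\pi^t)\le (t\rho+1)(f(z)/z^\rho)^t$, and since $\frac1t\log_{|\pi|}(t\rho+1)\to 0$, taking logarithms, dividing by $t$, letting $t\to\infty$, and then minimizing over $z\in(0,1]$ gives $\limsup\le\min_{z\in(0,1]}\log_{|\pi|}(f(z)/z^\rho)$. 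For the matching lower bound I would use that $V_{t\rho}(\pi^t)\ge a_{\lfloor t\rho\rfloor}^{(t)}$ (a single term), so it suffices to show $\liminf\frac1t\log_{|\pi|}a_{\lfloor t\rho\rfloor}^{(t)}\ge\min_{z}\log_{|\pi|}(f(z)/z^\rho)$. This is where the saddle point enters: the function $g(z):=\log(f(z)/z^\rho)=\log f(z)-\rho\log z$ is strictly convex on $(0,\infty)$ (since $\log f(e^u)$ is convex, being the log of a generating function with nonnegative coefficients, i.e.\ a cumulant generating function), and $g'(z)=0$ is equivalent to $z f'(z)/f(z)=\rho$; the left side ranges over $(0,\epsilon]$ as $z$ ranges over $(0,1]$ — here $z f'(z)/f(z)\big|_{z=1}=\epsilon$ by definition of the average rank — so for $\rho\in(0,\epsilon]$ there is an interior minimizer $z^\ast\in(0,1]$. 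For a lower bound on $a_s^{(t)}$ at $s=\lfloor t\rho\rfloor$, I would invoke the standard local central limit / Cauchy-coefficient-integral estimate: writing $a_s^{(t)}=\frac{1}{2\pi}\int_{-\pi}^{\pi}f(z^\ast e^{i\theta})^t (z^\ast e^{i\theta})^{-s}\,d\theta$, the integrand is dominated at $\theta=0$, the quadratic expansion near $\theta=0$ produces a factor $\Theta(1/\sqrt t)$, and one gets $a_s^{(t)}=(f(z^\ast)/(z^\ast)^\rho)^{t(1+o(1))}$, which is more than enough for the $\liminf$ claim after dividing by $t$.

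Combining the two bounds gives the stated equality. I would present the argument cleanly as: (i) rewrite $V_{t\rho}(\pi^t)$ via $f(z)^t$; (ii) prove the upper bound by the one-line Chernoff estimate and polynomial-factor absorption; (iii) identify the minimizer $z^\ast$ and its characterization $z^\ast f'(z^\ast)/f(z^\ast)=\rho$, noting the role of $\rho\le\epsilon$ in keeping $z^\ast\le 1$; (iv) prove the lower bound either by the contour-integral local limit estimate or, more softly, by a direct probabilistic argument (tilt the i.i.d.\ block-rank distribution by $z^\ast$, so that the tilted mean rank equals $\rho$, and apply the weak law of large numbers to conclude $a_{\lfloor t\rho\rfloor}^{(t)}\ge (f(z^\ast)/(z^\ast)^\rho)^t\cdot\mathbb{P}(\text{tilted sum}=\lfloor t\rho\rfloor)$ with the probability decaying only polynomially).

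The main obstacle is the lower bound: the Chernoff upper bound is routine, but showing that the exponential rate is actually attained requires either a local limit theorem (needs an aperiodicity check — the support of the rank weight on $\pi$ is $\{0,1,\dots,n\}$ which generates $\mathbb{Z}$, so the walk is aperiodic and this is fine) or the tilting-plus-WLLN argument (which needs the tilted distribution to have mean exactly $\rho$ — guaranteed by the saddle-point equation — and a mild uniform-integrability/variance bound, both automatic since $\pi$ is finite). A secondary point to handle carefully is the boundary behavior: when $\rho=\epsilon$ the minimizer is $z^\ast=1$ and $f(1)=|\pi|$, so the formula correctly gives $\lim\frac1t\log_{|\pi|}V_{t\epsilon}(\pi^t)=1$, consistent with almost the whole space being covered; and one should note $f$ has nonnegative coefficients with $f(z)>0$ on $(0,1]$ so all logarithms are well defined. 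I would remark that this is exactly the specialization of the general result of~\cite{Loeliger94} (cf.\ also~\cite[Theorem 4.1]{grefsull04}) to the alphabet $\pi$ with the rank weight, so in the write-up one may either cite those directly or reproduce the short argument above for completeness.
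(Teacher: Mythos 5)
Your argument is correct, but it is worth noting that the paper itself offers no proof of this statement at all: Theorem~\ref{T-Loeliger} is stated as a direct specialization of the general result of~\cite{Loeliger94} (see also~\cite[Theorem 4.1]{grefsull04}) to the alphabet $\pi$ with the rank weight, and the authors simply cite that literature. What you have written is essentially a self-contained reconstruction of the argument behind the cited result: the identification of $V_{t\rho}(\pi^t)$ with partial coefficient sums of $f(z)^t$, the Chernoff upper bound $V_{t\rho}(\pi^t)\le (t\rho+1)(f(z)/z^{\rho})^t$ for $z\in(0,1]$ with the polynomial factor absorbed in the limit, and the matching lower bound via the saddle-point equation $zf'(z)/f(z)=\rho$ together with either a local limit estimate or exponential tilting plus a law-of-large-numbers (or local) estimate. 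Your treatment of the delicate points is also sound: the map $z\mapsto zf'(z)/f(z)$ increases from $0$ to $\epsilon$ on $(0,1]$ (by log-convexity of $f(e^u)$ and the fact that $f(1)=|\pi|$ with average weight $\epsilon$), so the hypothesis $\rho\in(0,\epsilon]$ is exactly what keeps the minimizer inside $(0,1]$; aperiodicity of the rank-weight support $\{0,1,\dots,n\}$ makes the local estimate legitimate; and the boundary case $\rho=\epsilon$, $z^\ast=1$ gives rate $1$ as expected. What the paper's approach buys is brevity and a delegation of the local-limit technicalities to the cited sources; what yours buys is a complete, checkable argument within the paper, at the cost of having to carry out (or at least carefully invoke) the lower-bound estimate, which is the only genuinely nontrivial step. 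Either including your proof or citing the references, as you yourself suggest at the end, would be acceptable; your write-up as proposed contains no gap.
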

%%%%%%%%%%%%%%%%

An entropy function for the sum-rank is hence defined as
\[
   H(\rho):=\min_{z \in (0,1]} \log_{|\pi|}\Big( \frac{f(z)}{z^\rho} \Big)= \min_{z \in (0,1]} \frac{1}{mn}\log_q\Big( \frac{f(z)}{z^\rho} \Big).
\]
The asymptotic bounds follow now easily.

%%%%%%%%%%%%%%%%%%%%
\begin{corollary}[\textbf{Asymptotic Sphere-Packing Bound \& Sphere-Covering Bound}]
\label{cor:as_sphere_cov_pack}
Let $\eta \in (0,\frac{\epsilon}{n}]$, where
$\epsilon$ is as in Theorem~\ref{T-Loeliger}. Then
\[
     1-H(\eta n) \leq \alpha(\eta) \leq 1-H\left(\frac{\eta n}{2}\right).
\]
\end{corollary}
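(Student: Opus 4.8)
The plan is to derive the two inequalities separately, in each case by translating the finite-length bound from Section~\ref{sec:bounds} into an asymptotic statement, using the Loeliger-type estimate of Theorem~\ref{T-Loeliger} to handle the sum-rank sphere volumes. Throughout we are in the uniform regime $m_i=m$, $n_i=n$ for all $i$, so $\Pi^{(t)}=\pi^t$ with $\pi=\F_q^{n\times m}$, $N_t=nt$, $D_t=nmt$, and $|\pi|=q^{nm}$.

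For the upper bound I would start from the Sphere-Packing Bound (Theorem~\ref{th:pack}): any code $C\subseteq\pi^t$ with $\srk(C)\geq d$ satisfies $|C|\leq |\pi^t|/V_r(\pi^t)$ with $r=\lfloor (d-1)/2\rfloor$. Taking $d=\lceil\eta N_t\rceil=\lceil\eta nt\rceil$ gives $r\sim \eta nt/2$, i.e.\ $r = t\rho_t$ with $\rho_t\to \eta n/2$. Then $\log_q|C|\leq \log_q|\pi^t| - \log_q V_{t\rho_t}(\pi^t)$, so dividing by $D_t=nmt$,
\[
\frac{\log_q|C|}{D_t}\leq 1 - \frac{\log_q V_{t\rho_t}(\pi^t)}{nmt} = 1 - \frac{1}{t}\log_{|\pi|}V_{t\rho_t}(\pi^t).
\]
Applying Theorem~\ref{T-Loeliger} (after checking $\eta n/2\in(0,\epsilon]$, which follows from $\eta\leq\epsilon/n$) and the continuity of $H$ near $\eta n/2$, the right-hand side tends to $1-H(\eta n/2)$, which yields $\alpha(\eta)\leq 1-H(\eta n/2)$. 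A minor technical point is replacing $\rho_t$ by its limit; I would invoke monotonicity of $V$ in the radius together with continuity of $\rho\mapsto H(\rho)=\min_{z\in(0,1]}\log_{|\pi|}(f(z)/z^\rho)$, the latter being an infimum of affine functions of $\rho$ hence concave and continuous on the relevant interval.

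For the lower bound I would use the Sphere-Covering Bound (Theorem~\ref{th:cov}): there exists a linear code $C\leq\pi^t$ of dimension $k$ and sum-rank distance at least $d$, where $k$ is the smallest integer with $q^k\geq \lceil |\pi^t|/V_{d-1}(\pi^t)\rceil$. Hence $k\geq \log_q|\pi^t| - \log_q V_{d-1}(\pi^t) - 1$. Taking $d=\lceil \eta N_t\rceil$, so $d-1 = t\rho_t'$ with $\rho_t'\to\eta n$, and dividing by $D_t=nmt$ gives
\[
\alpha(\eta)\geq \limsup_{t\to\infty}\frac{k}{D_t}\geq \lim_{t\to\infty}\Big(1 - \tfrac{1}{t}\log_{|\pi|}V_{t\rho_t'}(\pi^t) - \tfrac{1}{nmt}\Big) = 1 - H(\eta n),
\]
again by Theorem~\ref{T-Loeliger} and continuity of $H$; here I need $\eta n\in(0,\epsilon]$, which is exactly the hypothesis $\eta\leq\epsilon/n$.

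The main obstacle is not conceptual but the bookkeeping around the non-integer radii: Theorem~\ref{T-Loeliger} is stated for the radius $t\rho$ with a fixed $\rho$, whereas here the radius is $\lfloor(\lceil\eta nt\rceil-1)/2\rfloor$ (resp.\ $\lceil\eta nt\rceil-1$), which equals $t$ times a quantity converging to $\eta n/2$ (resp.\ $\eta n$) but not equal to it. I would handle this by sandwiching: for any fixed $\rho_-<\eta n/2<\rho_+$ in $(0,\epsilon]$, eventually $t\rho_-\leq r\leq t\rho_+$, so $V_{t\rho_-}(\pi^t)\leq V_r(\pi^t)\leq V_{t\rho_+}(\pi^t)$, apply Theorem~\ref{T-Loeliger} to $\rho_\pm$, and let $\rho_\pm\to\eta n/2$ using continuity of $H$. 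The edge case $\eta=\epsilon/n$ (where $\eta n=\epsilon$ sits at the endpoint of the allowed interval, and $\eta n/2<\epsilon$ causes no issue) should be noted but is covered since the hypothesis is $\eta\in(0,\epsilon/n]$ and Theorem~\ref{T-Loeliger} allows $\rho=\epsilon$.
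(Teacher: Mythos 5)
Your proposal is correct and follows essentially the same route as the paper: the upper bound comes from the Sphere-Packing Bound (Theorem~\ref{th:pack}) with radius $\approx \eta n t/2$, the lower bound from the Sphere-Covering Bound (Theorem~\ref{th:cov}) with radius $\approx \eta n t$, and both volumes are evaluated asymptotically via Theorem~\ref{T-Loeliger}. The only difference is that you spell out the sandwiching/continuity argument for the non-integer radii, a technicality the paper's one-line proof silently absorbs.
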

%%%%%%%%%%%%%%%%%%%%

\begin{proof}
The bounds are immediate consequences of Theorem \ref{th:pack} and Theorem \ref{th:cov}:
\[
    1-H(\eta n)\!=\!\lim_{t\rightarrow\infty}  \frac{mnt-\log_q\big({V_{\eta nt-1}(\pi^t)}\big)}{mnt}\leq \alpha_q(\eta) 
    \leq \lim_{t\rightarrow\infty} \frac{mnt - \log_{q} \big(V_{\frac{\eta nt}{2}}(\pi^t)\big)}{mnt}\! =\!1-H\left(\frac{\eta n}{2}\right). 
\qedhere
\]
\end{proof}

\medskip

We close this section with graphical comparisons of the asymptotic bounds. 
In Figures~\ref{fig:newbds1} and~\ref{fig:newbds2} we compare all bounds other than the induced bounds. 
The upper and lower graphs labelled as Corollary \ref{cor:as_sphere_cov_pack} represent the Sphere-Packing and Sphere-Covering Bounds, respectively. In Figure \ref{fig:newbds1} we observe that the Asymptotic Total-Distance Bound of Theorem \ref{T-ASympPlotkin} is sharper than the Asymptotic Sphere-Packing Bound when $\eta$ exceeds~$0.35$, while this appears in Figure \ref{fig:newbds2} when $\eta$ exceeds $0.64$ (approximately).
%%%%%%%%%%%%%%%%%%%%%%%%%%%%%%%%%%%%%%%%%%%%%%%%

%% Figure 1
\begin{figure}[h!]
	\centering
	\begin{tikzpicture}[scale=1]
		\begin{axis}[%legend style={at={(0,0)},anchor=north east},
			legend pos = outer north east,
			legend cell align={right},
			%width=0.75\textwidth,
			xmin=0, xmax=1,
			ymin=0, ymax=1,
			xtick={0,0.345,1},
			ytick={1},
			%legend pos=north west,
			xmajorgrids=true,
			grid style=dashed,
			every axis plot/.append style={thick},
			xlabel={Value of $\eta \in [0,1]$},
			ylabel={Bound on $\alpha_2(\eta)$}
			]

      \addplot+[color=blue,style = solid,mark=.,mark size=0.1pt]
      %asymtotaldistance(c)
      coordinates {
      	(0.0000000000, 1.000000000)
      	(0.02000000000, 0.9793548387)
      	(0.04000000000, 0.9587096774)
      	(0.06000000000, 0.9380645161)
      	(0.08000000000, 0.9174193548)
      	(0.1000000000, 0.8967741936)
      	(0.1200000000, 0.8761290322)
      	(0.1400000000, 0.8554838710)
      	(0.1600000000, 0.8348387097)
      	(0.1800000000, 0.8141935484)
      	(0.2000000000, 0.7935483871)
      	(0.2200000000, 0.7729032258)
      	(0.2400000000, 0.7522580645)
      	(0.2600000000, 0.7316129032)
      	(0.2800000000, 0.7109677419)
      	(0.3000000000, 0.6903225806)
      	(0.3200000000, 0.6696774194)
      	(0.3400000000, 0.6490322581)
      	(0.3600000000, 0.6283870967)
      	(0.3800000000, 0.6077419355)
      	(0.4000000000, 0.5870967742)
      	(0.4200000000, 0.5664516129)
      	(0.4400000000, 0.5458064516)
      	(0.4600000000, 0.5251612903)
      	(0.4800000000, 0.5045161290)
      	(0.5000000000, 0.4838709677)
      	(0.5200000000, 0.4632258065)
      	(0.5400000000, 0.4425806451)
      	(0.5600000000, 0.4219354839)
      	(0.5800000000, 0.4012903226)
      	(0.6000000000, 0.3806451613)
      	(0.6200000000, 0.3600000000)
      	(0.6400000000, 0.3393548387)
      	(0.6600000000, 0.3187096774)
      	(0.6800000000, 0.2980645161)
      	(0.7000000000, 0.2774193548)
      	(0.7200000000, 0.2567741935)
      	(0.7400000000, 0.2361290323)
      	(0.7600000000, 0.2154838710)
      	(0.7800000000, 0.1948387097)
      	(0.8000000000, 0.1741935484)
      	(0.8200000000, 0.1535483871)
      	(0.8400000000, 0.1329032258)
      	(0.8600000000, 0.1122580645)
      	(0.8800000000, 0.09161290323)
      	(0.9000000000, 0.07096774194)
      	(0.9200000000, 0.05032258065)
      	(0.9400000000, 0.02967741936)
      	(0.9600000000, 0.009032258065)
      	(0.9700000000, 0.0000000000)
      	(0.9800000000, 0.0000000000)
      	(1.000000000, 0.0000000000)
      };
      \addplot+[color=green,style = solid,mark=.,mark size=0.1pt]
      %asymsing
      coordinates {
      	(0.0000000000, 1.000000000)
      	(0.05000000000, 0.9500000000)
      	(0.1000000000, 0.9000000000)
      	(0.1500000000, 0.8500000000)
      	(0.2000000000, 0.8000000000)
      	(0.2500000000, 0.7500000000)
      	(0.3000000000, 0.7000000000)
      	(0.3500000000, 0.6500000000)
      	(0.4000000000, 0.6000000000)
      	(0.4500000000, 0.5500000000)
      	(0.5000000000, 0.5000000000)
      	(0.5500000000, 0.4500000000)
      	(0.6000000000, 0.4000000000)
      	(0.6500000000, 0.3500000000)
      	(0.7000000000, 0.3000000000)
      	(0.7500000000, 0.2500000000)
      	(0.8000000000, 0.2000000000)
      	(0.8500000000, 0.1500000000)
      	(0.9000000000, 0.1000000000)
      	(0.9500000000, 0.05000000000)
      	(1.000000000, 0.0000000000)
      };
      \addplot+[color=red,style = solid,mark=.,mark size=0.1pt]
      %aspherepacking using mhat
      coordinates {
      	(0.0000000000, 0.9991901700)
      	(0.02000000000, 0.9685637886)
      	(0.04000000000, 0.9422219636)
      	(0.06000000000, 0.9178098059)
      	(0.08000000000, 0.8946812896)
      	(0.1000000000, 0.8725241256)
      	(0.1200000000, 0.8511523975)
      	(0.1400000000, 0.8304506247)
      	(0.1600000000, 0.8103254350)
      	(0.1800000000, 0.7907163009)
      	(0.2000000000, 0.7715741341)
      	(0.2200000000, 0.7528584477)
      	(0.2400000000, 0.7345357839)
      	(0.2600000000, 0.7165822887)
      	(0.2800000000, 0.6989739825)
      	(0.3000000000, 0.6816928074)
      	(0.3200000000, 0.6647217487)
      	(0.3400000000, 0.6480473066)
      	(0.3600000000, 0.6316565776)
      	(0.3800000000, 0.6155397433)
      	(0.4000000000, 0.5996862992)
      	(0.4200000000, 0.5840882179)
      	(0.4400000000, 0.5687377465)
      	(0.4600000000, 0.5536284161)
      	(0.4800000000, 0.5387541372)
      	(0.5000000000, 0.5241092766)
      	(0.5200000000, 0.5096889919)
      	(0.5400000000, 0.4954889795)
      	(0.5600000000, 0.4815050485)
      	(0.5800000000, 0.4677338167)
      	(0.6000000000, 0.4541717806)
      	(0.6200000000, 0.4408161337)
      	(0.6400000000, 0.4276641518)
      	(0.6600000000, 0.4147133074)
      	(0.6800000000, 0.4019615156)
      	(0.7000000000, 0.3894067669)
      	(0.7200000000, 0.3770472228)
      	(0.7400000000, 0.3648812665)
      	(0.7600000000, 0.3529074378)
      	(0.7800000000, 0.3411244609)
      	(0.8000000000, 0.3295311938)
      	(0.8200000000, 0.3181265938)
      	(0.8400000000, 0.3069097437)
      	(0.8600000000, 0.2958798424)
      	(0.8800000000, 0.2850362055)
      	(0.9000000000, 0.2743782725)
      	(0.9080000000, 0.2701669779)	
      };
      \addplot+[color=black,style = solid,mark=.,mark size=0.1pt]
      %agilbert(spherecovering) using mhat
      coordinates {
      	(0.0000000000, 0.9991901700)
      	(0.02000000000, 0.9422219636)
      	(0.04000000000, 0.8946812896)
      	(0.06000000000, 0.8511523975)
      	(0.08000000000, 0.8103254350)
      	(0.1000000000 ,0.7715741341)
      	(0.1200000000, 0.7345357839)
      	(0.1400000000, 0.6989739825)
      	(0.1600000000, 0.6647217487)
      	(0.1800000000, 0.6316565776)
      	(0.2000000000, 0.5996862992)
      	(0.2200000000, 0.5687377465)
      	(0.2400000000, 0.5387541372)
      	(0.2600000000, 0.5096889919)
      	(0.2800000000, 0.4815050485)
      	(0.3000000000, 0.4541717806)
      	(0.3200000000, 0.4276641518)
      	(0.3400000000, 0.4019615156)
      	(0.3600000000, 0.3770472228)
      	(0.3800000000, 0.3529074378)
      	(0.4000000000, 0.3295311938)
      	(0.4200000000, 0.3069097437)
      	(0.4400000000, 0.2850362055)
      	(0.4600000000, 0.2639055542)
      	(0.4800000000, 0.2435143024)
      	(0.5000000000, 0.2238603634)
      	(0.5200000000, 0.2049430707)
      	(0.5400000000, 0.1867631133)
      	(0.5600000000, 0.1693225325)
      	(0.5800000000, 0.1526248329)
      	(0.6000000000, 0.1366750857)
      	(0.6200000000, 0.1214800942)
      	(0.6400000000, 0.1070486347)
      	(0.6600000000, 0.09339175001)
      	(0.6800000000, 0.08052312621)
      	(0.7000000000, 0.06845957069)
      	(0.7200000000, 0.05722161634)
      	(0.7400000000, 0.04683430142)
      	(0.7600000000, 0.03732818204)
      	(0.7800000000, 0.02874068021)
      	(0.8000000000, 0.02111791782)
      	(0.8200000000, 0.01451728138)
      	(0.8400000000, 0.009011130812)
      	(0.8600000000, 0.004692375583)
      	(0.8800000000, 0.001683275179)
      	(0.9000000000, 0.0001502111691)
      	(0.9080000000, 0.00000009)
      };
      
      \addplot+[color=black, mark=.,mark size=0.1pt, style=densely dotted]
      coordinates {
      	(0.3450,0.000)
      	(0.3450,0.6480473066)
      };
      \addplot+[color=black, mark=.,mark size=0.1pt, style=densely dotted]
      coordinates {
      	(0.000,0.6480473066)
      	(0.3450,0.6480473066)
      };
      \addplot+[color=black, mark=.,mark size=0.1pt, style=densely dotted]
      coordinates {
      	(0.908,0.000)
      	(0.908,1.000)
      };
      \addplot+[color=black, mark=,mark size=0.1pt, style=densely dotted]
      coordinates {
      	(0.96875,0.000)
      	(0.96875,1.000)
      };
      \addplot+[color=black, mark=.,mark size=0.1pt, style=densely dotted]
      coordinates {
      	(0.99902,0.000)
      	(0.99902,1.000)
      };

\legend{\small{Th. \ref{T-ASympPlotkin}},\small{Cor. \ref{T-ASympSingl} \& Th. \ref{th:as_proj_sphere_pack}},
	\small{Cor. \ref{cor:as_sphere_cov_pack}},\small{Cor. \ref{cor:as_sphere_cov_pack}}
}
\end{axis}
\end{tikzpicture}
\caption{\label{fig:newbds1} 
Comparison of bounds on $\alpha_2(\eta)$ for the sequences 
$(m_i) = (4,4,...), (n_i) = (2,2,...)$.\\
}
\end{figure}
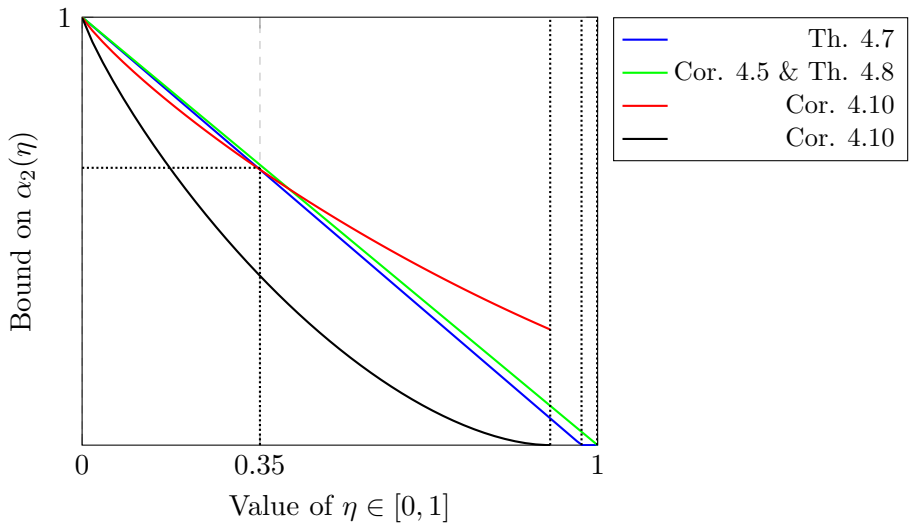
%%%%%%%%%%%%%%%%%%%%%%%%%%%%%%%%%%%%%%%%%%%%%%%%%%%%%%%%%%%%%%%%%%%%%%%%

\bigskip

%%Figure 2
\begin{figure}[h!]
	\centering
	\begin{tikzpicture}[scale=1]
		\begin{axis}[%legend style={at={(0,0)},anchor=north east},
			legend pos = outer north east,
			legend cell align={right},
			%width=0.75\textwidth,
			xmin=0, xmax=1,
			ymin=0, ymax=1,
			xtick={0,0.635,1},
			ytick={1},
			%legend pos=north west,
			xmajorgrids=true,
			grid style=dashed,
			every axis plot/.append style={thick},
			xlabel={Value of $\eta \in [0,1]$},
			ylabel={Bound on $\alpha_2(\eta)$}
			]
			
			\addplot+[color=blue,style = solid,mark=.,mark size=0.1pt]
			%asymtotaldistance(c)
			coordinates {
				(0.0000000000, 1.000000000)
				(0.02000000000, 0.9796825397)
				(0.04000000000, 0.9593650794)
				(0.06000000000, 0.9390476191)
				(0.08000000000, 0.9187301587)
				(0.1000000000, 0.8984126984)
				(0.1200000000, 0.8780952381)
				(0.1400000000, 0.8577777778)
				(0.1600000000, 0.8374603175)
				(0.1800000000, 0.8171428571)
				(0.2000000000, 0.7968253968)
				(0.2200000000, 0.7765079365)
				(0.2400000000, 0.7561904762)
				(0.2600000000, 0.7358730159)
				(0.2800000000, 0.7155555555)
				(0.3000000000, 0.6952380952)
				(0.3200000000, 0.6749206349)
				(0.3400000000, 0.6546031746)
				(0.3600000000, 0.6342857143)
				(0.3800000000, 0.6139682540)
				(0.4000000000, 0.5936507937)
				(0.4200000000, 0.5733333333)
				(0.4400000000, 0.5530158730)
				(0.4600000000, 0.5326984127)
				(0.4800000000, 0.5123809524)
				(0.5000000000, 0.4920634921)
				(0.5200000000, 0.4717460317)
				(0.5400000000, 0.4514285714)
				(0.5600000000, 0.4311111111)
				(0.5800000000, 0.4107936508)
				(0.6000000000, 0.3904761905)
				(0.6200000000, 0.3701587302)
				(0.6350000000, 0.3549206349)
				(0.6400000000, 0.3498412698)
				(0.6600000000, 0.3295238095)
				(0.6800000000, 0.3092063492)
				(0.7000000000, 0.2888888889)
				(0.7200000000, 0.2685714286)
				(0.7400000000, 0.2482539683)
				(0.7600000000, 0.2279365079)
				(0.7800000000, 0.2076190476)
				(0.8000000000, 0.1873015873)
				(0.8200000000, 0.1669841270)
				(0.8400000000, 0.1466666667)
				(0.8600000000, 0.1263492063)
				(0.8800000000, 0.1060317460)
				(0.9000000000, 0.08571428571)
				(0.9200000000, 0.06539682540)
				(0.9400000000, 0.04507936508)
				(0.9600000000, 0.02476190476)
				(0.9800000000, 0.004444444444)
				(0.9850000000, 0.0000000000)
				(1.000000000, 0.0000000000)
			};
			\addplot+[color=green,style = solid,mark=.,mark size=0.1pt]
			%asymsing
			coordinates {
				(0.0000000000, 1.000000000)
				(0.05000000000, 0.9500000000)
				(0.1000000000, 0.9000000000)
				(0.1500000000, 0.8500000000)
				(0.2000000000, 0.8000000000)
				(0.2500000000, 0.7500000000)
				(0.3000000000, 0.7000000000)
				(0.3500000000, 0.6500000000)
				(0.4000000000, 0.6000000000)
				(0.4500000000, 0.5500000000)
				(0.5000000000, 0.5000000000)
				(0.5500000000, 0.4500000000)
				(0.6000000000, 0.4000000000)
				(0.6500000000, 0.3500000000)
				(0.7000000000, 0.3000000000)
				(0.7500000000, 0.2500000000)
				(0.8000000000, 0.2000000000)
				(0.8500000000, 0.1500000000)
				(0.9000000000, 0.1000000000)
				(0.9500000000, 0.05000000000)
				(1.000000000, 0.0000000000)
			};
			\addplot+[color=red,style = solid,mark=.,mark size=0.1pt]
			%aspherepacking using mhat
			coordinates {
				(0.0000000000, 0.9979878116)
				(0.02000000000, 0.9652864923)
				(0.04000000000, 0.9356936247)
				(0.06000000000, 0.9081005868)
				(0.08000000000, 0.8818384873)
				(0.1000000000, 0.8565866759)
				(0.1200000000, 0.8321875692)
				(0.1400000000, 0.8084978859)
				(0.1600000000, 0.7854541956)
				(0.1800000000, 0.7629775825)
				(0.2000000000, 0.7410229440)
				(0.2200000000, 0.7195587615)
				(0.2400000000, 0.6985478139)
				(0.2600000000, 0.6779704938)
				(0.2800000000, 0.6578013467)
				(0.3000000000, 0.6380276462)
				(0.3200000000, 0.6186297358)
				(0.3400000000, 0.5995989586)
				(0.3600000000, 0.5809226510)
				(0.3800000000, 0.5625913579)
				(0.4000000000, 0.5445965053)
				(0.4200000000, 0.5269321959)
				(0.4400000000, 0.5095905588)
				(0.4600000000, 0.4925671917)
				(0.4800000000, 0.4758557755)
				(0.5000000000, 0.4594529975)
				(0.5200000000, 0.4433546646)
				(0.5400000000, 0.4275562278)
				(0.5600000000, 0.4120554143)
				(0.5800000000, 0.3968487454)
				(0.6000000000, 0.3819336668)
				(0.6200000000, 0.3673079383)
				(0.6350000000, 0.3565271273)
				(0.6400000000, 0.3529693600)
				(0.6600000000, 0.3389159332)
				(0.6800000000, 0.3251459812)
				(0.7000000000, 0.3116581172)
				(0.7200000000, 0.2984508573)
				(0.7400000000, 0.2855230761)
				(0.7600000000, 0.2728739310)
				(0.7800000000, 0.2605024946)
				(0.7900000000, 0.2544207454)
				(0.7971380000, 0.2501218860)	
			};
			\addplot+[color=black,style = solid,mark=.,mark size=0.1pt]
			%agilbert(spherecovering) using mhat
			coordinates {
				(0.0000000000, 0.9979878116)
				(0.005000000000, 0.9813775916)
				(0.02000000000, 0.9356936247)
				(0.04000000000, 0.8818384873)
				(0.06000000000, 0.8321875692)
				(0.08000000000, 0.7854541956)
				(0.1000000000, 0.7410229440)
				(0.1200000000, 0.6985478139)
				(0.1400000000, 0.6578013467)
				(0.1600000000, 0.6186297358)
				(0.1800000000, 0.5809226510)
				(0.2000000000, 0.5445965053)
				(0.2200000000, 0.5095905588)
				(0.2400000000, 0.4758557755)
				(0.2400000000, 0.4758557755)
				(0.2600000000, 0.4433546646)
				(0.2800000000, 0.4120554143)
				(0.3000000000, 0.3819336668)
				(0.3200000000, 0.3529693600)
				(0.3400000000, 0.3251459812)
				(0.3600000000, 0.2984508573)
				(0.3800000000, 0.2728739310)
				(0.4000000000, 0.2484081597)
				(0.4200000000, 0.2250488791)
				(0.4400000000, 0.2027936717)
				(0.4600000000, 0.1816422104)
				(0.4800000000, 0.1615960553)
				(0.5000000000, 0.1426585639)
				(0.5200000000, 0.1248348018)
				(0.5400000000, 0.1081315819)
				(0.5600000000, 0.09255756540)
				(0.5800000000, 0.07812342301)
				(0.6000000000, 0.06484204925)
				(0.6200000000, 0.05272875835)
				(0.6400000000, 0.04180143034)
				(0.6600000000, 0.03208055468)
				(0.6800000000, 0.02358913905)
				(0.7000000000, 0.01635248789)
				(0.7200000000, 0.01039791105)
				(0.7400000000, 0.005754503137)
				(0.7600000000, 0.002453214555)
				(0.7800000000, 0.0005275010607)
				(0.7900000000, 0.00009198052045)
				(0.7971380000, 0.0000000000)
			};
			\addplot+[color=black, mark=.,mark size=0.1pt, style=densely dotted]
			coordinates {
				(0.797138,0.000)
				(0.797138,1.000)
			};
			\addplot+[color=black, mark=.,mark size=0.1pt, style=densely dotted]
			coordinates {
				(0.63500000000,0.000)
				(0.63500000000,0.3565271273)
			};
			\addplot+[color=black, mark=.,mark size=0.1pt, style=densely dotted]
			coordinates {
				(0.000,0.3565271273)
				(0.63500000000,0.3565271273)
			};
			\addplot+[color=black, mark=,mark size=0.1pt, style=densely dotted]
			coordinates {
				(0.984375,0.000)
				(0.984375,1.000)
			};
			\addplot+[color=black, mark=.,mark size=0.1pt, style=densely dotted]
			coordinates {
				(0.99902,0.000)
				(0.99902,1.000)
			};
			\legend{\small{Th. \ref{T-ASympPlotkin}},\small{Cor. \ref{T-ASympSingl} \& Th. \ref{th:as_proj_sphere_pack}},
				\small{Cor. \ref{cor:as_sphere_cov_pack}},\small{Cor. \ref{cor:as_sphere_cov_pack}}
			}
		\end{axis}
	\end{tikzpicture}
	\caption{\label{fig:newbds2} 
	Comparison of bounds on $\alpha_2(\eta)$ for the sequences 
		$(m_i)=(n_i)=(4,4,...)$.\\
	}
\end{figure}
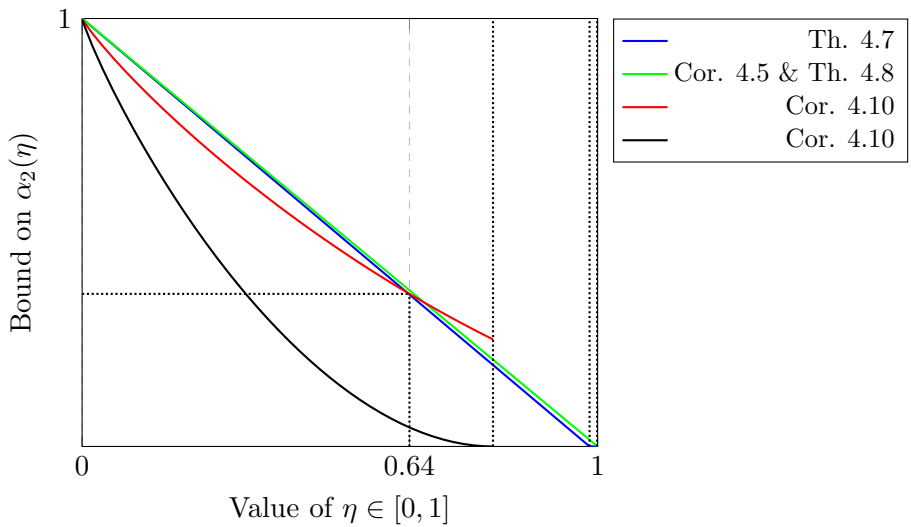
%%%%%%%%%%%%%%%%%%%%%%%%facking1st2plotsova%%%%%%%%%%%%%%%%%%%%%%%%%%%%%%%%%%%%%

In Figures \ref{fig:allbds1} and \ref{fig:allbds2}, we compare all bounds, where it makes sense to do so. 
In Figure \ref{fig:allbds1}, the Asymptotic Total-Distance Bound is sharpest among the bounds that can be applied for sequences $(m_i)$ and $(n_i)$ satisfying $m_1=10, \hat{m}=4$, and such that $(n_i)$ converges to 2.
In Figure \ref{fig:allbds2}, we assume $m_i=n_i=4$ for each $i$, in which case we see that the Asymptotic Sphere-Packing Bound yields the sharpest bound up to $\eta \approx 0.52$, while when $\eta$ exceeds this value the Induced Plotkin Bound is best.
%%%%%%%%%%%%%%%%%%%%%%%%%%%%%%%%%%%%%%%%%%%%%%%%%%%%%%%%%%%%%%%%%

%%%%%%%%%%%%%%%%%next2fackingplots%%%%%%%%%%%%%%%%%%%%%%%%%%%%%%%%%%%%%%%%%%%%%%%
%Figure 3
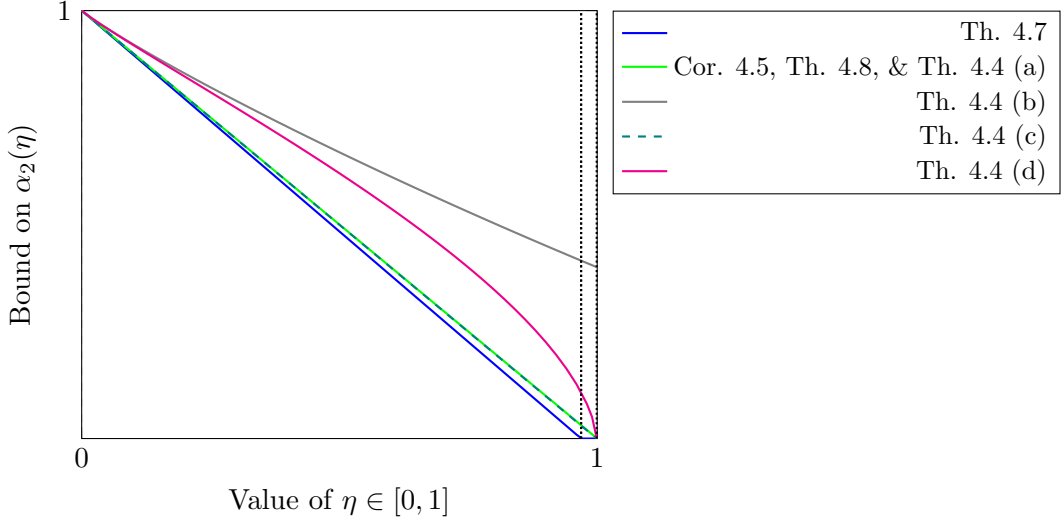
\begin{figure}[h!]
	\centering
	\begin{tikzpicture}[scale=1]
		\begin{axis}[%legend style={at={(0,0)},anchor=north east},
			legend pos = outer north east,
			legend cell align={right},
			%width=0.75\textwidth,
			xmin=0, xmax=1,
			ymin=0, ymax=1,
			xtick={0,1},
			ytick={1},
			%legend pos=north west,
			xmajorgrids=true,
			grid style=dashed,
			every axis plot/.append style={thick},
			xlabel={Value of $\eta \in [0,1]$},
			ylabel={Bound on $\alpha_2(\eta)$}
			]
			
			\addplot+[color=blue,style = solid,mark=.,mark size=0.1pt]
			%asymtotaldistance(c)
			coordinates {
				(0.0000000000, 1.000000000)
				(0.02000000000, 0.9793548387)
				(0.04000000000, 0.9587096774)
				(0.06000000000, 0.9380645161)
				(0.08000000000, 0.9174193548)
				(0.1000000000, 0.8967741936)
				(0.1200000000, 0.8761290322)
				(0.1400000000, 0.8554838710)
				(0.1600000000, 0.8348387097)
				(0.1800000000, 0.8141935484)
				(0.2000000000, 0.7935483871)
				(0.2200000000, 0.7729032258)
				(0.2400000000, 0.7522580645)
				(0.2600000000, 0.7316129032)
				(0.2800000000, 0.7109677419)
				(0.3000000000, 0.6903225806)
				(0.3200000000, 0.6696774194)
				(0.3400000000, 0.6490322581)
				(0.3600000000, 0.6283870967)
				(0.3800000000, 0.6077419355)
				(0.4000000000, 0.5870967742)
				(0.4200000000, 0.5664516129)
				(0.4400000000, 0.5458064516)
				(0.4600000000, 0.5251612903)
				(0.4800000000, 0.5045161290)
				(0.5000000000, 0.4838709677)
				(0.5200000000, 0.4632258065)
				(0.5400000000, 0.4425806451)
				(0.5600000000, 0.4219354839)
				(0.5800000000, 0.4012903226)
				(0.6000000000, 0.3806451613)
				(0.6200000000, 0.3600000000)
				(0.6400000000, 0.3393548387)
				(0.6600000000, 0.3187096774)
				(0.6800000000, 0.2980645161)
				(0.7000000000, 0.2774193548)
				(0.7200000000, 0.2567741935)
				(0.7400000000, 0.2361290323)
				(0.7600000000, 0.2154838710)
				(0.7800000000, 0.1948387097)
				(0.8000000000, 0.1741935484)
				(0.8200000000, 0.1535483871)
				(0.8400000000, 0.1329032258)
				(0.8600000000, 0.1122580645)
				(0.8800000000, 0.09161290323)
				(0.9000000000, 0.07096774194)
				(0.9200000000, 0.05032258065)
				(0.9400000000, 0.02967741936)
				(0.9600000000, 0.009032258065)
				(0.9700000000, 0.0000000000)
				(0.9800000000, 0.0000000000)
				(0.9900000000, 0.0000000000)
				(1.000000000, 0.0000000000)
			};
			\addplot+[color=green,style = solid,mark=.,mark size=0.1pt]
			%asymsing
			coordinates {
				(0.0000000000, 1.000000000)
				(0.05000000000, 0.9500000000)
				(0.1000000000, 0.9000000000)
				(0.1500000000, 0.8500000000)
				(0.2000000000, 0.8000000000)
				(0.2500000000, 0.7500000000)
				(0.3000000000, 0.7000000000)
				(0.3500000000, 0.6500000000)
				(0.4000000000, 0.6000000000)
				(0.4500000000, 0.5500000000)
				(0.5000000000, 0.5000000000)
				(0.5500000000, 0.4500000000)
				(0.6000000000, 0.4000000000)
				(0.6500000000, 0.3500000000)
				(0.7000000000, 0.3000000000)
				(0.7500000000, 0.2500000000)
				(0.8000000000, 0.2000000000)
				(0.8500000000, 0.1500000000)
				(0.9000000000, 0.1000000000)
				(0.9500000000, 0.05000000000)
				(1.000000000, 0.0000000000)
			};
			\addplot+[color=gray,style = solid, mark=.,mark size=0.1pt]
			%aclassicHamming
			coordinates {
				(0.0000000000, 1.000000000)
				(0.02000000000, 0.9819220960)
				(0.04000000000, 0.9658587649)
				(0.06000000000, 0.9505650429)
				(0.08000000000, 0.9357764194)
				(0.1000000000, 0.9213673521)
				(0.1200000000, 0.9072639655)
				(0.1400000000, 0.8934175019)
				(0.1600000000, 0.8797933576)
				(0.1800000000, 0.8663657045)
				(0.2000000000, 0.8531145363)
				(0.2200000000, 0.8400239095)
				(0.2400000000, 0.8270808283)
				(0.2600000000, 0.8142745059)
				(0.2800000000, 0.8015958528)
				(0.3000000000, 0.7890371131)
				(0.3200000000, 0.7765915976)
				(0.3400000000, 0.7642534848)
				(0.3600000000, 0.7520176677)
				(0.3800000000, 0.7398796359)
				(0.4000000000, 0.7278353819)
				(0.4200000000, 0.7158813270)
				(0.4400000000, 0.7040142603)
				(0.4600000000, 0.6922312898)
				(0.4800000000, 0.6805298018)
				(0.5000000000, 0.6689074268)
				(0.5200000000, 0.6573620116)
				(0.5400000000, 0.6458915948)
				(0.5600000000, 0.6344943869)
				(0.5800000000, 0.6231687529)
				(0.6000000000, 0.6119131972)
				(0.6200000000, 0.6007263509)
				(0.6400000000, 0.5896069605)
				(0.6600000000, 0.5785538785)
				(0.6800000000, 0.5675660549)
				(0.7000000000, 0.5566425294)
				(0.7200000000, 0.5457824256)
				(0.7400000000, 0.5349849449)
				(0.7600000000, 0.5242493614)
				(0.7800000000, 0.5135750184)
				(0.8000000000, 0.5029613234)
				(0.8200000000, 0.4924077455)
				(0.8400000000, 0.4819138125)
				(0.8600000000, 0.4714791078)
				(0.8800000000, 0.4611032690)
				(0.9000000000, 0.4507859853)
				(0.9200000000, 0.4405269964)
				(0.9400000000, 0.4303260910)
				(0.9600000000, 0.4201831058)
				(0.9800000000, 0.4100979248)
				(0.9900000000, 0.4050769873)
				(0.9990000000, 0.4005704802)
			};
		     	\addplot+[color=teal,style=dashed,mark=.,mark size=0.1pt]
		     %aclassicPlotkin
		     coordinates {
		     	(0.0000000000, 1.000000000)
		     	(0.02000000000, 0.9799804497)
		     	(0.04000000000, 0.9599608993)
		     	(0.06000000000, 0.9399413490)
		     	(0.08000000000, 0.9199217986)
		     	(0.1000000000, 0.8999022483)
		     	(0.1200000000, 0.8798826979)
		     	(0.1400000000, 0.8598631476)
		     	(0.1600000000, 0.8398435973)
		     	(0.1800000000, 0.8198240469)
		     	(0.2000000000, 0.7998044966)
		     	(0.2200000000, 0.7797849462)
		     	(0.2400000000, 0.7597653959)
		     	(0.2600000000, 0.7397458456)
		     	(0.2800000000, 0.7197262952)
		     	(0.3000000000, 0.6997067449)
		     	(0.3200000000, 0.6796871945)
		     	(0.3400000000, 0.6596676442)
		     	(0.3600000000, 0.6396480938)
		     	(0.3800000000, 0.6196285435)
		     	(0.4000000000, 0.5996089932)
		     	(0.4200000000, 0.5795894428)
		     	(0.4400000000, 0.5595698925)
		     	(0.4600000000, 0.5395503421)
		     	(0.4800000000, 0.5195307918)
		     	(0.5000000000, 0.4995112414)
		     	(0.5200000000, 0.4794916911)
		     	(0.5400000000, 0.4594721408)
		     	(0.5600000000, 0.4394525904)
		     	(0.5800000000, 0.4194330401)
		     	(0.6000000000, 0.3994134897)
		     	(0.6200000000, 0.3793939394)
		     	(0.6400000000, 0.3593743891)
		     	(0.6600000000, 0.3393548387)
		     	(0.6800000000, 0.3193352884)
		     	(0.7000000000, 0.2993157380)
		     	(0.7200000000, 0.2792961877)
		     	(0.7400000000, 0.2592766373)
		     	(0.7600000000, 0.2392570870)
		     	(0.7800000000, 0.2192375367)
		     	(0.8000000000, 0.1992179863)
		     	(0.8200000000, 0.1791984360)
		     	(0.8400000000, 0.1591788856)
		     	(0.8600000000, 0.1391593353)
		     	(0.8800000000, 0.1191397849)
		     	(0.9000000000, 0.09912023460)
		     	(0.9200000000, 0.07910068426)
		     	(0.9400000000, 0.05908113392)
		     	(0.9600000000, 0.03906158358)
		     	(0.9800000000, 0.01904203324)
		     	(1.000000000, 0.0000000000)
		     };	
		     
			\addplot+[color=magenta,style = solid,mark=.,mark size=0.1pt]
			%aclassicElias
			coordinates {
				(0.0000000000, 1.000000000)
				(0.02000000000, 0.9818380499)
				(0.04000000000, 0.9655399281)
				(0.06000000000, 0.9498682016)
				(0.08000000000, 0.9345606559)
				(0.1000000000, 0.9194921940)
				(0.1200000000, 0.9045883433)
				(0.1400000000, 0.8897990388)
				(0.1600000000, 0.8750878220)
				(0.1800000000, 0.8604265514)
				(0.2000000000, 0.8457924978)
				(0.2200000000, 0.8311666102)
				(0.2400000000, 0.8165324132)
				(0.2600000000, 0.8018752661)
				(0.2800000000, 0.7871818446)
				(0.3000000000, 0.7724397607)
				(0.3200000000, 0.7576372746)
				(0.3400000000, 0.7427630671)
				(0.3600000000, 0.7278060525)
				(0.3800000000, 0.7127552190)
				(0.4000000000, 0.6975994864)
				(0.4200000000, 0.6823275741)
				(0.4400000000, 0.6669278727)
				(0.4600000000, 0.6513883161)
				(0.4800000000, 0.6356962476)
				(0.5000000000, 0.6198382761)
				(0.5200000000, 0.6038001183)
				(0.5400000000, 0.5875664200)
				(0.5600000000, 0.5711205510)
				(0.5800000000, 0.5544443651)
				(0.6000000000, 0.5375179147)
				(0.6200000000, 0.5203191066)
				(0.6400000000, 0.5028232821)
				(0.6600000000, 0.4850026958)
				(0.6800000000, 0.4668258616)
				(0.7000000000, 0.4482567179)
				(0.7200000000, 0.4292535476)
				(0.7400000000, 0.4097675524)
				(0.7600000000, 0.3897409380)
				(0.7800000000, 0.3691042814)
				(0.8000000000, 0.3477728192)
				(0.8200000000, 0.3256410556)
				(0.8400000000, 0.3025746467)
				(0.8600000000, 0.2783976478)
				(0.8800000000, 0.2528713744)
				(0.9000000000, 0.2256568870)
				(0.9200000000, 0.1962420886)
				(0.9400000000, 0.1637807639)
				(0.9600000000, 0.1266601729)
				(0.9800000000, 0.08083818673)
				(0.9900000000, 0.05045602806)
				(0.9990000000, 0.0008005424237)	
			};	
		
			\addplot+[color=black, mark=.,mark size=0.1pt, style=densely dotted]
			coordinates {
				(0.96875,0.000)
				(0.96875,1.000)
			};
			\addplot+[color=black, mark=,mark size=0.1pt, style=densely dotted]
			coordinates {
				(0.96875,0.000)
				(0.96875,1.000)
			};
			\addplot+[color=black, mark=.,mark size=0.1pt, style=densely dotted]
			coordinates {
				(0.99902,0.000)
				(0.99902,1.000)
			};
			
			\legend{\small{Th. \ref{T-ASympPlotkin}},\small{Cor. \ref{T-ASympSingl}, Th. \ref{th:as_proj_sphere_pack}, \& Th. \ref{th:asinduced} (a) },
				\small{Th. \ref{th:asinduced} (b) },
				\small{Th. \ref{th:asinduced} (c) },\small{Th. \ref{th:asinduced} (d) }
			}
		\end{axis}
	\end{tikzpicture}
	\caption{\label{fig:allbds1} 
		Comparison of bounds on $\alpha_2(\eta)$ for 
		$(m_i)$, $(n_i)$, satisfying $m_1=10,\hat{m}=4,\hat{n}=2$.\\
	}
\end{figure}

%%%%%%%%%%%%%%%%%%%%%%%%%%%%%%%%%%%%%%%%%%%%%%%%%%%%%%%%%%%%%%%%%%%%%%%%

\bigskip

\begin{figure}[h!]
	\centering
	\begin{tikzpicture}[scale=1]
		\begin{axis}[%legend style={at={(0,0)},anchor=north east},
			legend pos = outer north east,
			legend cell align={right},
			%width=0.75\textwidth,
			xmin=0, xmax=1,
			ymin=0, ymax=1,
			xtick={0,0.52,1},
			ytick={1},
			%legend pos=north west,
			xmajorgrids=true,
			grid style=dashed,
			every axis plot/.append style={thick},
			xlabel={Value of $\eta \in [0,1]$},
			ylabel={Bound on $\alpha_2(\eta)$}
			]
			
			\addplot+[color=blue,style = solid,mark=.,mark size=0.1pt]
			%asymtotaldistance(c)
			coordinates {
				(0.0000000000, 1.000000000)
				(0.02000000000, 0.9796825397)
				(0.04000000000, 0.9593650794)
				(0.06000000000, 0.9390476191)
				(0.08000000000, 0.9187301587)
				(0.1000000000, 0.8984126984)
				(0.1200000000, 0.8780952381)
				(0.1400000000, 0.8577777778)
				(0.1600000000, 0.8374603175)
				(0.1800000000, 0.8171428571)
				(0.2000000000, 0.7968253968)
				(0.2200000000, 0.7765079365)
				(0.2400000000, 0.7561904762)
				(0.2600000000, 0.7358730159)
				(0.2800000000, 0.7155555555)
				(0.3000000000, 0.6952380952)
				(0.3200000000, 0.6749206349)
				(0.3400000000, 0.6546031746)
				(0.3600000000, 0.6342857143)
				(0.3800000000, 0.6139682540)
				(0.4000000000, 0.5936507937)
				(0.4200000000, 0.5733333333)
				(0.4400000000, 0.5530158730)
				(0.4600000000, 0.5326984127)
				(0.4800000000, 0.5123809524)
				(0.5000000000, 0.4920634921)
				(0.5200000000, 0.4717460317)
				(0.5400000000, 0.4514285714)
				(0.5600000000, 0.4311111111)
				(0.5800000000, 0.4107936508)
				(0.6000000000, 0.3904761905)
				(0.6200000000, 0.3701587302)
				(0.6350000000, 0.3549206349)
				(0.6400000000, 0.3498412698)
				(0.6600000000, 0.3295238095)
				(0.6800000000, 0.3092063492)
				(0.7000000000, 0.2888888889)
				(0.7200000000, 0.2685714286)
				(0.7400000000, 0.2482539683)
				(0.7600000000, 0.2279365079)
				(0.7800000000, 0.2076190476)
				(0.8000000000, 0.1873015873)
				(0.8200000000, 0.1669841270)
				(0.8400000000, 0.1466666667)
				(0.8600000000, 0.1263492063)
				(0.8800000000, 0.1060317460)
				(0.9000000000, 0.08571428571)
				(0.9200000000, 0.06539682540)
				(0.9400000000, 0.04507936508)
				(0.9600000000, 0.02476190476)
				(0.9800000000, 0.004444444444)
				(0.9850000000, 0.0000000000)
				(1.000000000, 0.0000000000)
			};
			\addplot+[color=green,style = solid,mark=.,mark size=0.1pt]
			%asymsing
			coordinates {
				(0.0000000000, 1.000000000)
				(0.05000000000, 0.9500000000)
				(0.1000000000, 0.9000000000)
				(0.1500000000, 0.8500000000)
				(0.2000000000, 0.8000000000)
				(0.2500000000, 0.7500000000)
				(0.3000000000, 0.7000000000)
				(0.3500000000, 0.6500000000)
				(0.4000000000, 0.6000000000)
				(0.4500000000, 0.5500000000)
				(0.5000000000, 0.5000000000)
				(0.5500000000, 0.4500000000)
				(0.6000000000, 0.4000000000)
				(0.6500000000, 0.3500000000)
				(0.7000000000, 0.3000000000)
				(0.7500000000, 0.2500000000)
				(0.8000000000, 0.2000000000)
				(0.8500000000, 0.1500000000)
				(0.9000000000, 0.1000000000)
				(0.9500000000, 0.05000000000)
				(1.000000000, 0.0000000000)
			};
			\addplot+[color=red,style = solid,mark=.,mark size=0.1pt]
			%aspherepacking using mhat
			coordinates {
				(0.0000000000, 0.9979878116)
				(0.02000000000, 0.9652864923)
				(0.04000000000, 0.9356936247)
				(0.06000000000, 0.9081005868)
				(0.08000000000, 0.8818384873)
				(0.1000000000, 0.8565866759)
				(0.1200000000, 0.8321875692)
				(0.1400000000, 0.8084978859)
				(0.1600000000, 0.7854541956)
				(0.1800000000, 0.7629775825)
				(0.2000000000, 0.7410229440)
				(0.2200000000, 0.7195587615)
				(0.2400000000, 0.6985478139)
				(0.2600000000, 0.6779704938)
				(0.2800000000, 0.6578013467)
				(0.3000000000, 0.6380276462)
				(0.3200000000, 0.6186297358)
				(0.3400000000, 0.5995989586)
				(0.3600000000, 0.5809226510)
				(0.3800000000, 0.5625913579)
				(0.4000000000, 0.5445965053)
				(0.4200000000, 0.5269321959)
				(0.4400000000, 0.5095905588)
				(0.4600000000, 0.4925671917)
				(0.4800000000, 0.4758557755)
				(0.5000000000, 0.4594529975)
				(0.5200000000, 0.4433546646)
				(0.5400000000, 0.4275562278)
				(0.5600000000, 0.4120554143)
				(0.5800000000, 0.3968487454)
				(0.6000000000, 0.3819336668)
				(0.6200000000, 0.3673079383)
				(0.6350000000, 0.3565271273)
				(0.6400000000, 0.3529693600)
				(0.6600000000, 0.3389159332)
				(0.6800000000, 0.3251459812)
				(0.7000000000, 0.3116581172)
				(0.7200000000, 0.2984508573)
				(0.7400000000, 0.2855230761)
				(0.7600000000, 0.2728739310)
				(0.7800000000, 0.2605024946)
				(0.7900000000, 0.2544207454)
				(0.7971380000, 0.2501218860)	
			};
			\addplot+[color=black,style = solid,mark=.,mark size=0.1pt]
			%agilbert(spherecovering) using mhat
			coordinates {
				(0.0000000000, 0.9979878116)
				(0.005000000000, 0.9813775916)
				(0.02000000000, 0.9356936247)
				(0.04000000000, 0.8818384873)
				(0.06000000000, 0.8321875692)
				(0.08000000000, 0.7854541956)
				(0.1000000000, 0.7410229440)
				(0.1200000000, 0.6985478139)
				(0.1400000000, 0.6578013467)
				(0.1600000000, 0.6186297358)
				(0.1800000000, 0.5809226510)
				(0.2000000000, 0.5445965053)
				(0.2200000000, 0.5095905588)
				(0.2400000000, 0.4758557755)
				(0.2400000000, 0.4758557755)
				(0.2600000000, 0.4433546646)
				(0.2800000000, 0.4120554143)
				(0.3000000000, 0.3819336668)
				(0.3200000000, 0.3529693600)
				(0.3400000000, 0.3251459812)
				(0.3600000000, 0.2984508573)
				(0.3800000000, 0.2728739310)
				(0.4000000000, 0.2484081597)
				(0.4200000000, 0.2250488791)
				(0.4400000000, 0.2027936717)
				(0.4600000000, 0.1816422104)
				(0.4800000000, 0.1615960553)
				(0.5000000000, 0.1426585639)
				(0.5200000000, 0.1248348018)
				(0.5400000000, 0.1081315819)
				(0.5600000000, 0.09255756540)
				(0.5800000000, 0.07812342301)
				(0.6000000000, 0.06484204925)
				(0.6200000000, 0.05272875835)
				(0.6400000000, 0.04180143034)
				(0.6600000000, 0.03208055468)
				(0.6800000000, 0.02358913905)
				(0.7000000000, 0.01635248789)
				(0.7200000000, 0.01039791105)
				(0.7400000000, 0.005754503137)
				(0.7600000000, 0.002453214555)
				(0.7800000000, 0.0005275010607)
				(0.7900000000, 0.00009198052045)
				(0.7971380000, 0.0000000000)
			};
			\addplot+[color=gray,style = solid,mark=.,mark size=0.1pt]
			%aclassicHamming
			coordinates {
				(0.0000000000, 1.000000000)
				(0.02000000000, 0.9700344895)
				(0.04000000000, 0.9451054114)
				(0.06000000000, 0.9221003561)
				(0.08000000000, 0.9003580468)
				(0.1000000000, 0.8795646283)
				(0.1200000000, 0.8595354113)
				(0.1400000000, 0.8401485019)
				(0.1600000000, 0.8213173905)
				(0.1800000000, 0.8029775074)
				(0.2000000000, 0.7850788367)
				(0.2200000000, 0.7675815191)
				(0.2400000000, 0.7504530658)
				(0.2600000000, 0.7336665094)
				(0.2800000000, 0.7171991263)
				(0.3000000000, 0.7010315265)
				(0.3200000000, 0.6851469875)
				(0.3400000000, 0.6695309550)
				(0.3600000000, 0.6541706618)
				(0.3800000000, 0.6390548318)
				(0.4000000000, 0.6241734465)
				(0.4200000000, 0.6095175588)
				(0.4400000000, 0.5950791415)
				(0.4600000000, 0.5808509649)
				(0.4800000000, 0.5668264944)
				(0.5000000000, 0.5529998066)
				(0.5200000000, 0.5393655181)
				(0.5400000000, 0.5259187258)
				(0.5600000000, 0.5126549557)
				(0.5800000000, 0.4995701202)
				(0.6000000000, 0.4866604805)
				(0.6200000000, 0.4739226142)
				(0.6400000000, 0.4613533879)
				(0.6600000000, 0.4489499327)
				(0.6800000000, 0.4367096231)
				(0.7000000000, 0.4246300590)
				(0.7200000000, 0.4127090491)
				(0.7400000000, 0.4009445967)
				(0.7600000000, 0.3893348878)
				(0.7800000000, 0.3778782798)
				(0.8000000000, 0.3665732918)
				(0.8200000000, 0.3554185968)
				(0.8400000000, 0.3444130137)
				(0.8600000000, 0.3335555017)
				(0.8800000000, 0.3228451542)
				(0.9000000000, 0.3122811945)
				(0.9200000000, 0.3018629718)
				(0.9375000000, 0.2928661586)
			};
		    \addplot+[color=teal,mark=.,mark size=0.1pt]
		    %aclassicPlotkin
		    coordinates {
		    	(0.0000000000, 1.000000000)
		    	(0.02000000000, 0.9786666667)
		    	(0.04000000000, 0.9573333333)
		    	(0.06000000000, 0.9360000000)
		    	(0.08000000000, 0.9146666666)
		    	(0.1000000000, 0.8933333333)
		    	(0.1200000000, 0.8720000000)
		    	(0.1400000000, 0.8506666667)
		    	(0.1600000000, 0.8293333334)
		    	(0.1800000000, 0.8080000000)
		    	(0.2000000000, 0.7866666667)
		    	(0.2200000000, 0.7653333333)
		    	(0.2400000000, 0.7440000000)
		    	(0.2600000000, 0.7226666667)
		    	(0.2800000000, 0.7013333333)
		    	(0.3000000000, 0.6800000000)
		    	(0.3200000000, 0.6586666667)
		    	(0.3400000000, 0.6373333333)
		    	(0.3600000000, 0.6160000000)
		    	(0.3800000000, 0.5946666666)
		    	(0.4000000000, 0.5733333333)
		    	(0.4200000000, 0.5520000000)
		    	(0.4400000000, 0.5306666667)
		    	(0.4600000000, 0.5093333333)
		    	(0.4800000000, 0.4880000000)
		    	(0.5000000000, 0.4666666667)
		    	(0.5200000000, 0.4453333333)
		    	(0.5400000000, 0.4240000000)
		    	(0.5600000000, 0.4026666667)
		    	(0.5800000000, 0.3813333333)
		    	(0.6000000000, 0.3600000000)
		    	(0.6200000000, 0.3386666667)
		    	(0.6400000000, 0.3173333333)
		    	(0.6600000000, 0.2960000000)
		    	(0.6800000000, 0.2746666667)
		    	(0.7000000000, 0.2533333333)
		    	(0.7200000000, 0.2320000000)
		    	(0.7400000000, 0.2106666667)
		    	(0.7600000000, 0.1893333333)
		    	(0.7800000000, 0.1680000000)
		    	(0.8000000000, 0.1466666667)
		    	(0.8200000000, 0.1253333333)
		    	(0.8400000000, 0.1040000000)
		    	(0.8600000000, 0.08266666667)
		    	(0.8800000000, 0.06133333333)
		    	(0.9000000000, 0.04000000000)
		    	(0.9200000000, 0.01866666667)
		    	(0.9400000000, 0.0000000000)
		    	(0.9600000000, 0.0000000000)
		    	(0.9800000000, 0.0000000000)
		    	(1.000000000, 0.0000000000)
		    };
			\addplot+[color=magenta,style = solid,mark=.,mark size=0.1pt]
			%aclassicElias
			coordinates {
				(0.0000000000, 1.000000000)
				(0.02000000000, 0.9698925403)
				(0.04000000000, 0.9445868972)
				(0.06000000000, 0.9209955569)
				(0.08000000000, 0.8984684799)
				(0.1000000000, 0.8766984085)
				(0.1200000000, 0.8555047837)
				(0.1400000000, 0.8347682761)
				(0.1600000000, 0.8144038349)
				(0.1800000000, 0.7943475043)
				(0.2000000000, 0.7745492003)
				(0.2200000000, 0.7549684140)
				(0.2400000000, 0.7355714935)
				(0.2600000000, 0.7163298359)
				(0.2800000000, 0.6972186356)
				(0.3000000000, 0.6782159864)
				(0.3200000000, 0.6593022169)
				(0.3400000000, 0.6404593851)
				(0.3600000000, 0.6216708825)
				(0.3800000000, 0.6029211152)
				(0.4000000000, 0.5841952399)
				(0.4200000000, 0.5654789389)
				(0.4400000000, 0.5467582205)
				(0.4600000000, 0.5280192365)
				(0.4800000000, 0.5092481092)
				(0.5000000000, 0.4904307603)
				(0.5200000000, 0.4715527346)
				(0.5400000000, 0.4525990137)
				(0.5600000000, 0.4335538093)
				(0.5800000000, 0.4144003296)
				(0.6000000000, 0.3951205059)
				(0.6200000000, 0.3756946660)
				(0.6400000000, 0.3561011325)
				(0.6600000000, 0.3363157213)
				(0.6800000000, 0.3163110990)
				(0.7000000000, 0.2960559420)
				(0.7200000000, 0.2755138107)
				(0.7400000000, 0.2546416024)
				(0.7600000000, 0.2333873635)
				(0.7800000000, 0.2116870890)
				(0.8000000000, 0.1894598472)
				(0.8200000000, 0.1665999798)
				(0.8400000000, 0.1429638107)
				(0.8600000000, 0.1183450386)
				(0.8800000000, 0.09242356950)
				(0.9000000000, 0.06463848221)
				(0.9200000000, 0.03375348690)
				(0.9375000000, 0.0000000000)
			};	
			
		    \addplot+[color=black, mark=.,mark size=0.1pt, style=densely dotted]
		    coordinates {
		    	(0.797138,0.000)
		    	(0.797138,1.000)
		    };
		    \addplot+[color=black, mark=.,mark size=0.1pt, style=densely dotted]
		    coordinates {
		    	(0.5200000000,0.000)
		    	(0.5200000000, 0.4453333333)
		    };
		    \addplot+[color=black, mark=.,mark size=0.1pt, style=densely dotted]
		    coordinates {
		    	(0.000,0.4453333333)
		    	(0.5200000000, 0.4453333333)
		    };
		    \addplot+[color=black, mark=,mark size=0.1pt, style=densely dotted]
		    coordinates {
		    	(0.984375,0.000)
		    	(0.984375,1.000)
		    };
		    \addplot+[color=black, mark=.,mark size=0.1pt, style=densely dotted]
		    coordinates {
		    	(0.9375,0.000)
		    	(0.9375,1.000)
		    };	
			\legend{\small{Th. \ref{T-ASympPlotkin}},\small{Cor. \ref{T-ASympSingl}, Th. \ref{th:as_proj_sphere_pack}, \& Th. \ref{th:asinduced} (a) },
			\small{Cor. \ref{cor:as_sphere_cov_pack}},\small{Cor. \ref{cor:as_sphere_cov_pack}},\small{Th. \ref{th:asinduced} (b) },
			\small{Th. \ref{th:asinduced} (c) },\small{Th. \ref{th:asinduced} (d) }
		}
		\end{axis}
	\end{tikzpicture}
	\caption{\label{fig:allbds2} 
		Comparison of bounds on $\alpha_2(\eta)$ for the sequences $(m_i)=(n_i)=(4,4,...)$.\\
	}
\end{figure}
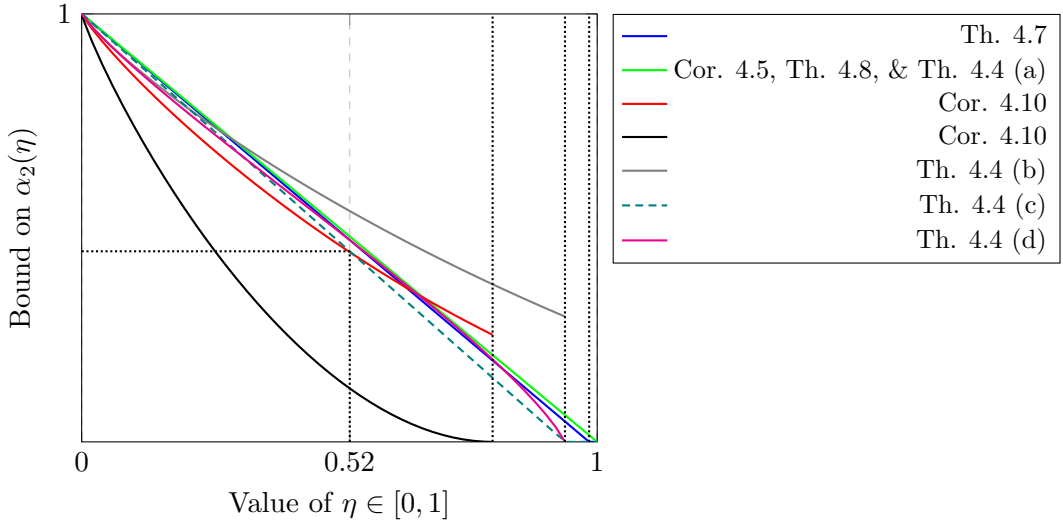
%%%%%%%%%%%%%%%%%%%%%%%%%fackingplotsova%%%%%%%%%%%%%%%%%%%%%%%%%%%%%%%%%%

%%%%%%%%%%%%%%%%%%%%
\section{Duality and MacWilliams Identities}\label{S-Duality}
For the rest of the paper we again focus on $\F_{q}$-linear sum-rank metric codes.
In this section we will investigate the distributions introduced in Definition~\ref{D-WeightEnum} under duality.
We show that the sum-rank distribution does not obey a MacWilliams identity, while both the rank-list distribution and the support distribution do. 
Moreover, we give explicit formulas
for the corresponding MacWilliams transformations. We follow the notation of Section~\ref{S-Prelim}; recall in particular
\eqref{e-nimi}, \eqref{e-Pi} and~\eqref{e-NM} and Definition~\ref{D-Lattice}.

%%%%%%%%%%%%%
\begin{definition}\label{D-Dual}
The \textbf{dual} of a code $C \le \Pi$ is defined as
\[
   C^\perp:=\bigg\{(Y_1,...,Y_t) \in \Pi\,\bigg|\, \sum_{i=1}^t \subspace{ X_i, Y_i } =0 \mbox{ for all $(X_1,...,X_t) \in C$}\bigg\} \le \Pi,
\]
where $\subspace{\, \cdot, \cdot\,}$ denotes the trace-product.
Furthermore, the \textbf{dual} of $\bm{U} \in \mL$ is $\bm{U}^\perp:=(U_1^\perp,...,U_t^\perp)$, where
$U_i^\perp$ is the orthogonal of $U_i$ with respect to the standard inner product of
$\F_q^{n_i}$.
\end{definition}
%%%%%%%%%%%%%%%%%%%%

One easily observes that $\Pi({\bm U}^\perp)=\Pi({\bm U})^\perp$, where $\Pi({\bm U})$ is as in Definition~\ref{D-Short}.
Furthermore, using the isomorphism~$\psi$ from~(\ref{e-Psi})
 and the ordinary trace dual on $\F_q^{N\times M}$ we obtain $\psi(C^\perp)=\psi(C)^\perp\cap\F_q^{N\times M}[\mP]$.
 
Our first observation, shown in the following example, illustrates that the sum-rank distributions of a code~$C$ and its dual~$C^\perp$ do not satisfy
a MacWilliams identity.

%%%%%%%%%%%%%%%%%%
\begin{example}
The sum-rank distribution of a code does not determine the sum-rank distribution of its dual code.
Consider e.g. the 1-dimensional codes $C_1$ and $C_2$ in $\Pi=\F_2^{2\times2}\times\F_2^{2\times2}$ 
generated by the matrix tuples
\[
     \left(\begin{pmatrix} 1 & 0 \\ 0 & 1 \end{pmatrix}, \begin{pmatrix} 0 & 0 \\ 0&0\end{pmatrix} \right)
 \ \text{ and } \
     \left(\begin{pmatrix} 1 & 0 \\ 0 & 0 \end{pmatrix}, \begin{pmatrix} 1 & 0\\0&0 \end{pmatrix} \right),
\]
respectively. Then $C_1$ and $C_2$ have the same sum-rank distribution. However, $C_1^\perp$ has
12 elements of sum-rank 1 and $C_2^\perp$ has 10.
\end{example}
%%%%%%%%%%%%%%%%%%

On the positive side, the rank-list distribution and the support distribution of sum-rank metric codes
do satisfy MacWilliams identities. 
We establish these results using a combinatorial technique 
that gives an explicit formula for the transformation between the distributions.
In order to do so, we need the following observation, which can be thought of as a sum-rank version of the duality between puncturing and shortening. It extends \cite[Lem.~28]{alb1} from rank-metric codes to sum-rank metric codes.

%%%%%%%%%%%%%%%%
\begin{proposition} \label{P-CUdual}
Let $C \le \Pi$ and $\bm{U}=(U_1,\ldots,U_t) \in \mL$. Let $u_i:=\dim(U_i)$ for $i \in [t]$. Then
$$|C(\bm{U})| = \frac{|C|}{|\Pi({\bm U})^\perp|} |C^\perp(\bm{U}^\perp)|=\frac{|C|}{q^{\sum_{i=1}^t m_i(n_i-u_i)}} |C^\perp(\bm{U}^\perp)|.$$
\end{proposition}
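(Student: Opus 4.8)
The plan is to count, in two ways, the size of the set
\[
  S:=\{(X,Y)\in C\times C^\perp \mid \sigma(X)\subseteq\bm{U}\text{ and }\sigma(Y)\subseteq\bm{U}^\perp\}.
\]
Observe first that $\sigma(X)\subseteq\bm{U}$ is equivalent to $X\in\Pi(\bm{U})$, and likewise $\sigma(Y)\subseteq\bm{U}^\perp$ is equivalent to $Y\in\Pi(\bm{U}^\perp)=\Pi(\bm{U})^\perp$ by the remark following Definition~\ref{D-Dual}. Hence $S=(C\cap\Pi(\bm{U}))\times(C^\perp\cap\Pi(\bm{U})^\perp)=C(\bm{U})\times C^\perp(\bm{U}^\perp)$, so that $|S|=|C(\bm{U})|\cdot|C^\perp(\bm{U}^\perp)|$. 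This is one of the two counts; the whole content is to produce the other.

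For the second count I would fix $X\in C$ and count the number of $Y\in C^\perp$ with $Y\in\Pi(\bm{U})^\perp$ — but it is cleaner to argue at the level of the trace-product pairing directly. Consider the non-degenerate $\F_q$-bilinear form $\langle X,Y\rangle=\sum_{i=1}^t\langle X_i,Y_i\rangle$ on $\Pi$. For a fixed subspace (code) $D\le\Pi$, the restriction of this form gives, for any $W\le\Pi$, the identity $|\{Y\in D\mid \langle X,Y\rangle=0\ \forall X\in W\}|=|D|/|W|$ provided $W\subseteq D^\perp{}^\perp=$ the smallest flat containing the pairing data — more precisely, by elementary linear algebra over $\F_q$, for subspaces $W,D\le\Pi$ one has $|D\cap W^\perp|=|D|\cdot|W^\perp\cap(D^\perp)^\perp|/\dots$; to avoid this getting unwieldy I would instead invoke the standard fact, valid for any two $\F_q$-subspaces $A,B$ of a space equipped with a non-degenerate pairing, that $|A\cap B^\perp| = |A|\,|B^\perp| / |A+B^\perp| = |A|/|A^\perp+B|\cdot|B^\perp|$, equivalently
\[
  |A\cap B^\perp|\cdot|A^\perp+B| = |A|\cdot|B^\perp|.
\]
Apply this with $A=C$ and $B=\Pi(\bm{U})$, so $B^\perp=\Pi(\bm{U})^\perp=\Pi(\bm{U}^\perp)$. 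Then $A\cap B^\perp=C\cap\Pi(\bm{U}^\perp)$. This is not quite $C(\bm{U})$; rather it is $C^\perp$-flavoured. So in fact the clean route is: apply the identity with $A=C^\perp$, $B=\Pi(\bm{U}^\perp)$, giving $B^\perp=\Pi(\bm{U})$ and $A\cap B^\perp = C^\perp\cap\Pi(\bm{U})$. That is still not $C^\perp(\bm{U}^\perp)$. Let me instead take $A=C$, $B=\Pi(\bm{U})^\perp$: then $B^\perp=\Pi(\bm{U})$, and $A\cap B^\perp=C\cap\Pi(\bm{U})=C(\bm{U})$, while $A^\perp+B=C^\perp+\Pi(\bm{U})^\perp=\big(C\cap\Pi(\bm{U})\big)^\perp$... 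I would organize this as follows: use the dimension formula $\dim(C(\bm{U}))=\dim C+\dim\Pi(\bm{U})-\dim(C+\Pi(\bm{U}))$ together with $(C+\Pi(\bm{U}))^\perp=C^\perp\cap\Pi(\bm{U})^\perp=C^\perp\cap\Pi(\bm{U}^\perp)=C^\perp(\bm{U}^\perp)$, hence $\dim(C+\Pi(\bm{U}))=\dim\Pi-\dim C^\perp(\bm{U}^\perp)$. Substituting and using $\dim\Pi=\dim\Pi(\bm{U})+\dim\Pi(\bm{U})^\perp$ yields
\[
  \dim C(\bm{U}) = \dim C - \dim\Pi(\bm{U})^\perp + \dim C^\perp(\bm{U}^\perp),
\]
which is exactly the claimed identity after exponentiating and noting $\dim\Pi(\bm{U})^\perp=\dim\Pi(\bm{U}^\perp)=\sum_i m_i(n_i-u_i)$ by Remark~\ref{R-PiU}.

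So the actual proof is short: (1) recall $\Pi(\bm{U}^\perp)=\Pi(\bm{U})^\perp$; (2) the key algebraic input is the identity $(C+\Pi(\bm{U}))^\perp=C^\perp\cap\Pi(\bm{U})^\perp$, a standard property of orthogonal complements with respect to a non-degenerate pairing, which identifies the orthogonal complement of the sum with the intersection of the complements; (3) combine with the elementary dimension formula for $\dim(C\cap\Pi(\bm{U}))$ and the fact that $\dim W+\dim W^\perp=\dim\Pi$ for every subspace $W\le\Pi$; (4) read off the dimension of $\Pi(\bm{U})^\perp$ from Remark~\ref{R-PiU}. The main (really the only) obstacle is purely bookkeeping: making sure the pairing on $\Pi$ (the sum of the trace-products on the blocks) is genuinely non-degenerate so that $\dim W+\dim W^\perp=\dim\Pi$ and $(V+W)^\perp=V^\perp\cap W^\perp$ both hold — but the trace-product on each $\F_q^{n_i\times m_i}$ is non-degenerate, so the orthogonal sum is too, and there is nothing subtle beyond tracking the exponents correctly.
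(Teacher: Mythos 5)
Your final argument is correct and is essentially the paper's proof: the paper derives $\dim(C\cap A)=\dim(C)-\dim(A^\perp)+\dim(C^\perp\cap A^\perp)$ from $(C\cap A)^\perp=C^\perp+A^\perp$ and the dimension formula, then specializes $A=\Pi(\bm{U})$ using $\Pi(\bm{U})^\perp=\Pi(\bm{U}^\perp)$ and Remark~\ref{R-PiU}, exactly as you do (you merely dualize the bookkeeping by starting from $(C+\Pi(\bm{U}))^\perp=C^\perp\cap\Pi(\bm{U})^\perp$). The preliminary detours (double counting, the $|A\cap B^\perp|$ identity) are unnecessary but harmless, and your closing remark on non-degeneracy of the blockwise trace pairing covers the only point needing verification.
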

%%%%%%%%%%%%%%%%%

\begin{proof}
Let $C,A \le \Pi$ be  linear subspaces.
Then
\[
   \dim(C \cap A) = \dim(C)- \dim(A^\perp)+\dim(C^\perp \cap A^\perp),
\]
which can be seen by taking dimensions in the identity
$(C \cap A)^\perp=C^\perp + A^\perp$ and using the dimension formula for the sum of vector spaces.
Specializing to $A=\Pi(\bm{U})$, using $A^\perp =\Pi({\bm U}^\perp)$ and applying Remark~\ref{R-PiU}, we obtain the desired result.
\end{proof}

Now we turn to the support distribution. 
Define the partial order $\leq $ on $\N_0^t$ by
\begin{equation}\label{e-vlessu}
    \bm{v}\leq\bm{u}:\Longleftrightarrow v_i\leq u_i\text{ for all }i\in[t],
\end{equation}
for any ${\bm u}=(u_1,\ldots,u_t),\, {\bm v}=(v_1,\ldots,v_t)\in\N_0^t$. 
The following result shows that the support distribution of~$C^\perp$ is fully determined by the support distribution of~$C$.

%%%%%%%%%%%%%%%%%
\begin{theorem}
Let $C \le \Pi$.
Moreover, let $\bm{U}=(U_1,\ldots,U_t) \in \mL$ and $\dim(\bm{U})=\bm{u}=(u_1,\ldots,u_t)$. Then
\[
   W_{\bm{U}}(C^\perp) = 
   \frac{1}{|C|}\sum_{\bm{H} \in \mL} W_{\bm{H}}(C)\sum_{\bm{v} \le \bm{u}}q^{\sum_{i=1}^t m_iv_i}
   \prod_{i=1}^t (-1)^{u_i-v_i} q^{\binom{u_i-v_i}{2}} 
    \GaussianD{\dim(H_i^\perp \cap U_i)}{v_i}_q.
\]
In particular, the support distribution of~$C^\perp$ is uniquely determined by the support distribution of~$C$ via an invertible linear transformation.
\end{theorem}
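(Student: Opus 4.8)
The plan is to combine three ingredients: the shortening--duality relation of Proposition~\ref{P-CUdual}, the expansion~\eqref{e-CUWV} of $|C(\bm{U})|$ in terms of the support distribution, and M\"obius inversion on the lattice~$\mL$ of Definition~\ref{D-Lattice}. First I would apply~\eqref{e-CUWV} to the code $C^\perp$: for every $\bm{W}\in\mL$ one has $|C^\perp(\bm{W})|=\sum_{\bm{V}\le\bm{W}}W_{\bm{V}}(C^\perp)$. Since $\mL$ is a finite poset with M\"obius function $\mu_\mL$, M\"obius inversion (see~\cite{stanleyec}) yields $W_{\bm{U}}(C^\perp)=\sum_{\bm{W}\le\bm{U}}\mu_\mL(\bm{W},\bm{U})\,|C^\perp(\bm{W})|$.

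Next I would eliminate $|C^\perp(\bm{W})|$ by means of Proposition~\ref{P-CUdual}. Applying that proposition to the code $C$ with the subspace $\bm{W}^\perp$ in place of $\bm{U}$, and using $(\bm{W}^\perp)^\perp=\bm{W}$ together with $\dim W_i^\perp=n_i-\dim W_i$, gives $|C^\perp(\bm{W})|=q^{\sum_{i=1}^t m_i\dim W_i}\,|C(\bm{W}^\perp)|/|C|$. Substituting, expanding $|C(\bm{W}^\perp)|=\sum_{\bm{H}\le\bm{W}^\perp}W_{\bm{H}}(C)$ via~\eqref{e-CUWV}, and interchanging the order of summation, I obtain
\[
   W_{\bm{U}}(C^\perp)=\frac{1}{|C|}\sum_{\bm{H}\in\mL}W_{\bm{H}}(C)\sum_{\substack{\bm{W}\le\bm{U}\\ \bm{W}\le\bm{H}^\perp}}\mu_\mL(\bm{W},\bm{U})\,q^{\sum_{i=1}^t m_i\dim W_i},
\]
where I used that $\bm{H}\le\bm{W}^\perp$ is equivalent to $\bm{W}\le\bm{H}^\perp$, since taking orthogonal complements reverses inclusion in each coordinate.

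It then remains to evaluate the inner sum. The two constraints on $\bm{W}$ say exactly that $W_i$ is a subspace of $H_i^\perp\cap U_i$ for every $i$, so $\bm{W}$ ranges over the product $\prod_i\{W_i\le H_i^\perp\cap U_i\}$. Because $\mu_\mL$ factors as a product over~$i$ (Definition~\ref{D-Lattice}), because $\mu_{\mL_i}(W_i,U_i)=(-1)^{u_i-\dim W_i}q^{\binom{u_i-\dim W_i}{2}}$ depends on $W_i$ only through $\dim W_i$, and because $q^{\sum_i m_i\dim W_i}$ also factors, the inner sum splits as a product of per-coordinate sums. Grouping the subspaces $W_i\le H_i^\perp\cap U_i$ by dimension $v_i$, of which there are $\GaussianD{\dim(H_i^\perp\cap U_i)}{v_i}_q$ (and this Gaussian binomial vanishes for $v_i>\dim(H_i^\perp\cap U_i)$, while $\dim(H_i^\perp\cap U_i)\le u_i$ always), each per-coordinate sum becomes $\sum_{v_i=0}^{u_i}q^{m_iv_i}(-1)^{u_i-v_i}q^{\binom{u_i-v_i}{2}}\GaussianD{\dim(H_i^\perp\cap U_i)}{v_i}_q$. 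Expanding the product over $i$ reproduces exactly the claimed formula.

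Finally, the displayed identity exhibits $\big(W_{\bm{U}}(C^\perp)\big)_{\bm{U}\in\mL}$ as the image of $\big(W_{\bm{H}}(C)\big)_{\bm{H}\in\mL}$ under an explicit linear map. This map is invertible because it is the composite of operations each of which is invertible: the zeta transform~\eqref{e-CUWV} and its inverse (M\"obius inversion) on~$\mL$, the coordinatewise bijection $\bm{H}\mapsto\bm{H}^\perp$ of~$\mL$, and multiplication by the nonzero scalar $q^{\sum_i m_i\dim W_i}/|C|$; alternatively one may observe that running the same argument with $C$ and $C^\perp$ interchanged recovers $\big(W_{\bm{H}}(C^{\perp\perp})\big)_{\bm H}=\big(W_{\bm{H}}(C)\big)_{\bm H}$. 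I expect the main point requiring care to be the interchange of summation together with the bookkeeping of the orthogonality flip $\bm{H}\le\bm{W}^\perp\Leftrightarrow\bm{W}\le\bm{H}^\perp$ and the exact exponent produced by Proposition~\ref{P-CUdual}; once the inner sum is isolated over $\bm{W}$ ranging through subspaces of $H_i^\perp\cap U_i$ in each coordinate, the factorization and the dimension count are routine.
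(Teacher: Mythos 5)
Your proposal is correct and follows essentially the same route as the paper's proof: expand $W_{\bm U}(C^\perp)$ by M\"obius inversion of~\eqref{e-CUWV}, convert $|C^\perp(\bm W)|$ into $|C(\bm W^\perp)|$ via Proposition~\ref{P-CUdual}, and then count, for each $\bm H$ with $W_{\bm H}(C)$ codewords, the subspaces $\bm V\le \bm H^\perp\cap\bm U$ of fixed dimension with Gaussian binomials. The only cosmetic differences are that you apply Proposition~\ref{P-CUdual} with $C$ (so the factor $q^{\sum_i m_i v_i}/|C|$ appears directly, whereas the paper uses $|C^\perp|$ and simplifies via $|C^\perp|=|\Pi|/|C|$ at the end) and that you phrase the paper's two-way count of pairs $(\bm V,X)$ as an interchange of summation.
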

%%%%%%%%%%%%%%%%%

\begin{proof}
We follow/extend the approach of~\cite{ravagnani2018duality} based on combinatorics.
Fix $\bm{U} \in \mL$ as in the theorem.
Identity~\eqref{e-CUWV} applied to $C^\perp$ along with M\"obius inversion in the lattice $\mL$ yields
\[
   W_{\bm{U}}(C^\perp) \ = \  \sum_{\substack{\bm{V} \in \mL \\ \bm{V} \le \bm{U}}}
|C^\perp(\bm{V})| \,  \mu_\mL(\bm{V},\bm{U}).
\]
Using Proposition~\ref{P-CUdual} and the explicit expression for the M\"obius function in Definition~\ref{D-Lattice} 
we obtain
\begin{equation} \label{ee1}
W_{\bm{U}}(C^\perp) = \sum_{\bm{v} \le \bm{u}} \;
\frac{|C^\perp|}{q^{\sum_{i=1}^t m_i(n_i-v_i)}}
\prod_{i=1}^t (-1)^{u_i-v_i} q^{\binom{u_i-v_i}{2}} 
\sum_{\substack{\bm{V} \in \mL \\ \bm{V} \le \bm{U} \\ \dim(\bm{V})=\bm{v}}}
|C(\bm{V}^\perp)|.
\end{equation}
Fix $\bm{v} \le \bm{u}$ and consider the set
\[
    \{(\bm{V},X)\in\mL\times C \mid  \bm{V} \le \bm{U}, \, \dim(\bm{V})=\bm{v},\, \sigma(X) \le \bm{V}^\perp\}.
\]
Counting its elements  in two ways, we obtain
\begin{align*}
   \sum_{\substack{\bm{V} \in \mL \\ \bm{V} \le \bm{U} \\ \dim(\bm{V})=\bm{v}}}
   |C(\bm{V}^\perp)|
   & =\sum_{\bm{H} \in \mL} 
   \sum_{\substack{X \in C \\ \sigma(X)=\bm{H}}} \big|\{\bm{V} \in \mL \mid 
   \dim(\bm{V})=\bm{v}, \, \bm{V} \le \bm{H}^\perp\cap \bm{U}\}\big|\\
   &=\sum_{\bm{H} \in \mL} W_{\bm{H}}(C) \prod_{i=1}^t \GaussianD{\dim(H_i^\perp \cap U_i)}{v_i}_q.
\end{align*}
Combining this with \eqref{ee1} and using $|C^\perp|=|\Pi|/|C|=q^{\sum_{i=1}^t m_in_i}/|C|$, we arrive at
$$W_{\bm{U}}(C^\perp) = \frac{1}{|C|}
\sum_{\bm{H} \in \mL} W_{\bm{H}}(C)
 \sum_{\bm{v} \le \bm{u}} \; 
q^{\sum_{i=1}^t m_iv_i}
\prod_{i=1}^t (-1)^{u_i-v_i} q^{\binom{u_i-v_i}{2}} 
 \qbin{\dim(H_i^\perp \cap U_i)}{v_i}{q},$$
as desired.
\end{proof}

Similarly, we obtain a MacWilliams duality for the rank-list distribution.

%%%%%%%%%%%%%%%%%%%
\begin{theorem}
Let $C \le \Pi$ and $\bm{u} \in \N_0^t$. Then
\[
  W_{\bm{u}}(C^\perp) = \frac{1}{|C|} \sum_{\bm{h} \in \N_0^t} W_{\bm{h}}(C) 
  \sum_{\bm{v} \le \bm{u}} q^{\sum_{i=1}^t m_iv_i}
        \prod_{i=1}^t (-1)^{u_i-v_i} q^{\binom{u_i-v_i}{2}}\qbin{n_i-h_i}{v_i}{q}\qbin{n_i-v_i}{u_i-v_i}{q}.
\]
\end{theorem}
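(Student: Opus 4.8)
The plan is to derive the rank-list MacWilliams identity from the support MacWilliams identity established in the previous theorem, by aggregating over subspaces of a fixed dimension vector and evaluating the resulting sums with a short double-counting lemma.

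First I would record the elementary identity $W_{\bm{u}}(D)=\sum_{\bm{U}\in\mL,\,\dim(\bm{U})=\bm{u}}W_{\bm{U}}(D)$ for any code $D\le\Pi$, which is immediate from Definition~\ref{D-WeightEnum} since $\rk(X_i)=\dim(\colsp(X_i))$. Applying this to $D=C^\perp$, substituting the formula for $W_{\bm{U}}(C^\perp)$ from the preceding theorem, and interchanging the finite sums over $\bm{U}$ and $\bm{H}$ gives
\[
   W_{\bm{u}}(C^\perp)=\frac{1}{|C|}\sum_{\bm{H}\in\mL}W_{\bm{H}}(C)\sum_{\bm{v}\le\bm{u}}q^{\sum_{i=1}^t m_iv_i}\prod_{i=1}^t(-1)^{u_i-v_i}q^{\binom{u_i-v_i}{2}}\Bigg(\sum_{\substack{\bm{U}\in\mL\\ \dim(\bm{U})=\bm{u}}}\prod_{i=1}^t\qbin{\dim(H_i^\perp\cap U_i)}{v_i}{q}\Bigg).
\]
Because $\mL=\mL_1\times\cdots\times\mL_t$ carries the product order, the bracketed sum factors componentwise as $\prod_{i=1}^t\big(\sum_{U_i\le\F_q^{n_i},\,\dim(U_i)=u_i}\qbin{\dim(H_i^\perp\cap U_i)}{v_i}{q}\big)$.

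The heart of the argument is then the combinatorial lemma: for every subspace $W\le\F_q^n$ and all integers $0\le v\le u\le n$,
\[
   \sum_{\substack{U\le\F_q^n\\ \dim(U)=u}}\qbin{\dim(W\cap U)}{v}{q}=\qbin{\dim(W)}{v}{q}\qbin{n-v}{u-v}{q}.
\]
I would prove this by counting, in two ways, the set of pairs $(U,V)$ with $\dim(U)=u$ and $V$ a $v$-dimensional subspace of $W\cap U$: summing first over $U$ and counting the $v$-subspaces of $W\cap U$ yields the left-hand side; choosing $V\le W$ first ($\qbin{\dim(W)}{v}{q}$ choices) and then $U\supseteq V$ with $\dim(U)=u$ ($\qbin{n-v}{u-v}{q}$ choices, the number of $(u-v)$-subspaces of $\F_q^n/V$) yields the right-hand side, since for $V\le W$ the condition $V\le U$ is equivalent to $V\le W\cap U$.

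Finally I would apply the lemma with $W=H_i^\perp$, so $\dim(W)=n_i-\dim(H_i)$, obtaining a coefficient that depends on $\bm{H}$ only through $\bm{h}:=\dim(\bm{H})$; hence $\sum_{\bm{H}\in\mL}W_{\bm{H}}(C)(\cdots)$ collapses via $\sum_{\bm{H}:\,\dim(\bm{H})=\bm{h}}W_{\bm{H}}(C)=W_{\bm{h}}(C)$ to the claimed formula. The main (minor) obstacle is just the combinatorial lemma and keeping the componentwise factorization honest; invertibility, if desired, follows from the triangularity in $\bm{v}\le\bm{u}$ exactly as in the support case. Everything else is bookkeeping on top of the support MacWilliams identity.
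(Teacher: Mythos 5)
Your proof is correct, and it reaches the identity by a slightly different organization than the paper. The paper does not invoke the final statement of the support MacWilliams theorem; instead it returns to the intermediate identity (the analogue of your first display, but written with the shortened-code cardinalities $|C(\bm{V}^\perp)|$), sums it over all $\bm{U}$ with $\dim(\bm{U})=\bm{u}$, and then evaluates the resulting double sum by counting, in two ways, the triple set $S=\{(\bm{U},\bm{V},X)\mid \dim(\bm{U})=\bm{u},\,\dim(\bm{V})=\bm{v},\,\bm{V}\le\bm{U},\,\sigma(X)\le\bm{V}^\perp\}$; grouping by $\bm{h}=\dim(\sigma(X))$ produces the factor $\prod_i\qbin{n_i-h_i}{v_i}{q}\qbin{n_i-v_i}{u_i-v_i}{q}$ in one stroke. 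You instead take the support theorem as a black box, aggregate over $\bm{U}$ of fixed dimension, factor componentwise over the product lattice, and dispose of the inner sum with the standalone lemma $\sum_{U:\dim(U)=u}\qbin{\dim(W\cap U)}{v}{q}=\qbin{\dim(W)}{v}{q}\qbin{n-v}{u-v}{q}$, proved by counting pairs $(U,V)$ with $V\le W\cap U$; this is exactly the $(\bm{U},\bm{V})$ portion of the paper's triple count with $X$ held fixed, the $X$-part having already been absorbed into the support theorem. Both arguments are sound; yours is more modular and isolates a reusable $q$-binomial identity (a subspace Vandermonde-type lemma), while the paper's keeps the rank-list proof parallel to, and independent of, the final support formula by working directly from the shortening identity. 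Your handling of degenerate cases (e.g.\ $u_i>n_i$, where both sides vanish) and the factorization over $\mL=\mL_1\times\cdots\times\mL_t$ are fine as stated.
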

%%%%%%%%%%%%%%%

\begin{proof}
Starting from~\eqref{ee1} and summing over all $\bm{U} \in \mL$ with 
$\dim(\bm{U})=\bm{u}$, we obtain
\begin{equation} \label{ee11}
W_{\bm{u}}(C^\perp) = \sum_{\bm{v} \le \bm{u}} \;
\frac{|C^\perp|}{q^{\sum_{i=1}^t m_i(n_i-v_i)}}
\prod_{i=1}^t (-1)^{u_i-v_i} q^{\binom{u_i-v_i}{2}} 
\sum_{\substack{\bm{U} \in \mL \\ \dim(\bm{U})=\bm{u}}}
\sum_{\substack{\bm{V} \in \mL \\ \bm{V} \le \bm{U} \\ \dim(\bm{V})=\bm{v}}}
|C(\bm{V}^\perp)|.
\end{equation}
Fix $\bm{v} \le \bm{u}$ and consider the set
$$S=\{(\bm{U}, \bm{V}, X) \in \mL \times \mL \times C \mid \dim(\bm{U})=\bm{u}, \,
 \dim(\bm{V})=\bm{v}, \,
\bm{V} \le \bm{U}, \, \sigma(X) \le \bm{V}^\perp
\}.$$
On the one hand,
$$|S| = \sum_{\substack{\bm{U} \in \mL \\ \dim(\bm{U})=\bm{u}}}
\sum_{\substack{\bm{V} \in \mL \\ \bm{V} \le \bm{U} \\ \dim(\bm{V})=\bm{v}}}
|C(\bm{V}^\perp)|,$$
and on the other hand,
\begin{align*}
|S| &= \sum_{\bm{h} \in \N_0^t}
\ \sum_{\substack{X \in C \\ \dim(\sigma(X)) = \bm{h}}}
 \ \sum_{\substack{\bm{V} \in \mL \\ \dim(\bm{V})=\bm{v} \\ \bm{V} \le \sigma(X)^\perp}}
   \big|\{\bm{U} \in \mL \mid  \dim(\bm{U})=\bm{u},\,\bm{V} \le \bm{U}\}\big| \\
&= \sum_{\bm{h} \in \N_0^t} W_{\bm{h}}(C) \prod_{i=1}^t \qbin{n_i-h_i}{v_i}{q}
\qbin{n_i-v_i}{u_i-v_i}{q}.
\end{align*}
Therefore
$$\sum_{\substack{\bm{U} \in \mL \\ \dim(\bm{U})=\bm{u}}}
\sum_{\substack{\bm{V} \in \mL \\ \bm{V} \le \bm{U} \\ \dim(\bm{V})=\bm{v}}}
|C(\bm{V}^\perp)| = 
\sum_{\bm{h} \in \N_0^t} W_{\bm{h}}(C) \prod_{i=1}^t \qbin{n_i-h_i}{v_i}{q}
\qbin{n_i-v_i}{u_i-v_i}{q}.
$$
Combining this with~\eqref{ee11} we obtain
$$
W_{\bm{u}}(C^\perp) = \frac{1}{|C|} \sum_{\bm{h} \in \N_0^t} W_{\bm{h}}(C) 
\sum_{\bm{v} \le \bm{u}} \;
q^{\sum_{i=1}^t m_iv_i}
\prod_{i=1}^t (-1)^{u_i-v_i} q^{\binom{u_i-v_i}{2}} 
 \qbin{n_i-h_i}{v_i}{q}
\qbin{n_i-v_i}{u_i-v_i}{q},$$
concluding the proof.
\end{proof}

We conclude this section with the sum-rank metric analogue of the binomial moments of the MacWilliams identities.

%%%%%%%%%%%%%%%%%%%
\begin{theorem} \label{thm:BM}
Let $C \le \Pi$.
For all $\bm{u}=(u_1,...,u_t) \in \N_0^t$ we have
\[
    \sum_{\bm{h} \in \N_0^t} W_{\bm{h}}(C) \prod_{i=1}^t \qbin{n_i-h_i}{u_i-h_i}{q} = 
    \frac{|C|}{q^{\sum_{i=1}^tm_i(n_i-u_i)}} \sum_{\bm{h} \in \N_0^t} 
           W_{\bm{h}}(C^\perp) \prod_{i=1}^t \qbin{n_i-h_i}{u_i}{q}.
\]
\end{theorem}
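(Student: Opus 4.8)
The plan is to express both sides of the identity as a sum of shortening cardinalities and then bridge the two via Proposition~\ref{P-CUdual} and the duality involution on $\mL$.

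First I would rewrite the left-hand side by a double-counting argument. Fix $\bm u=(u_1,\ldots,u_t)$. If $X\in C$ has $\dim(\sigma(X))=\bm h$, then the number of $\bm U\in\mL$ with $\dim(\bm U)=\bm u$ and $\sigma(X)\subseteq\bm U$ equals $\prod_{i=1}^t\qbin{n_i-h_i}{u_i-h_i}{q}$, since the number of $u_i$-dimensional subspaces of $\F_q^{n_i}$ containing a fixed $h_i$-dimensional one is $\qbin{n_i-h_i}{u_i-h_i}{q}$ (and this is $0$ whenever $u_i<h_i$, which correctly accounts for the terms where no such $\bm U$ exists). Summing over $\bm h$ and over $X$, then interchanging the order of summation so that one first sums over $\bm U$ and then over $X\in C(\bm U)$, I would obtain
\[
   \sum_{\bm h\in\N_0^t}W_{\bm h}(C)\prod_{i=1}^t\qbin{n_i-h_i}{u_i-h_i}{q}
   =\sum_{\substack{\bm U\in\mL\\ \dim(\bm U)=\bm u}}|C(\bm U)|,
\]
with $C(\bm U)$ as in Definition~\ref{D-Short}. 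This is a counting step of the same type as the ones already carried out in the proofs of the two preceding MacWilliams theorems.

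Next I would invoke Proposition~\ref{P-CUdual}, which gives $|C(\bm U)|=|C|\,q^{-\sum_i m_i(n_i-u_i)}\,|C^\perp(\bm U^\perp)|$, and use that $\bm U\mapsto\bm U^\perp$ is a bijection from the set of $\bm U\in\mL$ with $\dim(\bm U)=\bm u$ onto the set of $\bm W\in\mL$ with $\dim(\bm W)=\bm n-\bm u$, where $\bm n=(n_1,\ldots,n_t)$. This turns $\sum_{\dim(\bm U)=\bm u}|C(\bm U)|$ into $|C|\,q^{-\sum_i m_i(n_i-u_i)}\sum_{\dim(\bm W)=\bm n-\bm u}|C^\perp(\bm W)|$. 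Applying the first step once more, now to the code $C^\perp$ with the dimension vector $\bm n-\bm u$ replacing $\bm u$, rewrites the inner sum as $\sum_{\bm h}W_{\bm h}(C^\perp)\prod_i\qbin{n_i-h_i}{(n_i-u_i)-h_i}{q}$, and the symmetry $\qbin{a}{b}{q}=\qbin{a}{a-b}{q}$ converts $\qbin{n_i-h_i}{(n_i-u_i)-h_i}{q}$ into $\qbin{n_i-h_i}{u_i}{q}$, which is exactly the form on the right-hand side of the theorem.

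There is no serious obstacle here: once Proposition~\ref{P-CUdual} is in hand the whole argument is bookkeeping. The step requiring the most care is the last one, where one must keep straight which dimension vector each shortening is being taken over (the dual code lives over $\bm n-\bm u$) and correctly match up the two families of Gaussian binomial coefficients through the symmetry relation; it is also worth a line to note that in the degenerate range $u_i>n_i$ both sides vanish, consistent with the sum over $\bm U$ being empty.
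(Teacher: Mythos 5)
Your proposal is correct and follows essentially the same route as the paper: rewrite $\sum_{\dim(\bm U)=\bm u}|C(\bm U)|$ via the counting identity, apply Proposition~\ref{P-CUdual} together with the bijection $\bm U\mapsto\bm U^\perp$, and use the counting identity once more for $C^\perp$ over $\bm n-\bm u$ (the paper leaves the Gaussian-binomial symmetry implicit, which you make explicit). No gaps.
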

%%%%%%%%%%%%%%%%%%

\begin{proof}
Fix any tuple $\bm{u} \in \N_0^t$. We start from Proposition~\ref{P-CUdual} and sum over all $\bm{U} \in \mL$
with $\dim(\bm{U})=\bm{u}$, obtaining
\begin{equation} \label{f1}
\sum_{\substack{\bm{U} \in \mL \\ \dim(\bm{U}) = \bm{u}}} |C(\bm{U})|
= \frac{|C|}{q^{\sum_{i=1}^tm_i(n_i-u_i)}} \sum_{\substack{\bm{U} \in \mL \\ \dim(\bm{U}) = \bm{u}}} |C^\perp(\bm{U}^\perp)|.
\end{equation}
The map $\bm{U} \mapsto \bm{U}^\perp$ induces a bijection between the tuples 
$\bm{U} \in \mL$ with $\dim(\bm{U})= \bm{u}$ and those with
$\dim(\bm{U})= \bm{n}-\bm{u}$, where $\bm{n}=(n_1,\ldots,n_t)$.
Therefore~\eqref{f1} can be re-written as
\begin{equation} \label{f2}
\sum_{\substack{\bm{U} \in \mL \\ \dim(\bm{U}) = \bm{u}}} |C(\bm{U})|
= \frac{|C|}{q^{\sum_{i=1}^tm_i(n_i-u_i)}} \sum_{\substack{\bm{U} \in \mL \\ \dim(\bm{U}) = \bm{n}-\bm{u}}} |C^\perp(\bm{U})|.
\end{equation}
Finally, observe that for all $\bm{v} \in \N_0^t$ one has
$$\sum_{\substack{\bm{V} \in \mL \\ \dim(\bm{V})=\bm{v}}} |C(\bm{V})| = 
\sum_{X \in C} |\{\bm{V} \in \mL \mid \dim(\bm{V})=\bm{v}, \, \bm{V} \ge \sigma(X)
 \}| = 
\sum_{\bm{h} \in \N_0^t} W_{\bm{h}}(C) \prod_{i=1}^t \qbin{n_i-h_i}{v_i-h_i}{q}.$$
Using this fact twice in \eqref{f2} 
yields the desired result.
\end{proof}

%%%%%%%%%%%%%%%%%%%
\begin{remark}
Theorem~\ref{thm:BM} implies 
the binomial moments of the MacWilliams identities for both the Hamming and the rank metric, as in
\cite[eq. (M1), p. 257]{pless} and \cite[Theorem 31]{alb1} respectively.
Indeed, the binomial moments for the rank-metric are easily obtained by setting $t=1$ in Theorem~\ref{thm:BM}.
To recover the binomial moments for the Hamming metric, 
let $(n_1,...,n_t)=(m_1,...,m_t)=(1,..,1)$ and 
fix $0 \le \nu \le t$.
Consider the identity in Theorem~\ref{thm:BM} and sum
over all $\bm{u} \in \{0,1\}^t$
with $|\bm{u}|=t-\nu$.
This results in
\[
    \sum_{\substack{\bm{u} \in \{0,1\}^t \\ |\bm{u}|=t-\nu}} \ \sum_{\substack{\bm{h} \in \{0,1\}^t \\ \bm{h} \le \bm{u}}}
W_{\bm{h}}(C) = \frac{|C|}{q^{\nu}}
\sum_{\substack{\bm{u} \in \{0,1\}^t \\ |\bm{u}|=t-\nu}} \ \sum_{\substack{\bm{h} \in \{0,1\}^t \\ \bm{h} \le (1,...,1)-\bm{u}}}
W_{\bm{h}}(C^\perp).
\]
The latter identity can be re-written as
$$\sum_{i=0}^{t-\nu} W_i(C) \binom{t-i}{\nu} = 
\frac{|C|}{q^\nu}
\sum_{i=0}^{\nu} W_i(C^\perp) \binom{t-i}{t-\nu},$$
where $W_i(C)$ and $W_i(C^\perp)$ denote the number of codewords of Hamming weight $i$ in $C$ and $C^\perp$, respectively. This is precisely \cite[eq.~(M1) on page 257]{pless}.
\end{remark}
%%%%%%%%%%%%%%%%%%%%

%%%%%%%%%%%%%%%%%%%%%%%%%
\section{Linear MSRD Codes} \label{sec:msrd}
In this section we investigate the structural properties of linear MSRD codes, i.e., the sum-rank metric codes that meet the Singleton Bound
of Theorem~\ref{th:singl}; see also Definition~\ref{D-MSRD}.
We first study duality of MSRD codes.
We show that if $m_1=m_2= \cdots =m_t$, the MSRD property is invariant under dualization, while this is not true for general 
$m_1,\ldots,m_t$. 
Next, the duality result for the case $m_1=m_2= \cdots =m_t$ allows us to explicitly determine the support distribution of MSRD codes 
(and hence the rank-list and sum-rank distributions). 
This in turn leads to necessary conditions for the existence of MSRD codes.
In some cases they turn out to be more powerful than the bounds of Section~\ref{sec:bounds}.
Finally, we provide an upper bound on the length~$t$ of an MSRD code for the case where $n_1=\ldots=n_t$ and $m_1=\ldots=m_t$, improving on Corollary~\ref{cor:tlarge} for certain classes of parameters.
We close the section with discussing puncturing and shortening of MSRD codes.
We will see that doing so in a suitable way results in an MSRD code again. 

We start with the invariance of the MSRD property under dualization for the case where $m_1=\ldots=m_t$. 
For $\F_{q^m}$-linear MSRD codes (see Remark~\ref{R-Fqmlinear}) this result has been derived earlier in~\cite[Thm.~5]{MP19}.
Recall the notation~$C^\perp$ and ${\bm U}^\perp$ from \cref{D-Dual} and $\srk(C)$ from Definition~\ref{D-SRMC}.

%%%%%%%%%%%%%%%%%%%%
\begin{theorem}\label{T-DualMSRD}
Suppose $m_1= \cdots =m_t=m$. If $C \le \Pi$ is MSRD, then
$C^\perp$ is
MSRD as well. Moreover, if~$C$ is a non-trivial MSRD code, then
$\srk(C^\perp)=N-\srk(C)+2$.
\end{theorem}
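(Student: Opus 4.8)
The plan is to combine the shortening/puncturing duality of Proposition~\ref{P-CUdual} with the Singleton Bound of Theorem~\ref{th:singl}, exploiting that all $m_i$ equal $m$ so that the latter reads $|C|\le q^{m(N-\srk(C)+1)}$. Write $d=\srk(C)$. If $C$ is non-trivial and MSRD then $1<d\le N$ (if $d=1$ then $|C|=q^{mN}=|\Pi|$, so $C=\Pi$ is trivial; the maximal sum-rank in $\Pi$ is $\sum_i\min(n_i,m_i)=N$), and $\dim(C)=m(N-d+1)$, hence $\dim(C^\perp)=m(d-1)$ and $|C^\perp|=q^{m(d-1)}\ge 2$. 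Since Theorem~\ref{th:singl} applied to $C^\perp$ gives $q^{m(d-1)}=|C^\perp|\le q^{m(N-\srk(C^\perp)+1)}$, i.e.\ $\srk(C^\perp)\le N-d+2$, it is enough to establish the reverse inequality $\srk(C^\perp)\ge N-d+2$: this pins down $\srk(C^\perp)=N-d+2$ and, by forcing equality in the Singleton Bound for $C^\perp$, shows that $C^\perp$ is MSRD.

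The key estimate is that for every $\bm U\in\mL$ with $r:=\rk_\mL(\bm U)\ge d-1$ one has $|C(\bm U)|\le q^{m(r-d+1)}$. To see this, act on the rows of the $i$-th block by a matrix of $\GL_{n_i}(\F_q)$ carrying $U_i$ to the span of the first $\dim(U_i)$ standard basis vectors; this is a sum-rank isometry of $\Pi$ (left multiplication by invertible matrices preserves rank), and under it $C(\bm U)$ becomes a code all of whose codewords are supported on the first $\dim(U_i)$ rows of block $i$. Deleting the always-zero rows, and omitting blocks with $\dim(U_i)=0$, identifies this isometrically with a sum-rank metric code in $\Pi_q(\dim(U_1)\times m\mid\cdots\mid\dim(U_t)\times m)$; it is either trivial, in which case the bound is immediate, or it has minimum sum-rank distance at least $\srk(C)=d$, in which case the equal-$m_i$ form of Theorem~\ref{th:singl} gives exactly $|C(\bm U)|\le q^{m(r-d+1)}$.

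Feeding this into Proposition~\ref{P-CUdual}, which for $m_i\equiv m$ reads $|C(\bm U)|=|C|\,q^{-m(N-r)}\,|C^\perp(\bm U^\perp)|$, and using $|C|=q^{m(N-d+1)}$, we get $|C(\bm U)|=q^{m(r-d+1)}\,|C^\perp(\bm U^\perp)|$. Hence for $r\ge d-1$ we obtain $|C^\perp(\bm U^\perp)|=|C(\bm U)|\,q^{-m(r-d+1)}\le 1$, so $|C^\perp(\bm U^\perp)|=1$. As $\bm U$ ranges over all tuples in $\mL$ with $\rk_\mL(\bm U)\ge d-1$, its orthogonal $\bm U^\perp$ ranges over all tuples $\bm W\in\mL$ with $\rk_\mL(\bm W)\le N-d+1$; thus $|C^\perp(\bm W)|=1$ for every such $\bm W$. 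Applying Proposition~\ref{R-eqv} to the non-zero code $C^\perp$ with parameter $N-d+2\in[2,N]$ yields $\srk(C^\perp)\ge N-d+2$, completing the argument. The trivial cases are handled directly: $\{0\}^\perp=\Pi$ and $\Pi^\perp=\{0\}$ are both MSRD by Definition~\ref{D-MSRD}.

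The only point requiring genuine care is the second paragraph: one must check that a simultaneous change of basis in the individual blocks is a sum-rank isometry and that it moves the shortening $C(\bm U)$ onto a sub-product of the form $\Pi_q(\dim(U_1)\times m\mid\cdots)$, so that the Singleton Bound becomes applicable in a smaller ambient space. Everything else is bookkeeping with the two Propositions already at hand; note in particular that no MacWilliams machinery is needed, and that only the upper bound $|C(\bm U)|\le q^{m(r-d+1)}$---not the matching lower bound---enters the proof.
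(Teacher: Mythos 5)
Your proof is correct, and its skeleton---the Singleton Bound (Theorem~\ref{th:singl}) for the inequality $\srk(C^\perp)\le N-d+2$, then Proposition~\ref{P-CUdual} combined with Proposition~\ref{R-eqv} for the reverse inequality---is the same as the paper's. The difference is in how you verify the hypothesis of Proposition~\ref{R-eqv} for $C^\perp$: you check condition (2), i.e.\ $|C^\perp(\bm W)|=1$ for \emph{all} $\bm W$ with $\rk_\mL(\bm W)\le N-d+1$, which forces you to prove the nontrivial shortening estimate $|C(\bm U)|\le q^{m(r-d+1)}$ for every $r\ge d-1$ via block-wise row operations and the Singleton Bound in a smaller ambient space. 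The paper instead checks condition (3): it only takes $\bm U$ with $\rk_\mL(\bm U)=N-d+1$, where $\rk_\mL(\bm U^\perp)=d-1<d$ gives $|C(\bm U^\perp)|=1$ for free, and Proposition~\ref{P-CUdual} immediately yields $|C^\perp(\bm U)|=1$; no auxiliary isometry or Singleton-in-a-subspace argument is needed. So your second paragraph, while correct and free of circularity (it is in effect an independent proof of the upper-bound half of Corollary~\ref{coro:CU}, which the paper later deduces \emph{from} this theorem), is avoidable: restricting to $r=d-1$ and invoking condition (3) of Proposition~\ref{R-eqv} collapses your argument to exactly the paper's proof. One cosmetic point: in that paragraph ``trivial'' should mean the zero code---the other trivial case, the full sub-product, cannot occur for $d\ge 2$ since it would contain elements of sum-rank $1$, and for it the bound would not be ``immediate.''
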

%%%%%%%%%%%%%%%%%%%%

\begin{proof}
The result is immediate if $C$ is trivial. 
We henceforth assume that~$C$ is MSRD with sum-rank distance $d \ge 2$, and hence has 
dimension $m(N-d+1)$.
We show that $C^\perp$ is MSRD of minimum distance $N-d+2$. Observe that $\dim(C^\perp)=
mN-\dim(C) =m(d-1)$. By Theorem~\ref{th:singl} we then have
$m(d-1) \le m(N-d^\perp+1)$, where $d^\perp$ is the minimum distance of $C^\perp$.
In particular, $d^\perp \le N-d+2$. Therefore it suffices to show that $d^\perp \ge N-d+2$.
We will use the criterion in~\cref{R-eqv} and the notation from \cref{D-Lattice}.
Let $\bm{U} \in \mL$ be arbitrary with
$\rk_\mL(\bm{U})=N-d+1$. Then $\rk_\mL({\bm U}^\perp)=d-1$ and thus
$|\mC(\bm{U}^\perp)|=1$. 
By \cref{P-CUdual} we have
$$1=|C(\bm{U}^\perp)|= \frac{|C|}{q^{m(N-d+1)}} |C^\perp(\bm{U})| = |C^\perp(\bm{U})|.$$
Since $\bm{U}$ was arbitrary with $\rk_\mL(U)=N-d+1$, \cref{R-eqv} implies $d^\perp \ge N-d+2$
as desired.
\end{proof}

If $m_i\neq m_j$ for some $i,j$ with $i \neq j$, Theorem~\ref{T-DualMSRD} is false in general,
 as the following example shows.

%%%%%%%%%%%%%%%%%
\begin{example}\label{E-DualNotMSRD}
Let $\Pi=\Pi_q(n\times n\mid 1\times1)$, where $n\geq 2$ and $d=2$.
Then the Singleton Bound in Theorem~\ref{th:singl} is $n^2-n+1$.
An MSRD code can be constructed as follows.
Let $C_1\leq\F^{n\times n}$ be an MRD code with rank distance~$2$, and let $Z\in\F_q^{n\times n}$ be a matrix of rank~$1$.
Set
\[
    C=\{(A,\,0)\mid A\in C_1\}+\{\lambda(Z,\,1)\mid \lambda\in\F_q\}.
\]
Then $\srk(C)\geq 2$ thanks to $\rk(A+\lambda Z)\geq1$ for all $A\in C_1$ and $\lambda\in\F_q^*$.
Since $\dim(C)=n(n-1)+1$, the code $C$ is MSRD.
The dual code is given by
\[
    C^\perp=\{(B,-\subspace{B,Z})\mid B\in C_1^\perp\}.
\]
Indeed, the right hand side is clearly contained in $C^\perp$ and has dimension equal to $\dim(C_1^\perp)=n=\dim(\Pi)-\dim(C)$.
In order to determine $\srk(C^\perp)$, consider the map $C_1^\perp\longrightarrow\F_q,\ B\longmapsto \subspace{B,Z}$.
It is linear, surjective (because $Z\not\in C_1$) and thus has an $(n-1)$-dimensional kernel. This shows that the sum-rank distance of $C^\perp$ equals the rank distance of $C_1^\perp$, which is~$n$.
But for that distance, the Singleton Bound in Theorem~\ref{th:singl} is $n+1$, and thus $C^\perp$ is not MSRD.
\end{example}
%%%%%%%%%%%%%%%%%%

Let us return to the case where $m_1= \cdots =m_t$.
Theorem~\ref{T-DualMSRD} allows us to give a closed formula for the support distribution of any MSRD code.
We need the following simple lemma.
Recall the lattice~$\mL$ and the accompanying notation from Definition~\ref{D-Lattice}.

%%%%%%%%%%%%%%%%%
\begin{corollary} \label{coro:CU}
Suppose $m_1= \cdots =m_t=m$. Let $C \le \Pi$ be a non-zero MSRD code of minimum distance $d$. For all
$\bm{U} \in \mL$ with $\rk_\mL(\bm{U})=u$ we have
\[
     |C(\bm{U})|= \left\{ \begin{array}{ll} 1 & \mbox{if $u<d$,} \\
    q^{m(u-d+1)} & \mbox{otherwise.}   \end{array}  \right.
\]
\end{corollary}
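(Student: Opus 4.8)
The plan is to split according to whether $u < d$ or $u \ge d$. The first case is essentially immediate, and the second case is reduced to the dual code by combining the duality result for MSRD codes with the shortening/puncturing duality.

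If $u < d$, then $C$ is non-zero with $\srk(C) = d$ and $\rk_\mL(\bm{U}) = u \le d-1$, so Proposition~\ref{R-eqv} gives $|C(\bm{U})| = 1$ directly. Nothing further is needed here.

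Now suppose $u \ge d$. First I would dispose of the degenerate case $d = 1$: in that case $|C| = q^{mN} = |\Pi|$ forces $C = \Pi$, and $|C(\bm{U})| = |\Pi(\bm{U})| = q^{mu} = q^{m(u-d+1)}$ by Remark~\ref{R-PiU}. Assume henceforth $d \ge 2$; then $C$ is a non-trivial MSRD code, so $|C| = q^{m(N-d+1)}$, and by Theorem~\ref{T-DualMSRD} the dual $C^\perp$ is again MSRD with $\srk(C^\perp) = N-d+2$. The key observation is the rank of the dual subspace: $\rk_\mL(\bm{U}^\perp) = \sum_{i=1}^t (n_i - u_i) = N - u \le N - d \le (N-d+2) - 1$, so applying Proposition~\ref{R-eqv} to the non-zero code $C^\perp$ yields $|C^\perp(\bm{U}^\perp)| = 1$. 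Plugging this into Proposition~\ref{P-CUdual}, which for $m_1 = \cdots = m_t = m$ reads $|C(\bm{U})| = (|C|/q^{m(N-u)})\,|C^\perp(\bm{U}^\perp)|$, and substituting $|C| = q^{m(N-d+1)}$ gives exactly $q^{m(u-d+1)}$.

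The computation is short, and I do not expect a genuine obstacle; the substance is carried entirely by Theorem~\ref{T-DualMSRD} and Proposition~\ref{P-CUdual}. The only points that require care are bookkeeping: verifying that $C$ is genuinely non-trivial so that Theorem~\ref{T-DualMSRD} applies (this fails precisely when $d = 1$, which is why that case is treated separately), checking that $C^\perp \neq \{0\}$ (true since $\dim C^\perp = m(d-1) \ge m$ when $d \ge 2$) so that Proposition~\ref{R-eqv} may be invoked for $C^\perp$, and confirming the inequality $\rk_\mL(\bm{U}^\perp) = N - u \le \srk(C^\perp) - 1$ throughout the range $u \ge d$.
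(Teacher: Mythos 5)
Your proposal is correct and follows essentially the same route as the paper: for $u\ge d$ one uses Theorem~\ref{T-DualMSRD} to get $\srk(C^\perp)=N-d+2$, notes $\rk_\mL(\bm{U}^\perp)=N-u\le N-d$ so that $|C^\perp(\bm{U}^\perp)|=1$, and concludes via Proposition~\ref{P-CUdual}. Your separate treatment of $d=1$ (where $C=\Pi$ is trivial and Theorem~\ref{T-DualMSRD}'s distance statement does not literally apply) is a small extra bookkeeping step that the paper glosses over, but otherwise the arguments coincide.
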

%%%%%%%%%%%%%%%%%%

\begin{proof}
The statement is clear for $u<d$. Thus let $u\geq d$.
Then $\rk_\mL(\bm{U}^\perp)=N-u\leq N-d$ while $\srk(C^\perp)=N-d+2$ thanks to \cref{T-DualMSRD}.
Hence $|C^\perp(U^\perp)|=1$, and the result follows from  \cref{P-CUdual}.
\end{proof}

In order to determine the support distribution, we introduce the following notation.
Recall the partial order on~$\N_0^t$ in~\eqref{e-vlessu}.

%%%%%%%%%%%%%%%%
\begin{notation}
Let $\ell\in\N_0$ and $\bm{U}\in\mL$. We define
\[
   f_\ell(\bm{u})=\sum_{\substack{\bm{v} \le \bm{u} \\ |\bm{v}|=\ell}} \prod_{i=1}^t (-1)^{u_i-v_i} 
         q^{\binom{u_i-v_i}{2}} \qbin{u_i}{v_i}{q}.
\]
\end{notation}
%%%%%%%%%%%%%%%%%

Then
\begin{equation}\label{e-fl}
   f_\ell(\bm{u})=0\text{ if }\ell>|\bm{u}|,\quad f_{|\bm{u}|}(\bm{u})=1,\quad 
    f_{|\bm{u}|-1}(\bm{u})=-\sum_{i=1}^t\GaussianD{u_i}{1}_q=-\frac{\sum_{i=1}^tq^{u_i}-t}{q-1}.
\end{equation}

Now we are ready to present the support distribution of an MSRD code (recall Definition~\ref{D-WeightEnum}).

%%%%%%%%%%%%%%%%%%%
\begin{theorem} \label{disMSRD}
Suppose $m_1= \cdots =m_t=m$.
Let $C \le \Pi$ be a non-zero MSRD code of minimum distance $d$, and let
$\bm{U} \in \mL\setminus0$ with $\dim(\bm{U})=\bm{u}$. Then
\[
W_{\bm{U}}(C) = \sum_{\ell=d}^{|\bm{u}|}\big(q^{m(\ell-d+1)}-1\big)f_\ell(\bm{u}).
\]
In particular, $W_{\bm{U}}(C)$ only depends on $\bm{u}=\dim(\bm{U})$, but not on the subspace~$\bm{U}$ itself. 
\end{theorem}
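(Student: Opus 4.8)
The plan is to run Möbius inversion in the lattice $\mL$ against the elementary counting identity \eqref{e-CUWV} and then feed in the closed form for $|C(\bm{U})|$ provided by \cref{coro:CU}. Concretely, \eqref{e-CUWV} reads $|C(\bm{V})|=\sum_{\bm{W}\le\bm{V}}W_{\bm{W}}(C)$, so Möbius inversion in $\mL$ gives
\[
  W_{\bm{U}}(C)=\sum_{\bm{V}\le\bm{U}}\mu_\mL(\bm{V},\bm{U})\,|C(\bm{V})|.
\]
By \cref{coro:CU} (using $m_1=\cdots=m_t=m$), writing $\ell(\bm V):=\rk_\mL(\bm V)$, we have $|C(\bm{V})|=1$ when $\ell(\bm V)<d$ and $|C(\bm{V})|=q^{m(\ell(\bm V)-d+1)}$ when $\ell(\bm V)\ge d$. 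Note that $q^{m(\ell-d+1)}-1=0$ when $\ell=d-1$ and is negative for $\ell<d-1$, while it equals $|C(\bm V)|-1$ for $\ell\ge d$; in all cases $|C(\bm{V})|-1$ vanishes for $\ell(\bm V)<d$ and equals $q^{m(\ell(\bm V)-d+1)}-1$ for $\ell(\bm V)\ge d$.

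The key trick is then to subtract off the constant term: since $\bm{U}\neq 0$, the defining property of the Möbius function gives $\sum_{\bm{V}\le\bm{U}}\mu_\mL(\bm{V},\bm{U})=0$, so
\[
  W_{\bm{U}}(C)=\sum_{\bm{V}\le\bm{U}}\mu_\mL(\bm{V},\bm{U})\big(|C(\bm{V})|-1\big)
   =\sum_{\substack{\bm{V}\le\bm{U}\\ \rk_\mL(\bm{V})\ge d}}\mu_\mL(\bm{V},\bm{U})\big(q^{m(\rk_\mL(\bm{V})-d+1)}-1\big).
\]
Grouping the summands according to $\ell=\rk_\mL(\bm V)\in\{d,\ldots,|\bm{u}|\}$ yields
\[
  W_{\bm{U}}(C)=\sum_{\ell=d}^{|\bm{u}|}\big(q^{m(\ell-d+1)}-1\big)\sum_{\substack{\bm{V}\le\bm{U}\\ \rk_\mL(\bm{V})=\ell}}\mu_\mL(\bm{V},\bm{U}).
\]

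It remains to identify the inner sum with $f_\ell(\bm{u})$. Here I would use that $\mL=\mL_1\times\cdots\times\mL_t$ is a product lattice, so the explicit Möbius function of \cref{D-Lattice} factors over the blocks; imposing $\rk_\mL(\bm V)=\ell$ amounts to choosing a dimension vector $\bm{v}=\dim(\bm V)$ with $\bm{v}\le\bm{u}$ and $|\bm{v}|=\ell$, and for each such $\bm v$ the number of $\bm{V}\le\bm{U}$ with $\dim(\bm{V})=\bm{v}$ is $\prod_{i=1}^t\qbin{u_i}{v_i}{q}$ (subspaces of fixed dimension inside $U_i$, with $\dim U_i=u_i$). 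Hence
\[
  \sum_{\substack{\bm{V}\le\bm{U}\\ \rk_\mL(\bm{V})=\ell}}\mu_\mL(\bm{V},\bm{U})
  =\sum_{\substack{\bm{v}\le\bm{u}\\ |\bm{v}|=\ell}}\prod_{i=1}^t(-1)^{u_i-v_i}q^{\binom{u_i-v_i}{2}}\qbin{u_i}{v_i}{q}=f_\ell(\bm{u}),
\]
which is exactly the definition of $f_\ell(\bm{u})$, and substituting gives the claimed formula; the dependence on $\bm{U}$ only through $\bm{u}=\dim(\bm{U})$ is then manifest. Honestly there is no serious obstacle here: the only points requiring care are invoking $\bm{U}\neq 0$ to kill the constant term and keeping the block-wise bookkeeping of the Möbius function straight so that the sum collapses precisely to $f_\ell(\bm{u})$.
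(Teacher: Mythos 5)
Your proposal is correct and follows essentially the same route as the paper's proof: Möbius inversion of \eqref{e-CUWV}, substitution of the values from \cref{coro:CU}, use of $\sum_{\bm{V}\le\bm{U}}\mu_\mL(\bm{V},\bm{U})=0$ for $\bm{U}\neq 0$ to remove the constant contribution, and identification of the rank-$\ell$ Möbius sum with $f_\ell(\bm{u})$ via the product structure of $\mL$ and the Gaussian binomial count of subspaces. No gaps.
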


\begin{proof}
Set $\mV_{\ell}(\bm{U})=\{\bm{V}\in\mL\mid \bm{V}\leq\bm{U},\rkL(\bm{V})=\ell\}$.
Applying M\"obius inversion to~\eqref{e-CUWV} and using Corollary~\ref{coro:CU}, we obtain
\begin{eqnarray*}
W_{\bm{U}}(C) &=& \sum_{\substack{\bm{V} \in \mL \\ \bm{V} \le \bm{U}}}
|C(\bm{V})| \, \mu_{\mL}(\bm{V},\bm{U}) \\
&=& \sum_{\ell=0}^{d-1} \sum_{\bm{V}\in\mV_{\ell}(U)} \mu_{\mL}(\bm{V},\bm{U}) \ 
+ \sum_{\ell=d}^{|\bm{u}|} 
\sum_{\bm{V}\in\mV_{\ell}(U)}
q^{m(\ell-d+1)} \mu_{\mL}(\bm{V},\bm{U})\\
&=& \sum_{\ell=d}^{|\bm{u}|}\big(q^{m(\ell-d+1)}-1\big)\sum_{\bm{V}\in\mV_{\ell}(\bm{U})}\muL(\bm{V},\bm{U}),
\end{eqnarray*}
where the last identity follows from $\sum_{\bm{V}\leq \bm{U}}\muL(\bm{V},\bm{U})=0$ since $\bm{U}\neq0$.
Finally, one easily observes that 
 $\sum_{\bm{V}\in\mV_{\ell}(\bm{U})}\muL(\bm{V},\bm{U})=f_\ell(\bm{u})$, and this proves the stated identity.
\end{proof}

We remark that examples show that Theorem \ref{disMSRD} does not necessarily hold if we remove the constraint that the $m_i$ are all equal.
 
Now we arrive at the following non-existence criterion for MSRD codes.
It is an immediate consequence of Theorem~\ref{disMSRD}.

%%%%%%%%%%%%%%%%
\begin{corollary} \label{exclMW}
Consider the space $\Pi=\Pi(n_1\times m\mid\cdots\mid n_t\times m)$ and define the index set
$\mI_{(n_1,\ldots,n_t)}=\{(u_1,\ldots,u_t)\in\N_0^t\mid u_i\leq n_i\text{ for all }i\in[t]\}$.
Suppose there exists an MSRD code $C \le \Pi$ of sum-\- rank distance~$d$. 
Then 
\begin{equation} \label{dise}
 \omega(\bm{u}):=
   \sum_{\ell=d}^{|\bm{u}|} (q^{m(\ell-d+1)}-1) f_\ell(\bm{u}) \ge 0\ \text{ for all }\bm{u}\in\mI_{(n_1,\ldots,n_t)}\setminus\{(0,\ldots,0)\}.
\end{equation}
\end{corollary}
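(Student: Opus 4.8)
The plan is simply to observe that $\omega(\bm{u})$ is, verbatim, the right-hand side of the formula in Theorem~\ref{disMSRD}, and that the left-hand side of that formula is a cardinality, hence a nonnegative integer. So the corollary contains no new content beyond Theorem~\ref{disMSRD}; the only thing to be careful about is the existence of a suitable $\bm{U}\in\mL$.

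Concretely, I would first fix $\bm{u}=(u_1,\ldots,u_t)\in\mI_{(n_1,\ldots,n_t)}\setminus\{(0,\ldots,0)\}$. Since $0\le u_i\le n_i$ for every $i\in[t]$, each lattice $\mL_i$ of subspaces of $\F_q^{n_i}$ contains a subspace $U_i$ with $\dim(U_i)=u_i$; choosing one such $U_i$ for each $i$ produces an element $\bm{U}=(U_1,\ldots,U_t)\in\mL$ with $\dim(\bm{U})=\bm{u}$. Because $\bm{u}\ne(0,\ldots,0)$, this $\bm{U}$ is a nonzero element of $\mL$, so Theorem~\ref{disMSRD} applies to it (the hypothesis that $C$ is a non-zero MSRD code of minimum distance $d$ is in force, since $\srk(C)$ is only defined for a non-zero code). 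That theorem gives
\[
   \omega(\bm{u}) = \sum_{\ell=d}^{|\bm{u}|}\big(q^{m(\ell-d+1)}-1\big)\,f_\ell(\bm{u}) = W_{\bm{U}}(C).
\]

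Finally, by Definition~\ref{D-WeightEnum} we have $W_{\bm{U}}(C)=|\{X\in C\mid\sigma(X)=\bm{U}\}|$, which is the size of a subset of $C$ and hence $\ge 0$; therefore $\omega(\bm{u})\ge 0$, as claimed. There is no real obstacle here: all the work is done in Theorem~\ref{disMSRD}, and the only step needing a word of justification — that a subspace tuple $\bm{U}$ of the prescribed dimension $\bm{u}$ exists — is exactly what the membership $\bm{u}\in\mI_{(n_1,\ldots,n_t)}$ encodes. (One may also remark in passing that $\omega(\bm{u})$ does not depend on the choice of $\bm{U}$ with $\dim(\bm{U})=\bm{u}$, which is the last assertion of Theorem~\ref{disMSRD}, so the criterion is well posed.)
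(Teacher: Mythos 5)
Your proposal is correct and matches the paper's argument: the paper states the corollary is an immediate consequence of Theorem~\ref{disMSRD}, precisely because $\omega(\bm{u})=W_{\bm{U}}(C)$ for any $\bm{U}\in\mL$ with $\dim(\bm{U})=\bm{u}$, and this is a cardinality, hence nonnegative. Your added remarks (existence of such a $\bm{U}$ from $u_i\le n_i$, and that the hypothesis forces $C$ to be non-zero) are exactly the routine checks left implicit in the paper.
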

%%%%%%%%%%%%%%%%%

This criterion is indeed quite powerful. 

%%%%%%%%%%%%%%
\begin{example}\label{E-Sing-PSP-TW2}
Consider $\Pi=\Pi_3(3\times3\mid3\times3\mid2\times3)$ and $d=7$. 
Then the linear version of  the Singleton Bound provides a strictly smaller value than all other  bounds. 
In other words, no known bound excludes the existence of an MSRD code.
However, applying the above non-existence criterion 
leads to $\omega(3,3,2)=-52$,  and thus an MSRD code does not exist.
This example shows that MSRD codes of length $t=q$ do not exist in general, not even for fixed column size.
On the other hand, for certain parameters such codes do exist, 
as Example~\ref{E-MSRD6} has shown.
In this context we also wish to remind of \cite[Thm.~4]{MP18} where it has been shown that (even $\F_{q^m}$-linear) MSRD codes always
exist for length~$t$ up to $q-1$.
\end{example}
%%%%%%%%%%%%%%%%%%%%

It is a simple consequence of~\eqref{e-fl} that $\omega(\bm{u})=0$ if $|\bm{u}|<d$ and
$\omega(\bm{u})=q^m-1$ if $|\bm{u}|=d+1$.
Hence it suffices to consider $|\bm{u}|\geq d+1$.
The criterion is obviously conclusive only in the case where $ \omega(\bm{u})<0$ for some $\bm{u}\in\mI_{(n_1,\ldots,n_t)}$, 
but such a $\bm{u}$ may be hard to find within the, generally large, set $\mI_{(n_1,\ldots,n_t)}$, even with the restriction
$|\bm{u}|\geq d+1$.
However, an abundance of examples leads us to the following conjecture.
It tells us that the tests in~\eqref{dise} can be replaced by a single test. 

%%%%%%%%%%%%%%
\begin{conjecture}\label{C-omegau}
Let $(n_1,\ldots,n_t)\in\N_0^t$ and, without loss of generality, $n_1\geq\ldots\geq n_t$. 
Fix $d<N$ and let $s\in[t]$ and $\delta\in[n_{s+1}-1]$ be such that 
$\tilde{\bm{u}}:=(n_1,\ldots,n_s,\delta,0,\ldots,0)\in\mI_{(n_1,\ldots,n_t)}$ satisfies $|\tilde{\bm{u}}|=d+1$.
In other words, $\tilde{\bm{u}}$ is the minimum in $\mI_{(n_1,\ldots,n_t)}$
with respect to the graded reverse lexicographic order and such that $|\tilde{\bm{u}}|\geq d+1$.
Then $\omega(\tilde{\bm{u}})=q^{2m}-1-(q^m-1)/(q-1)\big(\sum_{i=1}^tq^{\tilde{u}_i}-t\big)$ and the conjecture is
\[
     \omega(\tilde{\bm{u}})\geq 0\Longrightarrow \omega(\bm{u})\geq 0\text{ for all }\bm{u}\in\mI_{(n_1,\ldots,n_t)}.
\]
This tells us that the much faster test $\omega(\tilde{\bm{u}})\geq 0$ appears to be as strong as the laborious test in Corollary~\ref{exclMW}.
\end{conjecture}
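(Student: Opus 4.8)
The plan is to recast $\omega$ in generating-function language and then run an induction based on a ``move one box between blocks'' recursion.

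First I would record that, by a form of the $q$-binomial theorem, $\sum_{v=0}^{u}(-1)^{u-v}q^{\binom{u-v}{2}}\qbin{u}{v}{q}y^{v}=\prod_{k=0}^{u-1}(y-q^{k})$, so that the polynomial
\[
P_{\bm u}(y):=\sum_{\ell\geq 0}f_\ell(\bm u)\,y^\ell=\prod_{i=1}^{t}\;\prod_{k=0}^{u_i-1}(y-q^{k})
\]
satisfies $\omega(\bm u)=L(P_{\bm u})$, where $L$ is the linear functional on polynomials with $L(y^{\ell})=\max\{1,q^{m(\ell-d+1)}\}-1$; this vanishes for $\ell\leq d-1$ and equals $q^{m(\ell-d+1)}-1$ for $\ell\geq d$, so it reproduces the definition of $\omega$ in \cref{exclMW}. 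Here and below $\omega,f_\ell,P_{\bm u}$ are regarded as defined on all of $\N_0^t$, and $\omega(\bm u)$ depends only on the multiset $\{u_1,\ldots,u_t\}$. Since $\omega(\bm u)=0$ for $|\bm u|<d$ and $\omega(\bm u)=q^m-1>0$ for $|\bm u|=d$, it suffices to prove $\omega(\tilde{\bm u})\leq\omega(\bm u)$ for all $\bm u\in\mI_{(n_1,\ldots,n_t)}$ with $|\bm u|\geq d+1$; note that $\tilde{\bm u}$ is exactly the ``greedy'' left-justified tuple $\bm u^{[d+1]}$ of total degree $d+1$ in the box $\prod_i[0,n_i]$.

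The main tool is the recursion
\[
\omega(\bm u+\bm e_a-\bm e_b)=\omega(\bm u)+\big(q^{u_b-1}-q^{u_a}\big)\,\omega(\bm u-\bm e_b)\qquad(u_b\geq 1,\ a\neq b),
\]
obtained by applying $L$ to $P_{\bm u+\bm e_a-\bm e_b}-P_{\bm u}=(q^{u_b-1}-q^{u_a})\,P_{\bm u-\bm e_b}$, which in turn comes from factoring $(y-q^{u_b-1})$ out of the $b$-th block of $P_{\bm u}$ and multiplying in $(y-q^{u_a})$ for the $a$-th. Hence, if $\omega(\bm u-\bm e_b)\geq 0$ and $u_a\geq u_b$, moving a box from block $b$ to block $a$ does not increase $\omega$. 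Using $n_1\geq\cdots\geq n_t$, a standard compression in the dominance order (always shifting a box from the last nonempty block to the first non-full block of the sorted tuple, so that $u_a\geq u_b$ at each step) then transports any degree-$D$ tuple of $\mI_{(n_1,\ldots,n_t)}$ down to the greedy tuple $\bm u^{[D]}$, provided $\omega$ is already known to be nonnegative on degree $D-1$. Feeding this into an induction on $D$ (whose base $D=d+1$ uses only the trivial nonnegativity on degree $d$), the conjecture reduces to the single chain $\omega(\bm u^{[D]})\geq\omega(\bm u^{[d+1]})=\omega(\tilde{\bm u})$ for all $D\geq d+1$.

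Proving that chain is where I expect the real difficulty to lie. Writing $\beta(\bm u):=L(y\,P_{\bm u})$, the factorization gives $P_{\bm u+\bm e_a}=(y-q^{u_a})P_{\bm u}$ and hence $\omega(\bm u+\bm e_a)=\beta(\bm u)-q^{u_a}\omega(\bm u)$, where one checks that $\beta(\bm u)$ is exactly the analogue of $\omega(\bm u)$ with minimum distance $d$ replaced by $d-1$. Since $\bm u^{[D+1]}=\bm u^{[D]}+\bm e_c$ for the active block $c$, with $e:=(\bm u^{[D]})_c\leq n_1\leq m$, one gets $\omega(\bm u^{[D+1]})-\omega(\bm u^{[D]})=\beta(\bm u^{[D]})-(q^{e}+1)\,\omega(\bm u^{[D]})$; controlling the sign of this is the main obstacle. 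When $\omega(\bm u^{[D]})\leq 0$ one hopes to combine $\beta\geq\omega$ with $(q^e+1)\,\omega(\bm u^{[D]})\leq\omega(\bm u^{[D]})$, but even $\beta(\bm u^{[D]})\geq\omega(\bm u^{[D]})$ --- a comparison between the distance-$(d-1)$ and distance-$d$ quantities --- is not automatic because the $f_\ell$ alternate in sign, and the range $\omega(\bm u^{[D]})>0$ needs a genuine lower bound $\beta(\bm u^{[D]})\geq(q^e+1)\,\omega(\bm u^{[D]})$. I would attack this by explicit estimates for the coefficients of $P_{\bm u^{[D]}}(y)=\prod_k(y-q^k)^{\lambda_k}$ ($\lambda$ the partition conjugate to $\bm u^{[D]}$), or by an auxiliary induction tracking simultaneously the functionals for minimum distances $d,d-1,d-2,\dots$ and using the link $L(y\,P_{\bm u})=\omega_{d-1}(\bm u)$. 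A shortcut avoiding the chain altogether would be to work with the tail-sum form $\omega(\bm u)=(q^m-1)\sum_{s\geq d}q^{m(s-d)}g_s(\bm u)$, where $g_s(\bm u)=[y^{s-1}]\big(P_{\bm u}(y)/(y-1)\big)$ (legitimate as $(y-1)\mid P_{\bm u}$ for $\bm u\neq\bm 0$), and prove the coefficientwise inequality $g_s(\tilde{\bm u})\leq g_s(\bm u)$ for all $s\geq d$ and all admissible $\bm u$ of degree $\geq d+1$; this would give $\omega(\tilde{\bm u})\leq\omega(\bm u)$ immediately. Establishing such a comparison for truncated products $\prod_k(y-q^k)^{\lambda_k}/(y-1)$ looks like the crux of the conjecture either way.
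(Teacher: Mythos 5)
A point of calibration first: the statement you were asked to prove is presented in the paper as a \emph{conjecture}. The paper offers no proof of the implication $\omega(\tilde{\bm{u}})\geq 0\Rightarrow\omega(\bm{u})\geq 0$; it only records the closed form of $\omega(\tilde{\bm{u}})$, which is immediate from \eqref{e-fl} since $|\tilde{\bm{u}}|=d+1$, and appeals to computational evidence. So there is no author argument to compare yours against, and a complete proof from you would have been genuinely new. Your preparatory reductions are correct and potentially useful: the $q$-binomial identity does give $\sum_{\ell}f_\ell(\bm{u})y^\ell=\prod_{i=1}^t\prod_{k=0}^{u_i-1}(y-q^k)$, your functional $L$ reproduces $\omega$ from Corollary~\ref{exclMW}, the one-box recursion $\omega(\bm{u}+\bm{e}_a-\bm{e}_b)=\omega(\bm{u})+(q^{u_b-1}-q^{u_a})\,\omega(\bm{u}-\bm{e}_b)$ is right, the sorting step is legitimate (since $\omega$ depends only on the multiset of entries and, because $n_1\geq\cdots\geq n_t$, the sorted tuple stays in $\mI_{(n_1,\ldots,n_t)}$), and the tail-sum form $\omega(\bm{u})=(q^m-1)\sum_{s\geq d}q^{m(s-d)}\sum_{\ell\geq s}f_\ell(\bm{u})$ checks out.

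Nevertheless, the proposal is not a proof, and you acknowledge this yourself: the induction on the total degree $D$ closes only if the greedy tuple $\bm{u}^{[D]}$ of every degree satisfies $\omega(\bm{u}^{[D]})\geq 0$, and your route to that --- either the chain $\omega(\bm{u}^{[D]})\geq\omega(\tilde{\bm{u}})$ via $\omega(\bm{u}+\bm{e}_c)=\beta(\bm{u})-q^{e}\omega(\bm{u})$, or the coefficientwise inequality $g_s(\tilde{\bm{u}})\leq g_s(\bm{u})$ --- is left entirely open. This is not a technicality to be filled in later: because the $f_\ell(\bm{u})$ alternate in sign, neither $\beta(\bm{u})\geq\omega(\bm{u})$ nor the stronger $\beta(\bm{u})\geq(q^{e}+1)\,\omega(\bm{u})$ comes for free, and that comparison between the distance-$d$ and distance-$(d-1)$ functionals is exactly where the conjectural content sits. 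Note also that your target statement ($\tilde{\bm{u}}$ minimizes $\omega$ among all $\bm{u}\in\mI_{(n_1,\ldots,n_t)}$ with $|\bm{u}|\geq d+1$, unconditionally) is strictly stronger than the conjecture, which is only an implication under the hypothesis $\omega(\tilde{\bm{u}})\geq 0$; it is conceivable that the minimality fails in regimes where $\omega(\tilde{\bm{u}})<0$ while the conjecture still holds, so the reduction may even be aiming at a false intermediate claim. As it stands, the central step is missing and the conjecture remains unproven.
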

%%%%%%%%%%%%%%%%%

%%%%%%%%%%%%%%%%%%
\begin{example}\label{E-nnt2}
Let $\F=\F_2$ and $\Pi=\Pi_2(n\times n\mid n\times n)$. Then there exists no MSRD code of distance
$d=n+1$. Indeed, in this case $\tilde{\bm{u}}=(n,2)$ and $\omega(\tilde{\bm{u}})=1-2^n$.
\end{example}
%%%%%%%%%%%%%%

Let us return to Corollary~\ref{exclMW}.
Applying the same line of reasoning to the dual code, which is also MSRD by 
Theorem \ref{T-DualMSRD}, we obtain an additional non-existence criterion for MSRD codes.

%%%%%%%%%%%%%%%%%%%%
\begin{corollary} \label{exclMWdual}
Consider $\Pi=\Pi(n_1\times m\mid \cdots\mid n_t\times m)$ and let $N=\sum_{i=1}^t n_i$.
Suppose there exists an MSRD code $C \le \Pi$ of minimum distance $d$. 
Then 
\begin{equation} \label{dise2}
   \hat{\omega}(\bm{u}):= 
   \sum_{\ell=N-d+2}^{|\bm{u}|} (q^{m(\ell-N+d-1)}-1) f_\ell(\bm{u}) \ge 0
 \text{ for all }\bm{u}\in\mI_{(n_1,\ldots,n_t)}\setminus\{(0,\ldots,0)\}.
\end{equation}
\end{corollary}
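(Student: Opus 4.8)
The plan is to obtain this as an immediate consequence of Corollary~\ref{exclMW} applied to the \emph{dual} code, using the self-duality of the MSRD property from Theorem~\ref{T-DualMSRD}. First I would dispose of the degenerate case $C=\Pi$: by the Singleton Bound of Theorem~\ref{th:singl}, an MSRD code with minimum distance $d=1$ satisfies $|C|=|\Pi|$, so $C=\Pi$ and then $C^\perp=\{0\}$. In this case $N-d+2=N+1>|\bm{u}|$ for every $\bm{u}\in\mI_{(n_1,\ldots,n_t)}$, the summation range in~\eqref{dise2} is empty, and $\hat{\omega}(\bm{u})=0\geq 0$ holds trivially. (A code with $|C|=1$ has no minimum distance and is excluded by hypothesis.)

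So I may assume that $C$ is non-trivial, hence $d\geq 2$. By Theorem~\ref{T-DualMSRD}, the dual code $C^\perp\le\Pi$ is again MSRD, it is non-zero (indeed $\dim(C^\perp)=mN-\dim(C)$ with $0<\dim(C)<mN$), and it has sum-rank distance $d^\perp=N-d+2$. Since $C^\perp$ lies in the same ambient space $\Pi=\Pi(n_1\times m\mid\cdots\mid n_t\times m)$, the index set $\mI_{(n_1,\ldots,n_t)}$ is unchanged. I would then apply Corollary~\ref{exclMW} to $C^\perp$ in place of $C$, with $d^\perp$ in place of $d$, to conclude
\[
   \sum_{\ell=d^\perp}^{|\bm{u}|}\big(q^{m(\ell-d^\perp+1)}-1\big)\,f_\ell(\bm{u})\ \geq\ 0
   \qquad\text{for all }\ \bm{u}\in\mI_{(n_1,\ldots,n_t)}\setminus\{(0,\ldots,0)\}.
\]
Substituting $d^\perp=N-d+2$ turns the lower summation limit into $N-d+2$ and the exponent into $\ell-d^\perp+1=\ell-N+d-1$, so the left-hand side is precisely $\hat{\omega}(\bm{u})$ as defined in~\eqref{dise2}. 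This establishes the claim.

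The argument is a direct combination of two results already proved, so I do not expect any real obstacle. The only points needing a little care are the handling of the boundary case $C=\Pi$ (where the claim is vacuous) and the observation that dualizing keeps us inside the \emph{same} space $\Pi$ with the \emph{same} tuple $(n_1,\ldots,n_t)$, so that Corollary~\ref{exclMW} — stated for a fixed such tuple — applies verbatim to $C^\perp$ with the single change $d\rightsquigarrow N-d+2$.
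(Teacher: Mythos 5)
Your proposal is correct and is essentially the paper's own argument: the paper derives Corollary~\ref{exclMWdual} by applying Corollary~\ref{exclMW} to the dual code $C^\perp$, which is MSRD of distance $N-d+2$ by Theorem~\ref{T-DualMSRD}, exactly as you do. Your extra care with the trivial case $C=\Pi$ and the observation that dualization stays in the same ambient space are fine refinements the paper leaves implicit.
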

%%%%%%%%%%%%%%%%%%%
The same comment as in Conjecture~\ref{C-omegau} applies:  assuming $n_1\geq\ldots\geq n_t$  we believe that
it suffices to test $\hat{\omega}(\tilde{\bm{v}})\geq0$ for $\tilde{\bm{v}}=(n_1,\ldots,n_{s'},\delta',0,\ldots,0)$ where
$|\tilde{\bm v}|=N-d+3$.
It is not hard to find examples where one of Corollary~\ref{exclMW} and Corollary~\ref{exclMWdual}
excludes an MSRD code, but not the other one.

%%%%%%%%%%%%%%%%%%%%
\begin{remark}\label{R-SRDistr}
The above results lead immediately to explicit formulas for the sum-rank distribution 
and the rank-list distribution; see Definition~\ref{D-WeightEnum}. 
Indeed, $W_r(C)=\sum_{\bm{U} \in \mL,\,\rkL(\bm{U})=r}W_{\bm{U}}(C)$ for any $r\in\N_0$ and $W_{\bm{u}}(C)=\sum_{\bm{U} \in \mL,\,\dim(\bm{U})=\bm{u}}W_{\bm{U}}(C)$ for any $\bm{u}\in\N_0^t$.
Thus, Theorem~\ref{disMSRD} generalizes accordingly.
\end{remark}
%%%%%%%%%%%%%%%%%%%%%

\medskip
We now turn to the length~$t$ of an MSRD code.
Recall that Corollary~\ref{cor:tlarge} gives an upper bound 
for $t$ that applies to an arbitrary sum-rank metric code.
In this section we restrict to MSRD codes and obtain a much 
stronger bound for the case where $n_1=\ldots=n_t$ and $m_1=\ldots=m_t$.
The proof uses the projective sphere-packing bound of Theorem~\ref{th:prpack}.

%%%%%%%%%%%%%%%%%%
\begin{theorem}\label{coro:nec}
Suppose $n=n_1= \cdots =n_t$ and $m=m_1= \cdots =m_t$, and suppose there exists an MSRD code $C \le \Pi$
of minimum distance $d \ge 3$. Then
\[
   t \le  \Big\lfloor \frac{d-3}{n} \Big\rfloor + \Big\lfloor\frac{q^n - q^{n \lfloor (d-3)/n \rfloor +n-d+3} + (q-1)(q^m+1)}{q^n-1}\Big\rfloor
   \leq \Big\lfloor \frac{d-3}{n} \Big\rfloor + 1+ \Big\lfloor\frac{q^m(q-1)}{q^n-1}\Big\rfloor.
\]
In particular, we have the following cases.
\begin{alphalist}
\item If $n\mid(d-3)$, then 
\[
     t \le  \frac{d-3}{n} +\Big\lfloor\frac{(q-1)(q^m+1)}{q^n-1}\Big\rfloor.
\]
\item If $d \le n+2$, then 
\[
  t \le \Big\lfloor\frac{q^n-q^{n-d+3}+(q-1)(q^m+1)}{q^n-1}\Big\rfloor \le 1+ \Big\lfloor\frac{q^m(q-1)}{q^n-1}\Big\rfloor.
\]
If in addition $n=m$, then this implies $t\leq(q^{n+1}-1)/(q^n-1)\leq q+1$ and even $t\leq q$ if $n=m\geq2$.
\end{alphalist}
\end{theorem}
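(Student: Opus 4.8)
The plan is to feed the MSRD code $C$ into the Projective Sphere-Packing Bound (Theorem~\ref{th:prpack}) and to compare its conclusion with the cardinality forced by the Singleton Bound. Since $C$ has minimum distance $d\geq3$ it is non-trivial, so by Definition~\ref{D-MSRD} and Theorem~\ref{th:singl} (in the uniform case) $|C|=q^{m(N-d+1)}=q^{m(nt-d+1)}$, where $N=nt$. Write $d-3=\ell n+\delta$ with $\ell=\lfloor(d-3)/n\rfloor$ and $0\leq\delta\leq n-1$; since $d\leq N$ we have $\ell\leq t-1$. For $\ell\geq1$ Theorem~\ref{th:prpack} applies directly, and for $\ell=0$ (equivalently $d\leq n+2$) its proof applies verbatim: one deletes only $d-3<n$ rows from the first block and then invokes the ordinary Sphere-Packing Bound (Theorem~\ref{th:pack}) with distance~$3$. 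In all cases, with
\[
  \Pi'=\Pi_q\big((n-\delta)\times m\mid\underbrace{n\times m\mid\cdots\mid n\times m}_{t-\ell-1}\big),
\]
one obtains $q^{m(nt-d+1)}=|C|\leq|\Pi'|/V_1(\Pi')$.

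Next I would evaluate the right-hand side: $\dim(\Pi')=m\big(n(t-\ell)-\delta\big)$ and, from the definition of $V_1$, $V_1(\Pi')=1+\tfrac{q^m-1}{q-1}\big[(q^{n-\delta}-1)+(t-\ell-1)(q^n-1)\big]$. Substituting these and using $d-1=n\ell+\delta+2$ collapses the power-of-$q$ bookkeeping to $V_1(\Pi')\leq q^{2m}$. Writing $q^{2m}-1=(q^m-1)(q^m+1)$ and cancelling the positive factor $(q^m-1)/(q-1)$ leaves $(q^{n-\delta}-1)+(t-\ell-1)(q^n-1)\leq(q-1)(q^m+1)$, whence, dividing by the positive integer $q^n-1$ and passing to the floor (legitimate since $t-\ell-1\in\Z$),
\[
   t\ \leq\ \ell+\Big\lfloor\frac{q^n-q^{n-\delta}+(q-1)(q^m+1)}{q^n-1}\Big\rfloor .
\]
Inserting $\ell=\lfloor(d-3)/n\rfloor$ and $n-\delta=n\lfloor(d-3)/n\rfloor+n-d+3$ gives the first displayed bound of the theorem. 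The second (weaker) bound then follows from the elementary identity $\frac{q^n-q^{n-\delta}+(q-1)(q^m+1)}{q^n-1}=1+\frac{q^m(q-1)}{q^n-1}+\frac{q-q^{n-\delta}}{q^n-1}$ together with $q-q^{n-\delta}\leq0$, which holds because $\delta\leq n-1$ forces $n-\delta\geq1$.

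The two special cases are obtained by specialising $\delta$. If $n\mid(d-3)$ then $\delta=0$, so $\lfloor(d-3)/n\rfloor=(d-3)/n$ and the term $q^{n-\delta}=q^n$ cancels the leading $q^n$ in the numerator, giving case (a). If $d\leq n+2$ then $d-3\leq n-1$, so $\ell=0$, $\delta=d-3$ and $n-\delta=n-d+3$, which is case (b). For the final assertion one sets $m=n$ in the bound of (b): expanding $(q-1)(q^n+1)$ turns the numerator into $q^{n+1}-q^{n-d+3}+q-1$, which is $\leq q^{n+1}-1$ because $d\leq n+2$ forces $q^{n-d+3}\geq q$; hence $t\leq\lfloor(q^{n+1}-1)/(q^n-1)\rfloor\leq q+\frac{q-1}{q^n-1}\leq q+1$, and since $0<\frac{q-1}{q^n-1}<1$ precisely when $n\geq2$, in that subcase the floor equals $q$, yielding $t\leq q$.

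The computation is almost entirely bookkeeping; the one genuine subtlety is the degenerate range $d\leq n+2$ (i.e.\ $\ell=0$), which falls outside the literal hypotheses of Theorem~\ref{th:prpack} and so must be dispatched by re-running its short proof, puncturing only the first block and closing with Theorem~\ref{th:pack} for distance~$3$. Beyond that, one only has to keep track of the positivity of the denominators $q^m-1$ and $q^n-1$ (ensured by $q\geq2$ and $m\geq n\geq1$) so that all divisions preserve the inequalities, and of $\ell\leq t-1$ so that $\Pi'$ has at least one block.
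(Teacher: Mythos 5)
Your proof is correct and follows essentially the same route as the paper: apply the Projective Sphere-Packing Bound (Theorem~\ref{th:prpack}) to the MSRD code, use the Singleton-Bound cardinality $|C|=q^{m(nt-d+1)}$ to reduce to $V_1(\Pi')\leq q^{2m}$, and then do the same floor/rearrangement bookkeeping for the general bound and the special cases. Your extra care with the degenerate range $\ell=0$ (and implicitly $\delta=0$), where the literal hypotheses of Theorem~\ref{th:prpack} do not apply and one re-runs its short proof before invoking Theorem~\ref{th:pack} with distance~$3$, is a small refinement the paper glosses over, but it does not change the argument.
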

%%%%%%%%%%%%%%%%%%%%%%%

\begin{proof}
We apply the projective sphere-packing bound of Theorem~\ref{th:prpack}.
Write $d-3=\ell n+\delta$ with $0\leq\delta<n$.
Thus $\ell= \lfloor (d-3)/n \rfloor$ and $t'=t-\ell$.
Moreover, 
$n_1'=n-\delta,\,  m_1'=m$, and $(n_i',m_i')=(n,m)$ for $2 \le i \le t'$. 
Since~$C$ is MSRD, the Singleton Bound (Theorem~\ref{th:singl}) tells us that $|C| = q^{m(tn-d+1)}$.
Therefore Theorem \ref{th:prpack} reads as
\[
   q^{m(tn-d+1)}\Big(1+\frac{q^m-1}{q-1}\big((t'-1)(q^n-1)+q^{n-\delta}-1\big)\Big)\leq q^{m(tn-d+3)},
\]
which in turn is equivalent to
\[
  t\leq\frac{(q-1)(q^m+1)-q^{n-\delta}+1}{q^n-1}+\ell+1=\ell+\frac{(q-1)(q^m+1)+q^n-q^{n-\delta}}{q^n-1}
\]
Using $\ell= \lfloor (d-3)/n \rfloor$ and $\delta=d-3-\ell n$, we arrive at the first inequality.
The second one follows from $n-\delta\geq1$, hence $q^n-q^{n-\delta}\leq q^n-q$.
The special cases follow easily from the first inequality.
\end{proof}

%%%%%%%%%%%%%%%%
\begin{remark}
Suppose $n=n_1= \cdots =n_t$ and $m=m_1= \cdots =m_t$, as in Theorem~\ref{coro:nec}.
\begin{alphalist}
\item For $n=m=1$, the upper bound in Theorem \ref{coro:nec} reads as $t\leq q+d-2$, which is a well-known
upper bound on the length of an MDS code $C \le \F_q^t$ of minimum distance $d \ge 3$; see e.g.~\cite[Cor. 7.4.3(ii)]{pless}.
For $n=1<m$ this generalizes to $t\leq q^m+d-2$ by Part~(a) above.
As we will spell out in Example~\ref{E-CompMat}, it is a consequence of the MDS conjecture that actually $t\leq q^m+1$.
\item The MDS conjecture states that $N\leq q^m+1$ for an $\F_{q^m}$-linear MDS code in $\F_{q^m}^N$ of minimum distance~$d$
   (minus some exceptional cases); see ~\cite[p.~265]{pless}. 
   For an MSRD code this results in $t\leq (q^m+1)/n$.  
   It is straightforward to show that the first bound in Theorem~\ref{coro:nec} yields a tighter bound for~$t$  
   if $d\leq q^m\big(1-n(q-1)/(q^n-1)\big)+4-n$.
   In the special case where $n=m$ and $d\leq n+2$, Theorem~\ref{coro:nec}(b) even provides the much better estimate $t\leq q+1$.
\end{alphalist}
\end{remark}
%%%%%%%%%%%%%%%%%%%%

For relatively small length, MSRD codes do exist for large classes of parameters. 
In \cite[Thm.~4]{MP18} a construction of $\F_{q^m}$-linear  MSRD codes in $\F_{q^m}^N$  is given for the case where 
$m:=m_1=\ldots=m_t$ and where the length satisfies $t\leq q-1$ (and thus $N\leq q^m+1$, in accordance with the MDS conjecture).
Example~\ref{E-nnt2} shows that for $t=q$ MSRD codes may not exist.
On the other hand, in Examples~\ref{E-MSRD8} and~\ref{E-MSRD6} we have seen examples of MSRD codes with $t>q$; 
even for the case where $m_1=\cdots=m_t$.
In the next section we present a few more classes of MSRD codes.

We close this section with a discussion of specific shortening and puncturing of MSRD codes.
A few details are needed in order to specify where to puncture or shorten.
We set up the following notation.

Let~$\Pi$ be as in~\eqref{e-nimi} and~\eqref{e-Pi} and let $C\leq\Pi$ be a linear MSRD code of sum-rank distance~$d$.
As in Theorem~\ref{th:singl} let
\begin{equation}\label{e-jdelta}
   \text{$j\in[t]$ and $\delta\in\{0,\ldots, n_j-1\}$ be the unique integers such that $d-1=\sum_{i=1}^{j-1}n_i+\delta$.}
\end{equation}
Set
\begin{equation}\label{e-niprime}
     n_i'=\begin{cases} n_i,&\text{if }i\neq j,\\ n_j-\delta,&\text{if }i=j,\end{cases}
\end{equation}
and define the space $\Pi'=\Pi(n_j'\times m_j\mid\cdots\mid n_t'\times m_t)$.
Thanks to the Singleton Bound we have $\dim(C)=\sum_{i=j}^tn_i'm_i$ and
the map
\[
   \tau: C\longrightarrow\Pi',\quad
   (X_1,\ldots,X_t)\longmapsto (\hat{X}_j,X_{j+1},\ldots,X_t)
   \quad 
   \left\{\begin{array}{l}\text{where $\hat{X}_j\in\F^{(n_j-\delta)\times m_j}$ consists}\\ \text{of the first $n_j-\delta$ rows of~$X_j$.}
   \end{array}\right.
\]
is an isomorphism.
Indeed,~$\tau$ is injective by~\eqref{e-jdelta} and thus bijective since $\dim(C)=\dim(\Pi')$.
As a consequence,~$C$ has a basis of the form 
\[
     \{B_{a,b}^{(j)}\mid a=1,\ldots,n_j-\delta,\,b=1,\ldots,m_j\}\cup\{B_{a,b}^{(i)}\mid i=j+1,\ldots,t,\,a=1,\ldots,n_i,\,b=1,\ldots,m_i\}
\]
where 
\begin{equation}\label{e-BasisMSRD}
\left.\begin{array}{ccll}
      B_{a,b}^{(j)}&\!\!=\!\!&\Big(\ \ast,\ \ldots,\ast,\SmallMatTwo{\!E_{a,b}^{(j)}\!}{\ast},\quad 0,\  \ldots\ldots,\,0,\ 0\ \Big)
      &\text{for }a\in[n_j'],\,b\in[m_j],\\[2ex]
      B_{a,b}^{(j+1)}&\!\!=\!\!&\Big(\ \ast,\ \ldots,\ast, \SmallMatTwo{\ 0\ }{\ast},\ E_{a,b}^{(j+1)},\ \ldots,\,0,\ 0\ \Big)
      &\text{for }a\in [n_{j+1}'],\,b\in[m_{j+1}],\\
        \vdots          &  &\qquad \vdots \\
     B_{a,b}^{(t)}&\!\!=\!\!&\Big(\underbrace{\ast,\ \ldots,\ \ast}_{j-1 \text{ blocks}}, \SmallMatTwo{\  0\ }{\ast},\quad 0,\ \ldots\ldots,\,0, E_{a,b}^{(t)}\Big)&\text{for }a\in[n_t'],\,b\in[m_t],
\end{array}\ \right\}
\end{equation}
where $E_{a,b}^{(i)}$ denote the standard basis matrices in $\F^{n_i'\times m_i}$ and $\ast$ denote suitable matrices in the respective matrix space.
We define index sets
\begin{align*}
    \tail(C)&=\{(i,a,b)\mid i=j,\ldots,t,\,a\in[n_i'],\,b\in[m_i]\},\\[.6ex]
    \head(C)&=\{(i,a,b)\mid i\in[j-1],\,a\in[n_i'],\,b\in[m_i]\}\cup\{(j,a,b)\mid a=n_j'+1,\ldots,n_j,\,b\in[m_j]\}.
\end{align*}

Note that $|\tail(C)|=\dim(C)$.
Furthermore, the above means that $\tail(C)$ forms an information set for~$C$ in the usual sense.
Note that the MSRD codes in Examples~\ref{E-MSRD8} and~\ref{E-MSRD6} are given in form of a basis as in~\eqref{e-BasisMSRD}; in both cases $\delta=0$.

The above provides us with many options to shorten an MSRD code without compromising the MSRD property.
It should be noted that the shortening in~(a) below is a special case of shortening as introduced in 
Definition~\ref{D-Short} -- up to isomorphism (which consists of the actual deletion of a row).
On the other hand, shortening on a column as considered in~(b) is not an instance of Definition~\ref{D-Short}.

%%%%%%%%%%%%%%
\begin{theorem}[\textbf{Shortening an MSRD Code on a Row or Column}]\label{T-MSRDShorten}
Suppose there exists a linear MSRD code $C\leq\Pi$ of sum-rank distance~$d$ and with data as
in~\eqref{e-jdelta} --~\eqref{e-BasisMSRD}. 
\begin{alphalist}
\item Choose $s\in\{j,\ldots,t\}$ and set
\[
     \tilde{n}_i=\begin{cases} n_i,&\text{if }i\neq s,\\ n_s-1,&\text{if }i=s.\end{cases}
\]
Then there exists an MSRD code in $\tilde{\Pi}:=\Pi(\tilde{n}_1\times m_1\mid\cdots\mid \tilde{n}_t\times m_t)$ with sum-rank distance~$d$.
It is obtained by shortening~$C$ on a row with indices in the tail of~$C$.
\item Choose $s\in\{j+1,\ldots,t\}$ and set
\[
     \tilde{m}_i=\begin{cases} m_i,&\text{if }i\neq s,\\ m_s-1,&\text{if }i=s.\end{cases}
\]
Then there exists an MSRD code in $\tilde{\Pi}:=\Pi(n_1\times\tilde{m}_1\mid\cdots\mid n_t\times \tilde{m}_t)$ with sum-rank distance~$d$.
It is obtained by shortening~$C$ on a column with indices in the tail of~$C$, but not in the $j$-th block.
\end{alphalist}
\end{theorem}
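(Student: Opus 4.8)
The plan is to realize each of the two desired codes as an explicit subcode of $C$ obtained by forcing a single row, resp.\ column, of the $s$-th block to vanish, followed by the deletion of that now-zero row, resp.\ column --- an operation that changes neither cardinality nor sum-rank distance, since a zero row/column contributes nothing to the rank of a block. The basis~\eqref{e-BasisMSRD} is precisely what makes this subcode tractable. Throughout, set $K:=\max\{\sum_{i=1}^t m_i u_i\mid (u_1,\ldots,u_t)\in\N_0^t,\,0\le u_i\le n_i,\,\sum_{i=1}^t u_i=d-1\}$. The projection argument proving Theorem~\ref{th:singl} shows that, for an ambient space of the given shape, the Singleton bound at distance $d$ equals $q^{\,(\text{ambient dimension})-K}$ with $K$ recomputed from that space's block sizes (this half of the argument uses no ordering of the $m_i$); by Lemma~\ref{L-Maximize}, for $\Pi$ itself this maximum is attained at $\bm{u}^\ast:=(n_1,\ldots,n_{j-1},\delta,0,\ldots,0)$, so $\dim(C)=\dim(\Pi)-K$ because $C$ is MSRD. (If the column sizes of the new ambient space fail to be non-increasing, reorder its blocks to the standard form; this is an isometry and preserves $K$ and the MSRD property.)

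For part~(a), fix $s\in\{j,\ldots,t\}$ and a row index $a_0\in[n_s']$ (possible since $n_s'\ge1$), so that $(s,a_0,b)\in\tail(C)$ for all $b\in[m_s]$. The $\F_q$-linear map $C\to\F_q^{1\times m_s}$ returning the $a_0$-th row of $X_s$ is surjective: by~\eqref{e-BasisMSRD} it sends $B_{a_0,b}^{(s)}$ to the $b$-th unit row vector. Hence its kernel $C_0$ has $\dim(C_0)=\dim(C)-m_s$, every element of $C_0$ has zero $a_0$-th row in block $s$, and deleting that row produces a code $\tilde{C}\le\tilde{\Pi}$ with $|\tilde{C}|=q^{\dim(C)-m_s}$ and $\srk(\tilde{C})\ge\srk(C_0)\ge\srk(C)=d$. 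It remains to verify $\dim(\tilde{C})=\dim(\tilde{\Pi})-\tilde{K}=(\dim(\Pi)-m_s)-\tilde{K}$, i.e.\ $\tilde{K}=K$, where $\tilde{K}$ is the analogue of $K$ for $\tilde{\Pi}$. Here $\tilde{K}\le K$ since the feasible set only shrinks ($\tilde{n}_i\le n_i$), while $\tilde{K}\ge K$ follows by evaluating $\sum_i m_i u_i$ at $\bm{u}^\ast$, which is still feasible for $\tilde{K}$: indeed $u^\ast_s=0$ when $s>j$, and $u^\ast_j=\delta\le n_j-1=\tilde{n}_j$ when $s=j$.

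For part~(b), fix $s\in\{j+1,\ldots,t\}$ and any column index $b_0\in[m_s]$; since $n_s'=n_s$ we have $(s,a,b_0)\in\tail(C)$ for all $a\in[n_s]$. The map $C\to\F_q^{n_s\times1}$ returning the $b_0$-th column of $X_s$ sends $B_{a,b_0}^{(s)}$ to the $a$-th unit column vector and so is surjective --- this is exactly where the restriction $s\ne j$ is used, since for the $j$-th block the corresponding basis elements carry an uncontrolled lower block. Its kernel $C_0$ satisfies $\dim(C_0)=\dim(C)-n_s$, and deleting the zero $b_0$-th column of block $s$ gives $\tilde{C}\le\tilde{\Pi}$ with $|\tilde{C}|=q^{\dim(C)-n_s}$ and $\srk(\tilde{C})\ge d$. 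Again one needs $\tilde{K}=K$, now with the $n_i$ unchanged and $\tilde{m}_i\le m_i$: the inequality $\tilde{K}\le K$ is clear, and $\tilde{K}\ge K$ holds because $\bm{u}^\ast$ is supported on $\{1,\ldots,j\}$, where $\tilde{m}_i=m_i$ (as $s>j$), so $\sum_i\tilde{m}_i u^\ast_i=K$.

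In either case $|\tilde{C}|$ equals the Singleton-bound value of $\tilde{\Pi}$ at distance $d$. Since also $\srk(\tilde{C})\ge d$ and the Singleton value is strictly decreasing in the distance (raising the distance by $1$ strictly increases the optimum $K$), we conclude $\srk(\tilde{C})=d$; thus $\tilde{C}$ is MSRD of sum-rank distance $d$, as claimed. The main obstacle, and the heart of the proof, is the identity $\tilde{K}=K$, i.e.\ the fact that the dimension lost in the shortening matches exactly the drop of the Singleton bound. This is precisely where the hypotheses $s\ge j$ in~(a) and $s>j$ in~(b) are indispensable: shortening on a coordinate outside $\tail(C)$ would delete a coordinate used by the Singleton-optimal projection $\bm{u}^\ast$, and the dimension count would break. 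It is also the reason for arguing via the combinatorial quantity $K$ rather than through the closed form of Theorem~\ref{th:singl}.
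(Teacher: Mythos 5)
Your proof is correct, and the code you construct is literally the same as the paper's: the kernel of the ``return row $a_0$ (resp.\ column $b_0$) of block $s$'' map coincides with the span of the basis \eqref{e-BasisMSRD} minus the vectors $B^{(s)}_{a_0,b}$ (resp.\ $B^{(s)}_{a,b_0}$), which is exactly how the paper shortens; so the construction and the role of the tail are identical. Where you diverge is the MSRD verification: the paper simply plugs the dimension of the shortened code into the closed-form Singleton bound of Theorem~\ref{th:singl} (using $d-1=\sum_{i=1}^{j-1}\tilde{n}_i+\delta$, resp.\ $\dim(C')=\sum_{i=j}^t n_i\tilde{m}_i-m_j\delta$), whereas you work with the order-free optimum $K$ from Lemma~\ref{L-Maximize} and prove $\tilde{K}=K$ by a shrinking-feasible-set argument together with feasibility of $\bm{u}^\ast=(n_1,\ldots,n_{j-1},\delta,0,\ldots,0)$. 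Your route is slightly longer but buys two things the paper leaves implicit: it handles cleanly the possibility in part~(b) that the new column sizes $\tilde{m}_i$ are no longer non-increasing (your reordering remark), and your strict-monotonicity argument makes explicit why $\srk(\tilde{C})$ is exactly $d$ rather than merely at least $d$. Both arguments share the same degenerate blind spots (e.g.\ $\tilde{n}_s=0$, or $d=N$ where the shortened code is zero), so no complaint there.
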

%%%%%%%%%%%%%%%%

\begin{proof}
(a) Choose any $a\in[n_s']$, where $n_i'$ is as in~\eqref{e-niprime}.
Consider the basis $B$ from~\eqref{e-BasisMSRD} and set
\[
    B'=B\setminus\{B_{a,b}^{(s)}\mid b=1,\ldots,m_s\}.
\]
Let $C'=\subspace{B'}$.
Then clearly $\srk(C')\geq\srk(C)$ and $\dim(C')=\dim(C)-m_s=\sum_{i=j}^t \tilde{n}_im_i-m_j\delta$.
Since all matrix tuples in~$C'$ have a zero row in the $s$-th block at position~$a$, deleting that row
results in a code~$\tilde{C}$ in~$\tilde{\Pi}$ of the same sum-rank distance and dimension as $C'$.
Since $d-1=\sum_{i=1}^{j-1}\tilde{n}_i+\delta$, the code $\tilde{C}$ is MSRD.
\\
(b) Choose any $b\in[m_s]$ and set
$B'=B\setminus\{B_{a,b}^{(s)}\mid a=1,\ldots,n_s\}$.
Let $C'=\subspace{B'}$.
Then clearly $\srk(C')\geq\srk(C)$ and $\dim(C')=\dim(C)-n_s=\sum_{i=j}^t n_i\tilde{m}_i-m_j\delta$.
Deleting column~$b$ in the $s$-th block results in  
an MSRD code~$\tilde{C}$ in~$\tilde{\Pi}$ of the same sum-rank distance and dimension as $C'$.
\end{proof}

We now turn to puncturing. In this case only puncturing of rows leads to an MSRD code in general.

%%%%%%%%%%%%%%
\begin{theorem}[\textbf{Puncturing an MSRD Code on a Row}]\label{T-MSRDPuncture}
Suppose there exists a linear MSRD code $C\leq\Pi$ of sum-rank distance~$d$ and with data as
in~\eqref{e-jdelta} --~\eqref{e-BasisMSRD}.
If $\delta>0$ choose $s\in[j]$ and if $\delta=0$ choose $s\in[j-1]$. Set
\[
     \tilde{n}_i=\begin{cases} n_i,&\text{if }i\neq s,\\ n_s-1,&\text{if }i=s.\end{cases}
\]
Then there exists an MSRD code in $\tilde{\Pi}:=\Pi(\tilde{n}_1\times m_1\mid\cdots\mid \tilde{n}_t\times m_t)$ with sum-rank distance~$d-1$.
It is obtained by puncturing~$C$ on a row with indices in the head of~$C$.
\end{theorem}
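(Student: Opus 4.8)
The plan is to verify the three things that together make a code MSRD of a prescribed distance: that the punctured code $\tilde C\le\tilde\Pi$ has $\dim\tilde C=\dim C$, that it has sum-rank distance $d-1$, and that $q^{\dim C}$ is exactly the value of the Singleton Bound of Theorem~\ref{th:singl} for the pair $(\tilde\Pi,d-1)$. Concretely, let $\pi\colon\Pi\to\tilde\Pi$ be the puncturing map deleting a fixed row of the $s$-th block whose index lies in $\head(C)$; such a row exists, since for $s\le j-1$ every row of block $s$ belongs to $\head(C)$, while for $s=j$ (which forces $\delta>0$, hence $n_j\ge\delta+1\ge2$) the rows $n_j'+1,\dots,n_j$ do. Set $\tilde C:=\pi(C)$. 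Note that the hypotheses force $d\ge2$: if $\delta=0$ a valid choice $s\in[j-1]$ forces $j\ge2$, so $d-1=\sum_{i<j}n_i\ge1$; and if $\delta>0$ then $d-1\ge\delta\ge1$.

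First I would dispose of the two easy points. For injectivity of $\pi|_C$: if $Y\in C$ satisfies $\pi(Y)=0$, then $Y$ is supported entirely on the single deleted row, so $\srk(Y)=\rk(Y_s)\le1<d$ and hence $Y=0$; thus $\dim\tilde C=\dim C$, and since $C$ is MSRD this common dimension equals the exponent in the Singleton Bound of Theorem~\ref{th:singl} for $(\Pi,d)$. For the distance: every nonzero $X\in\tilde C$ equals $\pi(Y)$ for some nonzero $Y\in C$, and deleting one row of one block lowers the rank of that block by at most one, so $\srk(X)\ge\srk(Y)-1\ge d-1$; hence $\srk(\tilde C)\ge d-1$.

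The one genuine computation is that the Singleton exponent of $(\tilde\Pi,d-1)$ again equals the Singleton exponent of $(\Pi,d)$, and this is exactly where the restriction to a \emph{head} row enters. By the proof of Theorem~\ref{th:singl} the Singleton Bound for a pair $(\Pi',d')$ reads $|C|\le q^{\mathrm{Sing}(\Pi',d')}$ with $\mathrm{Sing}(\Pi',d')=\dim\Pi'-K(\Pi',d'-1)$, where $K(\Pi',r):=\max\{\sum_i m_iu_i\mid 0\le u_i\le n_i,\ \sum_iu_i=r\}$; by Lemma~\ref{L-Maximize} the value $K(\Pi,d-1)$ is attained at $(n_1,\dots,n_{j-1},\delta,0,\dots,0)$, i.e.\ by taking all rows of blocks $1,\dots,j-1$ together with $\delta$ rows of block $j$ — which, with the basis arranged as in~\eqref{e-BasisMSRD}, are precisely the rows indexed by $\head(C)$. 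Deleting one such row therefore decreases $\dim\Pi$ by $m_s$ and decreases $K(\Pi,d-1)$ by the same $m_s$ (the truncated configuration remaining optimal for $d-2$ rows in $\tilde\Pi$ by Lemma~\ref{L-Maximize}; alternatively one checks directly that $\tilde\jmath=j$ with $\tilde\delta=\delta$ when $s\le j-1$ and $\tilde\delta=\delta-1$ when $s=j$). Hence $\mathrm{Sing}(\tilde\Pi,d-1)=\dim\tilde\Pi-K(\tilde\Pi,d-2)=\dim\Pi-K(\Pi,d-1)=\dim C$. Finally, writing $d':=\srk(\tilde C)$ and applying Theorem~\ref{th:singl} to $\tilde C$ (valid since $|\tilde C|=q^{\dim C}\ge q\ge2$), we get $q^{\dim C}=|\tilde C|\le q^{\mathrm{Sing}(\tilde\Pi,d')}$; since $\mathrm{Sing}(\tilde\Pi,\cdot)$ is strictly decreasing in the distance (each increment costs at least $\min_i m_i\ge1$) and $\mathrm{Sing}(\tilde\Pi,d-1)=\dim C$ while $d'\ge d-1$, this forces $d'=d-1$ and $|\tilde C|=q^{\mathrm{Sing}(\tilde\Pi,d-1)}$, i.e.\ $\tilde C$ is MSRD of sum-rank distance $d-1$.

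The main obstacle is really just the bookkeeping in the third paragraph: identifying which rows are ``counted'' by the maximizer of Lemma~\ref{L-Maximize} and checking the cancellation of $m_s$ in $\dim\Pi-K$, including the mildly degenerate case $n_s=1$, $s<j$, in which block $s$ simply disappears from $\tilde\Pi$ (harmlessly, as a $0\times m$ block is trivial). It is also worth recording, though not needed for the statement, that puncturing a \emph{tail} row $s>j$ would instead shift the exponent by $m_j-m_s$, which is positive whenever $m_j>m_s$; this explains why puncturing on a non-head row need not preserve the MSRD property, and why only row puncturing on the head is asserted.
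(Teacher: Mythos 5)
Your proposal is correct and follows essentially the same route as the paper's proof: puncture a head row, use $d\ge 2$ for injectivity of the projection, note the distance drops by at most one, and verify via the same case split ($s<j$ giving $\tilde\jmath=j,\ \tilde\delta=\delta$; $s=j$ giving $\tilde\delta=\delta-1$) that the dimension matches the Singleton exponent for $(\tilde\Pi,d-1)$. You merely make explicit two points the paper leaves implicit (that the hypotheses force $d\ge2$, and that strict monotonicity of the Singleton exponent pins the distance down to exactly $d-1$), which is fine.
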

%%%%%%%%%%%%%%%%

\begin{proof}
Consider the map
$\pi:C\longrightarrow \Pi,\ (X_1,\ldots,X_t)\longmapsto(X_1,\ldots,\hat{X}_s,\ldots,X_t)$,
where $\hat{X}_s\in\F^{(n_s-1)\times m_s}$ is obtained from $X_s$ be removing the last row.
Note that if $s=j$, then $\delta>0$ and therefore this last row belongs to the head of~$C$.
Since $\srk(C)\geq2$, the map~$\pi$ is injective. 
Set $C'=\pi(C)$. Then $\srk(C')\geq d-1$.
If $s<j$, we have
\[
  d-2=\sum_{i=1}^{j-1}n_i+\delta-1=\sum_{i=1}^{j-1}\tilde{n}_i+\delta\ \text{ and }\
  \dim(C')=\dim(C)=\sum_{i=j}^t\tilde{n}_im_i-m_j\delta,
\]
and thus~$C'$ is MSRD. 
For $s=j$ we have $\delta>0$ and with $\delta'=\delta-1$ we obtain
\[
  d-2=\sum_{i=1}^{j-1}n_i+\delta'\ \text{ and }\
  \dim(C')=\dim(C)=\sum_{i=j}^tn_im_i-m_j\delta=\sum_{i=j}^t\tilde{n}_im_i-\delta'm_j,
\]
and again~$C'$ is MSRD.
\end{proof}

%%%%%%%%%%%%%%
\begin{remark}\label{R-ShortSamem}
Let us briefly consider the situation where $m_1=\ldots=m_t=:m$. 
In this case we may consider~\eqref{e-jdelta} for any ordering of the blocks. 
This implies that shortening on a row as in Theorem~\ref{T-MSRDShorten} and puncturing as in 
Theorem~\ref{T-MSRDPuncture} can be applied to any row in~$\Pi$ and always leads to an MSRD code.
This agrees with \cite[Cor.~7]{MP19}, where shortening and puncturing (called restriction) are 
defined more generally, but only applied to $\F_{q^m}$-linear codes.
\end{remark}
%%%%%%%%%%%%%%%

%%%%%%%%%%%%%%%%%%%%
\section{Constructions of Optimal Codes}\label{S-Constr}

In this section we concentrate on constructions of optimal codes. The main focus is 
on MSRD codes, which we construct in various ad-hoc ways for several parameter sets.
Note that while a general construction of (even $\F_{q^m}$-linear) MSRD codes exists for $t \le q-1$
if $m:=m_1=...=m_t$, see~\cite[Thm.~4]{MP18}, some of our constructions
provide examples of significantly longer MSRD codes if we allow the blocks to have variable number of columns.
Our constructions show that for sum-rank distance equal to $2$ or $N$, MSRD codes exist for all parameters while this is not true for other distances.
Thereafter we will establish the existence of MSRD codes for certain other sum-rank distances in the case that there are 
sufficiently many $1\times1$-blocks involved.

We conclude the section with presenting a lifting construction that produces a sum-rank metric code
by combining a Hamming-metric and a rank-metric code.
As an application, we obtain a family of codes that meet the Induced Plotkin Bound of Theorem~\ref{th:induced}.

%%%%%%%%%%%%%%%%%%%%%%
\subsection{MSRD Codes}

We start with the following simple example showing that MSRD codes of length $t\leq q^m+1$ in 
$\Pi_q(1\times m\mid\cdots\mid1\times m)$ exist for all distances $d\leq t$. 
They are, up to an isomorphism $\F_{q^m}\cong\F_q^{1\times m}$, simply the $\F_q$-linear MDS codes in $\F_{q^m}^t$.

%%%%%%%%%%%%%%%
\begin{const}\label{E-CompMat}
Fix a field~$\F_q$ and consider $\Pi=\Pi_q(1\times m\mid\cdots\mid1\times m)$ with~$t$ blocks. 
Let $d\leq t$. If $t\leq q^m+1$, there exists an ($\F_{q^m}$-linear) MDS code $C\leq\F_{q^m}^t$ of Hamming distance~$d$.
Thus $\dim_{\F_q}(C)=m(t-d+1)$.
Let $\phi:\F_{q^m}\longrightarrow\F_q^{1\times m}$ be an $\F_q$-isomorphism and extend it entrywise to an isomorphism
$\phi:\F_{q^m}^t\longrightarrow\Pi$.
Then the sum-rank metric code $\tilde{C}=\phi(C)\leq\Pi$ is MSRD with sum-rank distance~$d$.
Since conversely every linear MSRD code in~$\Pi$ induces an $\F_q$-linear MDS code in $\F_{q^m}^t$, the MDS conjecture (for nonlinear codes) implies that such an MSRD code with distance not in $\{1,2,t\}$ exists iff $t\leq q^m+1$.
\end{const}
%%%%%%%%%%%%%%%%%

The next example shows that MSRD codes with  sum-rank distance $d=2$ exist for all parameters.
We proceed in two steps.
The construction makes use of MRD codes in the rank metric; see~\cite{del1,gabidulin,roth} among many others.
Recall that for $n\leq m$ an $[n\times m;d]_q$-MRD code is a linear rank-metric code in $\F_q^{n\times m}$ of rank 
distance~$d$ and dimension $m(n-d+1)$. If possible we skip the subscript~$q$.

%%%%%%%%%%%%%%%%%%%
\begin{const}\label{E-Dual}
\begin{alphalist}
\item 
Let $\Pi=\Pi(n_1\times m\mid\cdots\mid n_t\times m)$. Without loss of generality let $n_1\geq\ldots\geq n_t$. 
We will construct MSRD codes with distance $d=2$.
Let $\hat{C}$ be an $[n_1\times m;2]$-MRD code, thus $\dim(\hat{C})=m(n_1-1)$.
Consider the map  $\phi_i:\F_q^{n_i\times m}\longrightarrow\F_q^{n_1\times m}$ defined by adding $n_1-n_i$ zero
rows to the given matrix. Then $\phi_i$ is rank-preserving and linear.
Define the sum-rank metric code
\[
C=\Big\{\Big(A-\sum_{i=2}^{t}\phi_i(A_i),A_2,\ldots,A_t\Big)\,\Big|\, A_i\in\F_q^{n_i\times m},A\in\hat{C}\Big\}.
\]
Then $\dim(C)=m\sum_{i=2}^{t}n_i+ m(n_1-1)=m(N-1)$, which is the Singleton  bound (Theorem~\ref{th:singl}) for distance~$2$.
It remains to see that $\srk(A-\sum_{i=2}^{t}\phi_i(A_i),A_2,\ldots,A_{t},)\geq 2$ for all nonzero elements in~$C$.
This is clear if either $\srk(A_2,\ldots,A_{t})\geq 2$ or if $(A_2,\ldots,A_{t})=0$.
Thus let $\srk(A_2,\ldots,A_{t})=1$. That means $\rk(A_\ell)=1$ for some~$\ell$ and $A_i=0$ for $i\neq \ell$.
In this case $A-\sum_{i=2}^{t}\phi_i(A_i)=A-\phi_\ell(A_\ell)$, and this matrix is not zero because $\rk(A)\neq 1$.
All of this shows that $C$ is an MSRD code with distance~$2$.
The dual code is given by $C^\perp=\{(B,\psi_2(B),\ldots,\psi_{t}(B))\mid B\in\hat{C}^\perp\}$, where 
$\psi_i:\F^{n_1\times m}\longrightarrow\F^{n_i\times m}$ is the projection onto the first $n_i$ rows.
This follows from the simple identity $\subspace{\psi_i(B),A_i}=\subspace{B,\phi_i(A_i)}$ along with $\dim(C^\perp)=m=\dim\hat{C}^\perp$.
Using that~$\hat{C}^\perp$ has rank distance~$n_1$, we conclude that~$C^\perp$ has sum-rank distance~$N$ and thus is MSRD, 
in agreement with Theorem~\ref{T-DualMSRD}.
If $n_i=n_1$ for all~$i$, then $C^{\perp}$ is just the repetition code $\{(B,\ldots,B)\mid B\in\hat{C}^\perp\}$.
\item The previous construction can be generalized to arbitrary ambient spaces. Let~$\Pi$ be as in\eqref{e-nimi}--~\eqref{e-Pi}.
Define $\hat{\Pi}=\Pi(n_1\times m_1\mid\cdots\mid n_t\times m_1)$ and identify~$\Pi$ with the subspace of~$\hat{\Pi}$ consisting of the
matrix tuples for which the last $m_1-m_i$ columns in the $i$th block are zero for all~$i\in[t]$.
By~(a) there exists an MSRD code $\hat{C}\leq\hat{\Pi}$ with distance $d=2$.
Hence $\dim(\hat{C})=m_1(N-1)$.
Note that the projection
$\rho:\hat{C}\longrightarrow\Pi_q\big((n_1-1)\times m_1\mid n_2\times m_1\mid \cdots\mid n_t\times m_1\big)$,
obtained by deleting the first row in the first block is an isomorphism because $\srk(\hat{C})>1$. 
Consider now $C:=\hat{C}\cap\Pi$. Clearly $\srk(C)\geq2$.
In order to determine its dimension, consider the map
\[
     \tau:\hat{C}\longrightarrow\Pi_q\big(n_2\times(m_1-m_2)\mid \cdots\mid n_t\times(m_1-m_t)\big),
\]
where we project onto the last $m_1-m_i$ columns of each matrix.
Then clearly $\ker(\tau)=C$ and the surjectivity of~$\rho$ implies the surjectivity of~$\tau$.     
This implies $\dim(C)=\sum_{i=2}^{t}m_in_i+m_1(n_1-1)$, which is the Singleton Bound in Theorem~\ref{th:singl} for $d=2$.
\end{alphalist}
\end{const}
%%%%%%%%%%%%%%%%%%%

We now turn to the other extreme and consider distance $d=N$ and $d=N-1$.
As we will show, MSRD codes with sum-rank distance $d=N$ exist for all parameters, while for $d=N-1$ this is not the case.

%%%%%%%%%%%%%%%%%%
\begin{const}\label{E-SimpleMSRD}
\begin{alphalist}
\item Consider $\Pi$ as in \eqref{e-nimi}--\eqref{e-Pi}. We show that there exists an MSRD code with sum-rank
         distance~$N=\sum_{i=1}^t n_i$.
         For each $i$ let $C_i$ be an $[n_i\times m_i;n_i]_q$-MRD code, thus $\dim(C_i)=m_i$. Let
         $A_{i,1},\ldots,A_{i,m_i}$ be a basis of $C_i$.
         Using that $m_t\leq m_i$ for all~$i$, we may construct the code
         \[
             C=\big\langle(A_{1,1},\ldots,A_{t,1}),\ldots, (A_{1,m_t},\ldots,A_{t,m_t})\big\rangle.
          \]
          Then clearly every nonzero element of~$C$ has sum-rank weight~$N$ and $\dim(C)=m_t$, which is the Singleton Bound for $d=N$;
          see Theorem~\ref{th:singl}.
          Thus,~$C$ is an MSRD code.
\item In some cases the previous construction can be generalized to distance $d=N-1$.
         For instance, suppose $n_t\geq2$ and $2m_t\leq m_{t-1}$. 
         For $d=N-1$ the Singleton Bound in Theorem~\ref{th:singl} reads as $\dim(C)\leq 2m_t$.
         Choose now $[n_i\times m_i;n_i]$-MRD codes~$C_i$ for $i=1,\ldots,t-1$ and an $[n_t\times m_t;n_t-1]$-MRD code~$C_t$.
         Then $\dim(C_i)=m_i$ for $i=1,\ldots,t-1$ and $\dim(C_t)=2m_t$.
         Now we can mimic the construction from~(a) to obtain an MSRD code with distance $N-1$.
         The construction generalizes further to distance $d=N-\alpha$, but needs the stronger assumptions $n_t\geq\alpha+1$ 
         and $(\alpha+1)m_t\leq m_{t-1}$ 
\item Let us consider again $d=N-1$ in~(b).  Then MSRD codes do not exist for all ambient spaces~$\Pi$.
         For instance, for $m:=m_1=\ldots=m_t$ and $n_1\geq\ldots\geq n_t$ one can apply Corollary~\ref{exclMW} to rule 
         out, for many cases, the existence of an MSRD code with sum-rank distance $d=N-1$. 
         With the aid of~\eqref{e-fl} one obtains 
         $\omega(\tilde{u})=q^{2m}-q^m\sum_{i=1}^t(q^{n_i}-1)/(q-1)$ for $\tilde{u}=(n_1,\ldots,n_t)$.
         Thus, for instance, $\omega(\tilde{u})< q^{2m}-q^{(n_1-1)m}$, and hence $\omega(\tilde{u})<0$ whenever $n_1\geq3$.
\end{alphalist}
\end{const}
%%%%%%%%%%%%%%%%%%%

The above examples can be extended as follows.

%%%%%%%%%%%%%%%%%%%
\begin{const}\label{E-MSRD111}
Let $\F=\F_q$ and consider 
\[
     \Pi=\Pi_q\big(n_1\times m_1\mid\cdots\mid n_{t_1}\times m_{t_1}\mid \underbrace{1\times 1\mid\cdots\mid1\times1}_{t_2\text{ blocks}}\big),
\]
with the usual assumption $m_1\geq\ldots\geq m_{t_1}$ and $n_j\leq m_j$. 
Assume furthermore that $t_2\geq m_{t_1}$.
For each $j\in[t_1]$ let $C_j$ be an $[n_j\times m_j;n_j]_q$-MRD code and let $A_{j,1},\ldots,A_{j,m_j}$ be a basis of~$C_j$.
Furthermore, let $G=(g_{ij})\in\F^{m_{t_1}\times t_2}$ be the generator matrix of  an MDS code.
Define the sum-rank metric code $C\leq\Pi$ generated by
\[
    \big(A_{1,i},\ldots,A_{t_1,i},(g_{i1}),\ldots,(g_{it_2})\big)\ \text{ for } i=1,\ldots,m_{t_1}.
\]
Clearly $\dim(C)=m_{t_1}$ and $\srk(C)=\sum_{j=1}^{t_1} n_j+t_2-m_{t_1}+1$.
Thus Theorem~\ref{th:singl} implies that~$C$ is an MSRD.
If the MDS code is non-trivial, this construction requires $t_2\leq q+1$ (up to some exceptions in the MDS conjecture), 
while the trivial MDS codes of length~$t_2=m_{t_1}$ or $t_2=m_{t_1}+1$ work for every field.
For instance, the $2$-dimensional code
\[
  C=\big\langle \big((1,0),\ldots,(1,0),\,(1),\,(1),\,(0)\big),\  \big((0,1),\ldots,(0,1),\,(0),\,(1),\,(1)\big)\big\rangle
\]
in $\Pi(1\times2\mid\cdots\mid1\times2\mid 1\times1\mid1\times 1\mid1\times1)$
is MSRD with sum-rank distance~$t_1+2$ and exists over any finite field~$\F$.
\end{const}
%%%%%%%%%%%%%%%%%%%%

%%%%%%%%%%%%%%%%%%
\begin{const}\label{E-Combine}
We can also combine the constructions of Examples~\ref{E-CompMat} and~\ref{E-MSRD111}:
Consider the data from Example~\ref{E-MSRD111} and assume $m_{t_1}$ can be factored as $m_{t_1}=\hat{m}a$, where $a\leq t_2$.
Choose an MDS code in $\F_{q^{\hat{m}}}^{t_2}$ of Hamming distance $t_2-a+1$. 
Hence its $\F_q$-dimension is~$m_{t_1}$.
Then combining Examples~\ref{E-CompMat} and~\ref{E-MSRD111} we can construct a code 
$C\leq\bigoplus_{j=1}^{t_1}\F^{n_j\times m_j}\oplus\bigoplus_{j=1}^{t_2}\F^{1\times\hat{m}}$ of dimension~$m_{t_1}$ and sum-rank distance $\sum_{j=1}^{t_1} n_j+t_2-a+1$.
Again, the Singleton Bound shows that the code is MSRD.
\end{const}
%%%%%%%%%%%%%%%%%

In the special case where $\F^{n_j\times m_j}=\F^{1\times m}$ for all~$j$, Example~\ref{E-MSRD111} can be extended further into a different direction.

%%%%%%%%%%%%%%%%%%
\begin{const}\label{E-MSRD111Ext}
Consider Example~\ref{E-MSRD111}. 
Set $s=t_1$ and let $n_j=1$ for all $j\in[s]$, $m:=m_1=\ldots=m_s$, and 
$t_2=m$.
Hence $\srk(C)=s+1$.
We may choose the same $[1\times m;1]_q$-MRD codes~$\hat{C}$ for the components.
Let $A_1,\ldots,A_m$ be a basis of~$\hat{C}$ (e.g., the standard basis vectors).
Choosing $G=I_m$ as a generator matrix of the trivial $[m,m]$-MDS code, the code~$C$ from Example~\ref{E-MSRD111} reads as 
\[
  C=\Big\langle \big(\underbrace{A_j,\ldots,A_j}_{s\text{ entries}},\,(0),\ldots,(0),(1),(0),\ldots,(0)\big)\,\Big|\, j\in[m]\Big\rangle 
  \leq\bigoplus_{j=1}^{s}\F^{1\times m}\oplus\bigoplus_{j=1}^m\F^{1\times1},
\]
and where $(1)$ is at the $j$th position.
Assume now $s\leq m+\binom{m}{2}+1$.
We will now extend~$C$ to an MSRD code $\tilde{C}\leq\Pi':=\bigoplus_{j=1}^{s+1}\F^{1\times m}\oplus\bigoplus_{j=1}^{m+1}\F^{1\times 1}$
of sum-rank distance $s+2$. It thus has dimension~$m+1$.
If $s>m$, choose $s-m-1$ distinct 2-subsets $\{\alpha_k,\beta_k\}$ of~$[m]$.
This is possible thanks to $s\leq m+\binom{m}{2}+1$.
Set $B_k=A_{\alpha_k}+A_{\beta_k}$ for $k\in[s-m-1]$.
Then $B_1,\ldots,B_{s-m-1}$ are distinct.
Define
\begin{align*}
  Z_j&= \big(A_{1},A_{j},\ldots\ldots\ldots\ldots\ldots\ldots\ldots\ldots\ldots,A_{j}\,,\,(0),\ldots,(0),(1),(0),\ldots,(0),(0)\big)\ \text{ for }j\in[m],\\
  Z_{m+1}&= \big(A_{2},A_{1},\ldots,A_{m},B_1,B_1,B_2,\ldots,B_{s-m-1},(0),\ldots,(0),(0),(0),\ldots,(0),(1)\big),
\end{align*}
where again~$(1)$ is at the $j$th position. 
Note that in ~$Z_{m+1}$ the block $(A_{1},\ldots,A_{m},B_1,B_1,B_2,\ldots,$ $B_{s-m-1})$ is of length~$s$. 
If $s\leq m$, only the matrices $A_1,\ldots,A_{s}$ occur.
If $s>m+1$, then $B_1$ occurs twice whereas all other $B_k$ occur once.
Let $\tilde{C}=\langle Z_1,\ldots,Z_{m+1}\rangle$. Then $\tilde{C}\leq\Pi'$ and $\dim(\tilde{C})=m+1$.
We show that~$\tilde{C}$ has sum-rank distance $s+2$. 
This is clearly the case for the subcode $\langle Z_1,\ldots,Z_{m}\rangle$.
Furthermore $\srk(Z_{m+2})=s+2$.
Hence it remains to consider an arbitrary linear combination $X=Z_{m+1}+\sum_{j=1}^m x_j Z_j$, where $x_j\in\F$ are not all zero.
It is of the form 
\[
       X=(X_1,\ldots,X_{s+1},(x_1),\ldots,(x_m),(1)).
\]
In the following we restrict ourselves to the case where $s\geq m$. 
The other case is similar, and actually much simpler.
Note first that $X_1\neq0$.
For the other components we consider the following cases.
\\
\underline{Case 1:} $(B_1,B_1)$ is eliminated by the linear combination $\sum_{j=1}^m x_j Z_j$.
In this case $x_j\neq0$ for at least two values of~$j$. 
Furthermore, none of the entries $A_1,\ldots,A_m$ of $Z_{m+1}$ is cancelled and likewise no $B_k,\,k>1,$ is cancelled. 
All of this leads to $\srk(X)\geq1+m+(s-m-2)+3=s+2$.
\\
\underline{Case 2:} There exists $k>1$ such that $B_k$ is cancelled. 
Again, $x_j\neq0$ for at least two values of~$j$ and no $A_1,\ldots,A_m$ and likewise no $B_j,\,j\neq k$, is cancelled.
Thus  $\srk(X)\geq1+m+(s-m-1)+3=s+3$.
\\
\underline{Case 3:} There exists $j$ such that $A_j$ is cancelled. 
Then no $B_k$ is cancelled and moreover $x_j\neq0$. Thus $\srk(X)\geq1+(m-1)+(s-m)+2=s+2$.
\\
The case where no cancellation occurs is obvious.
We conclude that~$\tilde{C}$ has sum-rank distance~$s+2$ and therefore is MSRD due to the Singleton Bound.
The code in  Example~\ref{E-MSRD8} is a special instance of the above construction. 
\end{const}
%%%%%%%%%%%%%%%%%%%

\subsection{Lifting Construction}

Finally, we present a construction of sum-rank metric codes that combines a Hamming-metric and a rank-metric code
with each other. We call this the \textit{lifting construction} because the Hamming-metric code serves
to \textit{lift} the rank-metric code to a sum-rank metric code.

%%%%%%%%%%%%%%
\begin{theorem}[\textbf{Lifting Construction}]\label{th:lifting}
Consider~$\Pi$ as in~\eqref{e-Pi}, \eqref{e-nimi}.
For $i\in[t]$ choose integers  $\delta_i\in[n_i]$ and $[n_i\times m_i;\delta_i]_q$-MRD codes~$\mC_i$.
Let $h$ be an integer with $1 \le h \le \min\{m_i(n_i-\delta_i+1) \mid i\in[t]\}$ and let
$H \le \F_{q^{h}}^t$ be a nonzero $\F_q$-linear code of minimum Hamming distance $\Delta$.
Fix $\F_q$-linear injections $\varphi_i :\F_{q^h} \to \mC_i$, $i \in [t]$, and let
\[
   C:=\{(\varphi_1(\alpha_1),...,\varphi_t(\alpha_t)) \mid (\alpha_1,...,\alpha_t) \in H\}.
\]
Then~$C$ is a linear sum-rank metric code in~$\Pi$ with 
\[
          \dim(C)=\dim_{\F_q}(H)\ \text{ and }\ \srk(C)\geq \min\Big\{\sum_{i \in I} \delta_i\,\Big|\,  I\subseteq [t],\, |I|=\Delta\Big\}.
\]
\end{theorem}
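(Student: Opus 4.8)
The plan is to verify the three assertions of the theorem in order: linearity, the dimension formula, and the lower bound on the sum-rank distance. The first two are essentially bookkeeping, while the distance bound is where the interplay between the Hamming code $H$ and the MRD codes $\mC_i$ enters, and this is the step I expect to require the most care.

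First I would check that $C$ is $\F_q$-linear. Since each $\varphi_i:\F_{q^h}\to\mC_i$ is $\F_q$-linear and $H\le\F_{q^h}^t$ is $\F_q$-linear, the map $\Phi:H\to\Pi$, $(\alpha_1,\ldots,\alpha_t)\mapsto(\varphi_1(\alpha_1),\ldots,\varphi_t(\alpha_t))$ is $\F_q$-linear, so $C=\Phi(H)$ is an $\F_q$-subspace of $\Pi$, i.e.\ a sum-rank metric code by Definition~\ref{D-SRMC}. For the dimension, I would show $\Phi$ is injective: if $\Phi(\alpha)=0$ then $\varphi_i(\alpha_i)=0$ for every $i$, and since each $\varphi_i$ is injective this forces $\alpha_i=0$ for all $i$, hence $\alpha=0$. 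Therefore $\dim_{\F_q}(C)=\dim_{\F_q}(H)$.

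For the distance bound, take a nonzero codeword $X=\Phi(\alpha)$ with $\alpha=(\alpha_1,\ldots,\alpha_t)\in H\setminus\{0\}$, and let $I=\{i\in[t]\mid\alpha_i\neq 0\}=\supp(\alpha)$. Since $H$ has minimum Hamming distance $\Delta$ and $\alpha\neq0$, we have $|I|\ge\Delta$. For each $i\in I$, the element $\varphi_i(\alpha_i)$ is a \emph{nonzero} codeword of the MRD code $\mC_i$ (injectivity of $\varphi_i$ again), hence $\rk(\varphi_i(\alpha_i))\ge\delta_i$, the minimum rank distance of $\mC_i$. For $i\notin I$ the $i$-th block of $X$ is zero. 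Summing, $\srk(X)=\sum_{i\in I}\rk(\varphi_i(\alpha_i))\ge\sum_{i\in I}\delta_i$. To conclude, I would note that this last quantity is at least the minimum of $\sum_{i\in J}\delta_i$ over all $J\subseteq[t]$ with $|J|=\Delta$: indeed $|I|\ge\Delta$ and all $\delta_i\ge 1$, so one may discard $|I|-\Delta$ indices from $I$ to obtain a set $J$ of size exactly $\Delta$ with $\sum_{i\in J}\delta_i\le\sum_{i\in I}\delta_i$. Taking the minimum over all such $J$ gives $\srk(X)\ge\min\{\sum_{i\in J}\delta_i\mid J\subseteq[t],\,|J|=\Delta\}$, and since $X$ was an arbitrary nonzero codeword this is the desired bound on $\srk(C)$.

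The only genuine subtlety — the ``main obstacle'' such as it is — is the final combinatorial reduction from a support set $I$ of size possibly larger than $\Delta$ to one of size exactly $\Delta$; this is where the hypothesis $\delta_i\ge 1$ (automatic since $\delta_i\in[n_i]$) is used, and it is worth stating explicitly that if $|I|>\Delta$ then deleting indices only decreases the sum. Everything else is a direct consequence of the MRD property of the $\mC_i$ and the minimum-distance property of $H$, together with the injectivity of the maps $\varphi_i$.
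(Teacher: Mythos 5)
Your proposal is correct and follows essentially the same route as the paper: the same injectivity argument for the dimension, the same use of the MRD property on the support of the underlying Hamming codeword, with the final reduction from $|I|\ge\Delta$ to a subset of size exactly $\Delta$ (which the paper leaves implicit) spelled out explicitly. No issues.
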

%%%%%%%%%%%%%%%

\begin{proof}
The map $H \longrightarrow C$ given by $(\alpha_1,...,\alpha_t) \longmapsto
(\varphi(\alpha_1),...,\varphi(\alpha_t))$ is $\F_q$-linear and injective, which shows that $C$
and $H$ have the same dimension over $\F_q$.
Now fix a nonzero codeword $(\varphi(\alpha_1),...,\varphi(\alpha_t)) \in C$. Since $\varphi_1,...,\varphi_t$ are injections,
we have either $\rk(\varphi_i(\alpha_i)) \ge \delta_i$ or $\alpha_i=0$.
Since $C$ has minimum Hamming distance $\Delta$, this gives the desired lower bound 
on~$\srk(C)$.
\end{proof}

Note that the construction of Theorem~\ref{th:lifting} generalizes Example~\ref{E-CompMat}. 
Indeed, let $m_i=m$ and $n_i=\delta_i=1$ for all $i\in[t]$, and hence $C_i=\F_q^m$ for $i\in[t]$.
Then we may choose $h=m$ so that~$H$ is an MDS code in $\F_{q^{m}}^t$ with Hamming distance~$\Delta$. 
Then the above construction returns the MSRD code from Example~\ref{E-CompMat}. 

%%%%%%%%%%%%%%%%%%%
\begin{remark}
We can use Theorem~\ref{th:lifting} to construct sum-rank metric codes meeting the Induced Plotkin Bound of \cref{th:induced}.
Let $m_i=m$ and $n_i=\delta_i=n$ for all $i\in[t]$.
Let $H \le \F_{q^m}^t$ be an $\F_q$-linear code of minimum distance $\Delta$
meeting the Plotkin bound for the Hamming metric, i.e.,
\[
     q^m \Delta > (q^m-1)t \qquad \mbox{and} \qquad |H| = \Big\lfloor\frac{q^m \Delta}{q^m\Delta - (q^m-1)t}\Big\rfloor.
\]
Choosing $[n\times m;n]$-MRD codes~$\mC_i$ and injections~$\varphi_i$ as in Theorem~\ref{th:lifting}
we obtain a code $C\leq\Pi$ of cardinality~$|H|$ and sum-rank distance $n\Delta$.
Since $q^m \Delta > (q^m-1)t$, we have $q^m n\Delta > (q^m-1)nt=(q^m-1)N$.
Therefore the Induced Plotkin Bound from \cref{th:induced} reads as
\[
   |C| \le \Big\lfloor\frac{q^m n\Delta}{q^m n \Delta - (q^m-1)N}\Big\rfloor
    = \Big\lfloor\frac{q^m \Delta}{q^m\Delta - (q^m-1)t}\Big\rfloor
\]
and is met with equality.
We illustrate this more explicitly in the following example, where we lift the simplex code.
\end{remark}
%%%%%%%%%%%%%%%%%%%

%%%%%%%%%%%%%%%%%%%%
\begin{example}
Let $q=2$, $m=4$ and $n=3$. Let $r=3$ and $t=(16^3-1)/15 = 273$. 
Let $H \le \F_{16}^{273}$ be the simplex code of dimension $r$ over $\F_{16}$. 
The code is linear over $\F_{16}$, but we only need its linearity over $\F_2$. 
The minimum distance of $H$ is $\Delta=16^2=256$. 
Therefore \cref{th:lifting} gives us a code
$C$ with sum-rank distance at least 
$256 \cdot 3=768$ (in fact, exactly $768$). 
The code meets the Induced Plotkin Bound with equality, as 
$|C|=|H| = 16^3=\textnormal{4,096}=q^m\Delta/(q^m\Delta-(q^m-1)t)$.
\end{example}
%%%%%%%%%%%%%%%%%%%%

%%%%%%%%%%%%%%%%%%%%

\bigskip

\bigskip

\section*{Summary} 

We have investigated the fundamental properties of sum-rank metric codes with respect to 
bounds, their asymptotic versions, duality, and existence/optimality.
In contrast to the previous literature, we consider codes with possibly variable block sizes, which therefore do not necessarily have a representation as a linear code over an extension field.

We have seen that the theory of sum-rank metric codes touches aspects of both rank-metric and Hamming-metric codes.
Our approach has led us to various new bounds for sum-rank metric codes. 
In our asymptotic analysis of these bounds, we have considered the case that the number of blocks~$t$ goes to infinity and have observed
a behaviour that reverts to the case of all block lengths $m_i$ being equal.

With respect to duality considerations, we have shown that MacWilliams identities hold for the support and rank-list distributions, 
but not for the sum-rank distribution of a code: two codes with the same sum-rank distribution may have dual codes whose sum-rank distributions are different.

A substantial part of the paper is devoted to MSRD codes, the sum-rank analogues of MDS and MRD codes. 
We have shown however, unlike their Hamming and rank-metric counterparts, that the MSRD property is not an invariant of 
duality, except in the case that all blocks have the same number of columns.
In that case,  we have derived an existence criterion for MSRD codes via duality considerations.
An interesting question on MSRD codes relates to the maximum number of blocks $t$ that such a code can have.
In the instance that all block sizes are equal, we have derived an upper bound on this value.

Finally, we have provided some constructions of optimal codes for certain sets of parameters, illustrating the possible behaviours of these objects. 

There are many  aspects of sum-rank metric codes that are currently open for further research. 
The most obvious of these are general constructions of families of optimal and extremal codes.

\bigskip 

\bigskip

%%%%%%%%%%%%%%%%%%%%
\bibliographystyle{abbrv}
\bibliography{sr}
\end{document}